\tikzset{
	every picture/.style = {
		thick,
		>=stealth',
		node distance = 1.5em and 3em,
	}
	,
	cross line/.style = {
		preaction = {
			draw=white,
			-,
			line width=4pt
		}
	}
	,
	state/.style = {
		rectangle,
		rounded corners = 5pt,
		font = \footnotesize,
		draw,
		minimum width = 1em,
		minimum height = 1em
	}
	, 
	label-state/.style = {
		sloped,
		font = \scriptsize,
		label distance = -2pt
	}
	, 
	label-edge/.style = {
		font = \scriptsize,    
		label distance = -2pt
	}
}
\newtheorem{theorem}{Theorem}
\newtheorem{proposition}[theorem]{Proposition}
\newtheorem{lemma}[theorem]{Lemma}
\newtheorem{definition}[theorem]{Definition}
\theoremstyle{remark}
 \newtheorem{remark}{Remark}
\newcommand{\zerodisplayskips}{%
  \setlength{\abovedisplayskip}{5pt}%
  \setlength{\belowdisplayskip}{5pt}%
  \setlength{\abovedisplayshortskip}{5pt}%
  \setlength{\belowdisplayshortskip}{5pt}}
\appto{\normalsize}{\zerodisplayskips}
\appto{\small}{\zerodisplayskips}
\appto{\footnotesize}{\zerodisplayskips}
\Crefname{definition}{Def.}{Defs.}
\Crefname{figure}{Fig.}{Figs.}
\Crefname{theorem}{Thm.}{Thms.}
\Crefname{section}{Sec.}{Secs.}
\Crefname{appendix}{Appx.}{Apps.}
\Crefname{remark}{Remark}{Remarks}
\newcommand{\midd}{\mathrel{\,\mid\,}}
\newcommand{\<}{\langle}
\renewcommand{\>}{\rangle}
\newcommand{\tup}[1]{\langle #1 \rangle}
\DeclarePairedDelimiterX\setof[2]{\{}{\}}{\,#1 \;\delimsize\vert\; #2\,}
\newcommand{\hxpd}{\text{\rm HXPath}_{\rm D}}
\newcommand{\rr}[1]{{\color{WildStrawberry}#1}}
\newcommand{\cc}[1]{{\color{ProcessBlue}#1}}
\newcommand{\pp}[1]{{\color{Mulberry}#1}}
\newcommand{\taag}[3]{@_{#2}\tup{#1}{#3}}
\newcommand{\cmprif}[1]{
    \ifthenelse{\equal{#1}{=}}
    {
      =_{\compc}
    }
    {
      \ifthenelse{\equal{#1}{\neq}}
      {
        \neq_{\compc}
      }
      {
        \mathrel{#1}
      }
    }
}
\newcommand{\tagg}[3]{\tup{{#2}{:} \cmprif{#1} {#3}{:}}}
\newcommand{\hilbert}{\mathbf{H}}
\newcommand{\gentzen}{\mathbf{G}}
\newcommand{\liff}{\leftrightarrow}
\newcommand{\prop}{\mathsf{Prop}}
\newcommand{\nom}{\mathsf{Nom}}
\newcommand{\mo}{\mathsf{Mod}}
\newcommand{\cmp}{\mathsf{Cmp}}
\newcommandx{\dow}[1][1={a}]{\xspace\mathsf{#1}\xspace}
\newcommand{\dowa}{\xspace\mathsf{a}\xspace}
\newcommand{\compc}{\xspace\mathsf{c}\xspace}
\newcommand{\cmpr}{\mathrel{\blacktriangle}}
\newcommand{\cmpd}{\mathrel{\blacktriangledown}}
\newcommandx{\amodel}[1][1={M}]{\mathfrak{#1}}
\newcommand{\nodes}{\mathrm{N}}
\newcommand{\node}{n}
\newcommand{\R}{\mathrm{R}}
\newcommand{\deq}{\approx}
\newcommand{\dneq}{\not\approx}
\newcommand{\V}{\mathrm{V}}
\renewcommand{\iff}{\mathit{iff}}
\newcommand{\Rho}{\mathrm{P}}
\newcommand{\inv}[1]{#1^{\text{-}1}}
\tikzset{
  modal/.style={>=stealth',shorten >=1pt,shorten <=1pt,auto},
  world/.style={circle,draw,minimum size=1.3cm},
  point/.style={circle,draw,fill=black,inner sep=0.2mm},
  reflexive/.style={->,in=120,out=60,loop,looseness=#1},
  reflexive/.default={5},
  reflexive point/.style={->,in=135,out=45,loop,looseness=#1},
  reflexive point/.default={25},
  label-edge/.style = {font = \footnotesize},
}
\tikzset{
  reflexive above/.style={->,loop,in=120,out=60,looseness=#1},
  reflexive above/.default={7},
  reflexive below/.style={->,loop,in=240,out=300,looseness=#1},
  reflexive below/.default={7},
  reflexive left/.style={->,loop,in=150,out=210,looseness=#1},
  reflexive left/.default={7},
  reflexive right/.style={->,loop,in=30,out=330,looseness=#1},
  reflexive right/.default={7}
}
\newcommandx{\aderivation}[1][1={D}]{\mathscr{#1}}
\DeclareMathOperator{\size}{size}
\newcommand{\arule}{\mathrm{P}}
\newcommand{\thescalefactor}{.83}
\title{Sequent Calculi for Data-Aware Modal Logics}
\author{Carlos Areces
	\institute{Universidad
	Nacional de C{\'o}rdoba} 
	\institute{and CONICET, Argentina}
	\email{carlos.areces@unc.edu.ar}
\and
Valentin Cassano
	\institute{Universidad
	Nacional de R\'io Cuarto}
    \institute{and CONICET, Argentina}
	\email{valentin@dc.exa.unrc.edu.ar}
\and	
Danae Dutto
	\institute{Universidad Nacional de
	C{\'o}rdoba}
	\institute{and CONICET, Argentina} 
	\email{ddutto@dc.exa.unrc.edu.ar}
\and
Raul Fervari
	\institute{Universidad
	Nacional de C{\'o}rdoba} 
	\institute{and CONICET, Argentina} 
	\email{rfervari@unc.edu.ar}
}
\begin{document}
\maketitle

\begin{abstract}
	This document serves as a companion to the paper of the same title, wherein we introduce a Gentzen-style sequent calculus for $\hxpd$.
	It provides full technical details and proofs from the main paper.
	As such, it is intended as a reference for readers seeking a deeper understanding of the formal results, including soundness, completeness, invertibility, and cut elimination for the calculus.
\end{abstract}

\section{Hybrid XPath with Data}
\label{sec:xpath2}

We assume $\prop$, $\nom$, $\mo$, and $\cmp$ are pairwise disjoint sets of symbols for propositions, nominals, modalities, and data comparisons.
We assume $\mo$ and $\cmp$ are finite, and $\prop$ and $\nom$ are countably infinite.

\begin{definition}\label[definition]{def:lang}
    The language of $\hxpd$ has \emph{path} expressions (denoted $\alpha$, $\beta$, \dots) and \emph{node} expressions (denoted $\varphi$, $\psi$, \dots), mutually defined by the grammar:
    \begin{align*}
        \alpha, \beta \coloneqq\ &
        \dowa \midd
        i{:} \midd
        \varphi? \midd
        \alpha\beta
        \\
        \varphi, \psi \coloneqq\ &
            p \midd
            i \midd
            \bot \midd
            \varphi \to \psi \midd
            @_i\varphi \midd
            \tup{\dowa}\varphi \midd
            \<\alpha =_{\compc} \beta\> \midd \<\alpha \neq_{\compc} \beta\>,
    \end{align*}
    where 
    $p \in \prop$,
    $i \in \nom$,
    $\dowa \in \mo$, 
    and 
    $\compc\in\cmp$.
    For path expressions, we use $\epsilon \coloneqq \top?$ to indicate the \emph{empty}~path.
    For node expressions, we use standard abbreviations:
        $\top \coloneqq \bot \to \bot$,
        $\lnot \varphi \coloneqq \varphi \to \bot$,
        $\varphi \lor \psi \coloneqq \lnot\varphi \to \psi$,
        $\varphi \land \psi \coloneqq \lnot(\varphi \to \lnot\psi)$, and
        $\varphi \liff \psi \coloneqq (\varphi \to \psi) \land (\psi \to \varphi)$.
    We also abbreviate:
        $\tup{j{:}}\varphi \coloneqq @_j\varphi$,
        $\tup{\psi?}\varphi \coloneqq \psi \land \varphi$,
        $\tup{\alpha\beta}\varphi \coloneqq \tup{\alpha}\tup{\beta}\varphi$, and
        $[\alpha]\varphi \coloneqq \lnot\tup{\alpha}\lnot\varphi$.
    Finally, we abbreviate
        $[\alpha \cmpr \beta] \coloneqq \lnot\tup{\alpha \cmpd \beta}$.
    In the last abbreviation, we use $\cmpr$ when there is no need to distinguish $=_{\compc}$ and $\neq_{\compc}$, and use ${\cmpd}$ to indicate ${\neq_{\compc}}$ if ${\cmpr}$ is ${=_{\compc}}$, and to indicate ${=_{\compc}}$ if ${\cmpr}$ is ${\neq_{\compc}}$.
\end{definition}

Path and node expressions are interpreted over hybrid data models.

\begin{definition}\label[definition]{def:models}
    A \emph{(hybrid data) model} is a tuple 
$\amodel = \tup{  
        \nodes,
        \{\R_{\dowa}\}_{\dowa \in \mo},
        \{\deq_{\compc}\}_{\compc \in \cmp}, 
        g,
        \V}$,
    where
        $\nodes$ is a non-empty set of nodes;
        each $\R_{\dowa}$ is a (binary) \emph{accessibility relation} on $\nodes$;
        each $\deq_{\compc}$ is an equivalence relation on $\nodes$, called a \emph{comparison};
        $g : \nom \to \nodes$ is a \emph{nominal assignment}; and 
        $\V : \prop \to 2^{\nodes}$ is a \emph{valuation}. 
\end{definition}

The satisfiability relation for path and node expressions is as follows.

\begin{definition}\label[definition]{def:semantics} 
    Let $\amodel=\tup{\nodes, \{\R_{\dowa}\}_{\dowa \in \mo},\{\deq_{\compc}\}_{\compc \in \cmp}, g,\V}$ be a model, and let ${\{\node, \node'\} \subseteq \nodes}$. 
    The satisfiability relation $\Vdash$ is given by the following conditions: 
        \[
        \begin{array}{l @{~~}c@{~~} l}
            \amodel,\node,\node' \Vdash \dowa
            & \iff & \node \R_{\dowa} \node' \\
    
            \amodel,\node,\node' \Vdash i{:}
            & \iff &  g(i) = \node' \\
    
            \amodel,\node,\node' \Vdash \varphi?
            & \iff &  \node = \node' \mbox{ and } \amodel,\node \Vdash \varphi \\
    
            \amodel,\node,\node' \Vdash \alpha\beta
            & \iff & \mbox{exists} ~ \node'' \in \nodes \mbox{ s.t. }
                        \amodel,\node,\node'' \Vdash \alpha \mbox{ and } \amodel,\node'',\node'\Vdash \beta \\
                          	
            \amodel,\node \Vdash p
            & \iff & \node \in \V(p) \\
            
            \amodel,\node \Vdash i
            & \iff & g(i) = \node  \\

            \amodel,\node \Vdash \bot
            &\multicolumn{2}{@{}l}{\text{never}} \\
            
            \amodel,\node \Vdash \varphi\to\psi
            & \iff & \amodel,\node \Vdash \varphi \mbox{ implies } \amodel,\node \Vdash \psi \\

            \amodel,\node \Vdash @_i \varphi
            & \iff & \amodel, g(i)\Vdash \varphi\\


            \amodel,\node \Vdash \tup{\dowa}\varphi 
            & \iff & \mbox{exists $\node'\in\nodes$ s.t.\ } \amodel,\node,\node'\Vdash \dowa \mbox{ and } \amodel,\node'\Vdash \varphi \\ 

            \amodel,\node \Vdash \<\alpha=_{\compc}\beta\>
            & \iff & \mbox{exists $\node',\node''\in \nodes$ s.t.\ } 
                        \amodel,\node,\node' \Vdash \alpha, \ 
                        \amodel,\node,\node'' \Vdash \beta \mbox{ and }
                        \node' \deq_{\compc} \node''
        \\
            \amodel,\node\Vdash\<\alpha\neq_{\compc}\beta\>
            & \iff & \mbox{exists $\node',\node''\in \nodes$ s.t.\ } 
                        \amodel,\node,\node' \Vdash \alpha, \ 
                        \amodel,\node,\node'' \Vdash \beta \mbox{ and } \node' \dneq_{\compc} \node''.\\
        \end{array}
        \]
    Let $\Psi$ be a set of node expressions, we use $\amodel,\node\Vdash\Psi$ to indicate $\amodel,\node\Vdash\psi$ for all $\psi\in\Psi$. 
    We say $\Psi$ is \emph{satisfiable} iff there exists $\amodel,\node$ s.t.\ $\amodel,\node \Vdash \Psi$.
    We call a node expression $\varphi$ a \emph{consequence} of $\Psi$, written $\Psi \vDash \varphi$, iff $\Psi\cup\{\lnot \varphi\}$ is unsatisfiable.
    If $\Psi$ is the empty set, we write $\vDash \varphi$ and call $\varphi$ a \emph{tautology}.
\end{definition}

\Cref{prop:abbrv}, immediate from \Cref{def:semantics}, establishes that the abbreviations have their intended meaning.

\begin{proposition}\label[proposition]{prop:abbrv}
$\amodel,\node \Vdash \<\alpha\>\varphi$ iff $\mbox{exists $\node' \in \nodes$ s.t.},~
            \amodel,\node,\node' \Vdash\alpha$ and  
            $\amodel,\node' \Vdash \varphi$.
    Moreover, $\amodel,\node \Vdash [\alpha=_{\compc}\beta]$ iff for all $\node',\node''\in \nodes$,  
                $\amodel,\node,\node' \Vdash \alpha$,  
                $\amodel,\node,\node'' \Vdash \beta$ implies 
                $\node' \deq_{\compc} \node''$.
    Finally,
    $\amodel,\node\Vdash[\alpha\neq_{\compc}\beta]$ iff for all $\node',\node''\in \nodes$, 
                $\amodel,\node,\node' \Vdash \alpha$,  
                $\amodel,\node,\node'' \Vdash \beta$ implies 
                $\node' \dneq_{\compc} \node''$.
\end{proposition}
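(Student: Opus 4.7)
The plan is to prove the first claim by structural induction on the path expression $\alpha$, unfolding the abbreviations from \Cref{def:lang} case by case, and then obtain the other two claims as de~Morgan dualities.

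For the base case $\alpha = \dowa$, the expression $\tup{\dowa}\varphi$ is primitive, so the equivalence is exactly the clause for $\tup{\dowa}\varphi$ in \Cref{def:semantics}. For $\alpha = i{:}$, unfolding $\tup{i{:}}\varphi \coloneqq @_i\varphi$ and using the clause for $@_i$ gives $\amodel,\node \Vdash @_i\varphi$ iff $\amodel,g(i)\Vdash\varphi$; since the clause for $i{:}$ forces the unique witness $\node' = g(i)$, this matches the required existential form. For $\alpha = \psi?$, unfolding $\tup{\psi?}\varphi \coloneqq \psi \land \varphi$ and using that the clause for $\psi?$ requires both $\node = \node'$ and $\amodel,\node \Vdash \psi$, the existential collapses exactly to the conjunction. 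For the inductive step $\alpha = \alpha_1\alpha_2$, unfold $\tup{\alpha_1\alpha_2}\varphi \coloneqq \tup{\alpha_1}\tup{\alpha_2}\varphi$, apply the induction hypothesis to $\alpha_1$ with node expression $\tup{\alpha_2}\varphi$, then to $\alpha_2$, and conclude using the semantic clause for path concatenation to merge the two witnesses.

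For the second and third claims, I would expand the abbreviation $[\alpha \cmpr \beta] \coloneqq \lnot\tup{\alpha \cmpd \beta}$, then use the semantics of negation together with the relevant clause for $\<\alpha \cmpd \beta\>$ in \Cref{def:semantics}. A routine de~Morgan step converts the negated existential into the universal statement, with $\cmpd$ flipping back into $\cmpr$ so that $\deq_{\compc}$ (resp.\ $\dneq_{\compc}$) appears on the right-hand side of the resulting implication, as required.

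The only mildly subtle step is the concatenation case of the induction, since applying the induction hypothesis to $\tup{\alpha_1}(\tup{\alpha_2}\varphi)$ requires treating $\tup{\alpha_2}\varphi$ as an arbitrary node expression so that the inductive statement can be instantiated. Aside from this bookkeeping, everything follows mechanically from \Cref{def:semantics} and the abbreviations in \Cref{def:lang}, in line with the author's remark that the proposition is immediate.
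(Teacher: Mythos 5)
Your proof is correct and is exactly the argument the paper treats as immediate: the paper gives no explicit proof of \Cref{prop:abbrv}, and the natural route is the one you take---induction on the path expression $\alpha$ (with $\varphi$ quantified so the concatenation case can instantiate the hypothesis at $\tup{\alpha_2}\varphi$), followed by the de~Morgan unfolding of $[\alpha \cmpr \beta] \coloneqq \lnot\tup{\alpha \cmpd \beta}$ for the other two claims. Your flagged subtlety about generalizing over the node expression in the inductive statement is the right bookkeeping point, and nothing in the argument fails.
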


We conclude this section with presenting the Hilbert-style axiom $\hilbert$ system for $\hxpd$ in~\cite{ArecesF21}.

\begin{figure}[t]
  \begin{center}
    \scalebox{\thescalefactor}{
    \begin{tabular}{@{}p{.57\textwidth}@{}@{}p{.51\textwidth}@{}}
      \toprule
      \begin{tabular}[t]{@{}ll@{}}
        \multicolumn{2}{l}{\textbf{\textsc{Basic Axioms}}}
        \tabularnewline
        \midrule
        (CPL) & all tautologies of CPL
        \tabularnewline
        ($@$-def) & $@_i\varphi \liff \<i{:}\varphi? =_{\compc} i{:}\varphi?\>$
        \tabularnewline
        ($\<\dowa\>$-def) & $\<\alpha\>\varphi \liff \<\alpha\varphi? =_{\compc} \alpha\varphi?\>$
        \tabularnewline
        (K) & $[\alpha](\varphi \to \psi) \to ([\alpha]\varphi \to [\alpha]\psi)$
        \tabularnewline
        ($@$K) & $@_i(\varphi \to \psi) \to (@_i\varphi \to @_i\psi)$
        \tabularnewline
        ($@$-self-dual) & $\lnot@_i\varphi \liff @_i\lnot\varphi$
        \tabularnewline
        ($@$-intro) & $i \to (\varphi \liff @_i\varphi)$
        \tabularnewline
        ($@$-refl) & $@_ii$
      \end{tabular}
      &
      \begin{tabular}[t]{@{}ll@{}}
        \multicolumn{2}{l}{\textbf{\textsc{Path Axioms}}}
        \tabularnewline
        \midrule
        (comp-assoc) & $\<(\alpha\beta)\gamma \cmpr \eta\> \liff \<\alpha(\beta\gamma) \cmpr \eta\>$
        \tabularnewline
        (comp-neutral)$^\dagger$ & $\<\alpha\epsilon\beta \cmpr \gamma\> \liff \<\alpha\beta \cmpr \gamma\>$
        \tabularnewline
        (comp-dist) & $\<\alpha\beta\>\varphi \liff \<\alpha\>\<\beta\>\varphi$\\
        \tabularnewline[.5em]

      \multicolumn{2}{l}{\footnotesize $^\dagger$~$\alpha$ or $\beta$ may be omitted (but not both)}
      \end{tabular}
      \tabularnewline
      \midrule
      \begin{tabular}[t]{@{}ll@{}}
        \multicolumn{2}{l}{\textbf{\textsc{Data Comparison Axioms}}}
        \tabularnewline
        \midrule
        (equal) & $\<\epsilon =_{\compc} \epsilon\>$
        \tabularnewline
        ($\cmpr$-comm) & $\<\alpha \cmpr \beta\> \liff \<\beta \cmpr \alpha\>$
        \tabularnewline
        ($\epsilon$-trans) & $(\<\alpha =_{\compc} \epsilon\> \land \<\epsilon =_{\compc} \beta\>) \to \<\alpha =_{\compc} \beta\>$
        \tabularnewline
        (distinct) & $\lnot\<\epsilon \neq_{\compc} \epsilon\>$
        \tabularnewline
        ($@$-data) & $\lnot\<i{:} =_{\compc} j{:}\> \liff \<i{:} \neq_{\compc} j{:}\>$
        \tabularnewline
        (subpath) & $\<\alpha\beta \cmpr \gamma\> \to \<\alpha\>\top$
        \tabularnewline
        ($@{\cmpr}$-dist) & $\<i{:}\alpha \cmpr i{:}\beta\> \liff @_i\<\alpha \cmpr \beta\>$
        \tabularnewline
        ($\cmpr$-test) & $\<\varphi?\alpha \cmpr \beta\> \liff (\varphi \land \<\alpha \cmpr \beta\>)$
        \tabularnewline
        (agree) & $\<j{:}i{:}\alpha \cmpr \beta\> \liff \<i{:}\alpha \cmpr \beta\>$
        \tabularnewline
        (back) & $\<\alpha{i}{:}\beta \cmpr \gamma\> \to \<i{:}\beta \cmpr \gamma\>$
        \tabularnewline
        (${\cmpr}$-comp-dist) & $\<\alpha\>\<\beta \cmpr \gamma\> \to \<\alpha\beta \cmpr \alpha\gamma\>$
      \end{tabular}
      &
      \begin{tabular}[t]{l}
        {\textbf{\textsc{Rules of Inference}}}
        \tabularnewline
        \midrule
        \prftree[r]{(MP)}
        {\vdash \varphi}
        {\vdash \varphi \to \psi}
        {\vdash \psi}
        \tabularnewline[5pt]
        \prftree[r]{(Nec)}
        {\vdash \varphi}
        {\vdash [\alpha]\varphi}
        \tabularnewline[10pt]
        \prftree[r]{(Name)$^\ddagger$}
        {\vdash @_i\varphi}
        {\vdash \varphi}
        \tabularnewline[10pt]
        \prftree[r]{(Paste)$^\ddagger$\hspace*{1.25cm}}
        {\vdash @_i\<\dowa\>j \land \<j{:}\alpha \cmpr \beta\> \to \varphi}
        {\vdash \<i{:}\dowa\alpha \cmpr \beta\> \to \varphi}

      \tabularnewline[.5em]
      
{\footnotesize $^\ddagger$~$i$ and $j$ are different and not in $\varphi$, $\alpha$, $\beta$}

      \end{tabular}
      \tabularnewline
      \bottomrule

    \end{tabular}
    }\vspace*{-10pt}
  \end{center}
  \caption{Axioms System $\hilbert$ for $\hxpd$}
  \label{tab:axioms}
\end{figure}

\begin{definition}
    The axioms schemas and rules of $\hilbert$ are summarized in \Cref{tab:axioms}.
    The notion of a \emph{deduction} of a node expression $\varphi$ in $\hilbert$ is defined as a finite sequence $\psi_1 \dots \psi_n$ of node expressions such that $\psi_n = \varphi$, and for all
    $1 \leq z < n$, $\psi_z$ is either an instantiation of an axiom schema, or it is obtained via (MP), (Nec), (Name) or (Paste).
    In the cases (Name) and (Paste) it is understood that the nominals used meet the side conditions of the rule.
    We write $\vdash_{\hilbert} \varphi$, and call~\emph{$\varphi$ a theorem (of $\hilbert$)}, if there exists a deduction of $\varphi$ in $\hilbert$. 
    For a set of node expressions $\Psi$, we write $\Psi\vdash_{\hilbert} \varphi$, and say that \emph{$\varphi$ is deducible from $\Gamma$ (in $\hilbert$)}, iff there exists a finite set $\{\psi_1 \dots \psi_m\} \subseteq \Psi$ such that $\vdash_{\hilbert} (\psi_1 \land \dots \land \psi_m) \to \varphi$. We use the symbol $\vdash_{\hilbert}^n$ to indicate that the length of the corresponding deduction is $n$. 
\end{definition} 

\begin{remark}
    We note ($@$-def) and ($\tup{\dowa}$-def) are not part of the axioms for $\hxpd$ in~\cite{ArecesF21}. We include them as axioms since we treat $@$ and $\tup{\dowa}$ as primitive modalities, whereas they are abbreviations in~\cite{ArecesF21}.
\end{remark}

The following result is immediate from~\cite{ArecesF21}.
 
\begin{theorem}[Soundness and Completeness~\cite{ArecesF21}]\label[theorem]{th:hilbert-completeness}
        $\Psi \vdash_{\hilbert} \varphi$ iff $\Psi \vDash \varphi$. 
\end{theorem}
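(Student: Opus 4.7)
The plan is to reduce to the soundness and completeness result already established in~\cite{ArecesF21}, where $@_i$ and $\tup{\dowa}$ are defined as abbreviations for the path-comparison expressions that now appear on the right-hand sides of $(@\text{-def})$ and $(\<\dowa\>\text{-def})$. I would introduce a translation $t$ on the language of $\hxpd$ that recursively replaces every $@_i\psi$ by $\<i{:}\psi? =_{\compc} i{:}\psi?\>$ and every $\tup{\dowa}\psi$ by $\<\dowa\psi? =_{\compc} \dowa\psi?\>$, leaving the remaining constructors untouched. By \Cref{def:semantics}, a routine induction shows that $t$ preserves satisfaction: $\amodel,\node \Vdash \psi$ iff $\amodel,\node \Vdash t(\psi)$, and hence $\Psi \vDash \varphi$ iff $t[\Psi] \vDash t(\varphi)$.

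For soundness, I would verify directly from \Cref{def:semantics} that the two new axioms $(@\text{-def})$ and $(\<\dowa\>\text{-def})$ are valid; the remaining axioms and rules are exactly those of~\cite{ArecesF21}, so their soundness carries over. Formally, one argues by induction on the length $n$ of a deduction $\Psi \vdash_{\hilbert}^n \varphi$, handling each axiom schema and each of (MP), (Nec), (Name), and (Paste).

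For completeness, suppose $\Psi \vDash \varphi$. Then $t[\Psi] \vDash t(\varphi)$, and by completeness of the Areces--Fervari system we obtain a deduction of $t(\varphi)$ from $t[\Psi]$ there. Every axiom and rule of that system is also present in $\hilbert$, so this is already a $\hilbert$-deduction. It then suffices to prove, by induction on $\psi$, that $\vdash_{\hilbert} \psi \liff t(\psi)$, using $(@\text{-def})$ and $(\<\dowa\>\text{-def})$ for the base-like steps and a standard replacement-of-equivalents lemma (derivable from (K), ($@$K), (MP), and (Nec)) for the inductive steps; combining with the deduction yields $\Psi \vdash_{\hilbert} \varphi$.

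The main obstacle is the inductive proof of $\vdash_{\hilbert} \psi \liff t(\psi)$, specifically the cases $\psi = @_i\chi$ and $\psi = \tup{\dowa}\chi$, where one must lift the inductive hypothesis $\vdash_{\hilbert} \chi \liff t(\chi)$ through an $@_i$ or $\tup{\dowa}$ context and then apply the relevant definitional axiom. Beyond this, the argument is largely bookkeeping, since the substantive Henkin-style construction driving completeness --- including the use of (Name) and (Paste) to saturate maximal consistent sets with witnessing nominals for path-comparison formulas --- has already been carried out in~\cite{ArecesF21}.
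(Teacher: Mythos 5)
Your overall strategy coincides with what the paper intends: \Cref{th:hilbert-completeness} is stated without proof, deferred entirely to~\cite{ArecesF21}, with the accompanying remark explaining that the only delta is the promotion of $@_i$ and $\tup{\dowa}$ from abbreviations to primitives, compensated by the axioms ($@$-def) and ($\<\dowa\>$-def). Your translation $t$, the semantic-preservation argument, and the transfer of deductions via $\vdash_{\hilbert} \psi \liff t(\psi)$ are exactly the natural unpacking of that citation, and the soundness direction is unproblematic.

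The one step you understate is the replacement-of-equivalents lemma. You claim it is ``derivable from (K), ($@$K), (MP), and (Nec)'', and you locate the main difficulty in the cases $\psi = @_i\chi$ and $\psi = \tup{\dowa}\chi$; those cases are in fact the easy ones (necessitation over $[i{:}]$ or $[\dowa]$, the K-axiom, and the definitional axiom close them immediately). The genuinely delicate case is $\psi = \<\alpha \cmpr \beta\>$, where $t(\psi) = \<t(\alpha) \cmpr t(\beta)\>$ and the paths $t(\alpha)$, $t(\beta)$ differ from $\alpha$, $\beta$ in tests $\chi?$ occurring at arbitrary depth. Proving $\vdash_{\hilbert} \<\alpha\chi?\beta \cmpr \gamma\> \liff \<\alpha\,t(\chi)?\,\beta \cmpr \gamma\>$ from $\vdash_{\hilbert} \chi \liff t(\chi)$ is not a consequence of (K), ($@$K), (MP), (Nec) alone: ($\cmpr$-test) only peels tests off the head of a path, so reaching an internal test requires the path and data-comparison axioms together with (Paste) to introduce named witnesses along the path --- essentially the substitution-of-equivalents machinery already developed in~\cite{ArecesF21}. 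So the reduction is sound, but this lemma must be imported from (or re-proved as in) the cited work rather than dismissed as routine modal bookkeeping.
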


\section{A Sequent Calculus for $\hxpd$}
\label{sec:arXiv-calculus}

\begin{definition}\label[definition]{def:sequent}
    A \emph{sequent} is a pair $\Gamma \vdash \Delta$, where $\Gamma$ and $\Delta$ are finite, possibly empty sets of node expressions of the form $\tagg{\cmpr}{i}{j}$ or $@_i\varphi$.\footnote{Following standard notation, we will not use curly brackets when writing down sequents.}
    In a sequent $\Gamma \vdash \Delta$, the set $\Gamma$ is called the \emph{antecedent} and the set $\Delta$ is called the \emph{consequent}.
    In turn, a \emph{rule} is a pair $(\Gamma \vdash \Delta, \{\Gamma_1 \vdash \Delta_1, \dots, \Gamma_n \vdash \Delta_n\})$, where $\Gamma \vdash \Delta$ is called the \emph{conclusion}, and $\{\Gamma_1 \vdash \Delta_1, \dots, \Gamma_n \vdash \Delta_n\}$ the set of \emph{premisses} of the rule.  If the set of premisses is empty, we say that the rule is an \emph{axiom}.
    In each rule, certain node expressions in the conclusion sequent are designated as \emph{principal}.
    These are the node expressions the rule acts upon. The remaining node expressions in the sequent are the \emph{context} of the rule, and are carried unchanged across its application. This distinction enables the rule to isolate and analyze the logical structure of the principal formulas while treating the context uniformly.
    The rules of $\gentzen$ are shown in \Cref{rules:hxpd}, in a standard format---i.e., the conclusion is shown under a line, the premisses are above the line in no particular order, the principal node expressions and the context are identified immediately from the presentation.
\end{definition}

\begin{definition}\label[definition]{def:derivation}
    A \emph{derivation} in $\gentzen$ is a sequent-labeled tree constructed according to the rules in \Cref{rules:hxpd}. More precisely,  each non-leaf node in the tree is a sequent that is the conclusion of a rule in $\gentzen$, its immediate successors in the tree (going upwards) are the premisses of that rule. 
    The root of the tree is called the \emph{end-sequent} of the derivation.
    A sequent is \emph{derivable} if it is the end-sequent of some derivation, and it is \emph{provable} if it has a derivation whose leaves are all instances of (Ax)~or~ ($\bot$).
    We use $\Gamma \vdash_{\gentzen} \Delta$ to indicate that $\Gamma \vdash \Delta$ is provable.
A rule is \emph{derived} iff there is a derivation of the conclusion of the rule whose leaves are either axioms or belong to the premisses of the rule.
\end{definition}

\begin{figure}[h!]
   \centerline
  {
  \scalebox{\thescalefactor}
  {
  \begin{tabular}{@{}c@{}}
  
  \toprule
  \textbf{\textsc{Propositional Rules}} \\
  \midrule

  \tabularnewline[-7pt]
  \begin{tabular}{@{}c@{}}
    \prfbyaxiom{\footnotesize(Ax)$^{\ddagger}$}{\varphi, \Gamma \vdash \Delta, \varphi} 
    
    ~

    \prfbyaxiom{\footnotesize($\bot$)}{@_i\bot, \Gamma \vdash \Delta}
    
    ~

    \prftree[r]{\footnotesize($\to$L)}
    {\Gamma \vdash \Delta, @_i \varphi}{@_i \psi, \Gamma \vdash \Delta}
    {@_i(\varphi \to \psi), \Gamma \vdash \Delta}
    
    ~
    
    \prftree[r]{\footnotesize($\to$R)}
    {@_i \varphi, \Gamma \vdash \Delta, @_i\psi}
    {\Gamma \vdash \Delta, @_i(\varphi \to \psi)}

    \tabularnewline[5pt]
    {\footnotesize$^{\ddagger}$ $\varphi$ is of the form $@_ip$, $@_ij$, or $\tup{i{:} =_{\compc} j{:}}$}
  \end{tabular}

  \tabularnewline

  \midrule
  \textbf{\textsc{Rules for Nominals}} \\
  \midrule

  \tabularnewline[-7pt]
  \begin{tabular}{@{}c@{}}

    \prftree[r]{\footnotesize($@$T)}
    {@_ii, \Gamma \vdash \Delta}
    {\Gamma \vdash \Delta}
    
    \quad

    \prftree[r]{\footnotesize($@5$)}
    {@_jk, @_ij, @_ik, \Gamma \vdash \Delta}
    {@_ij, @_ik, \Gamma \vdash \Delta}

    \quad

    \prftree[r]{\footnotesize(Nom)$^{\dagger}$}
    {@_ij, \Gamma \vdash \Delta}
    {\Gamma \vdash \Delta}
    
    \tabularnewline[7pt]

    \prftree[r]{\footnotesize(S$_1$)$^{\ddagger}$}
    {@_j\varphi, @_ij, @_i\varphi, \Gamma \vdash \Delta}
    {@_ij, @_i\varphi, \Gamma \vdash \Delta}    

    \quad

    \prftree[r]{\footnotesize(S$_2$)}
    {\taag{\dowa}{i}{k}, @_jk, \taag{\dowa}{i}{j}, \Gamma \vdash \Delta}
    {@_jk, \taag{\dowa}{i}{j}, \Gamma \vdash \Delta}

    \quad

    \prftree[r]{\footnotesize(S$_3$)}
    {\tagg{=}{j}{k}, @_ij, \tagg{=}{i}{k}, \Gamma \vdash \Delta}
    {@_ij, \tagg{=}{i}{k}, \Gamma \vdash \Delta}

    \tabularnewline[5pt]
    {\footnotesize $^{\dagger}$ $j$ is not in the conclusion}
    \quad
    {\footnotesize $^{\ddagger}$ $\varphi$ is of the form $p$, $\bot$, or $\tup{\dowa}k$}
  \end{tabular}

  \tabularnewline

  \midrule
  \textbf{\textsc{Rules for Modalities}} \\
  \midrule

  \tabularnewline[-7pt]
  \begin{tabular}{@{}c@{}}
    \prftree[r]{\footnotesize($@$L)}
    {@_i\varphi, \Gamma \vdash \Delta }
    {@_j@_i\varphi, \Gamma \vdash \Delta}

    \quad

    \prftree[r]{\footnotesize($@$R)}
    {\Gamma \vdash \Delta,@_i\varphi}
    {\Gamma \vdash \Delta,@_j@_i\varphi}
    
    \quad
    
    \prftree[r]{\footnotesize(${\tup{\dowa}}$L)$^{\dagger}$}
    {@_i\tup{\dowa}j, @_j\varphi, \Gamma \vdash \Delta}
    {@_i\tup{\dowa}\varphi, \Gamma \vdash \Delta}
    
    \quad
    
    \prftree[r]{\footnotesize(${\tup{\dowa}}$R)}
    {@_i\tup{\dowa}j, \Gamma \vdash \Delta, @_i\tup{\dowa}\varphi, @_j\varphi}
    {@_i\tup{\dowa}j, \Gamma \vdash \Delta, @_i\tup{\dowa}\varphi}
    
    \tabularnewline[7pt]

    \prftree[r]{\footnotesize(${\tup{\cmpr}}$L)$^{\ddagger}$}
    {@_i\tup{\alpha}j, @_i\tup{\beta}k, \tagg{\cmpr}{j}{k}, \Gamma \vdash \Delta}
    {@_i\tup{\alpha \cmpr \beta}, \Gamma \vdash \Delta}

    \quad

    \prftree[r]{\footnotesize(${\tup{\cmpr}}$R)}
    {@_i\tup{\alpha}j, @_i\tup{\beta}k, \Gamma \vdash \Delta, @_i\tup{\alpha \cmpr \beta},\tagg{\cmpr}{j}{k}}
    {@_i\tup{\alpha}j, @_i\tup{\beta}k, \Gamma \vdash \Delta, @_i\tup{\alpha \cmpr \beta}}
    
    \tabularnewline[5pt]
    {\footnotesize$^{\dagger}$ $j$ is not in the conclusion}
    \qquad
    {\footnotesize $^{\ddagger}$ $j$ and $k$ are different and not in the conclusion}

  \end{tabular}

  \tabularnewline

  \midrule
  \textbf{\textsc{Rules for Data Comparison}} \\
  \midrule

  \tabularnewline[-7pt]
  \begin{tabular}{@{}c@{}}
    {\prftree[r]{\footnotesize(EqT)}
    {\tagg{=}{i}{i}, \Gamma \vdash \Delta}
    {\Gamma \vdash \Delta}}

    \quad

    {\prftree[r]{\footnotesize(Eq5)}
    {\tagg{=}{j}{k}, \tagg{=}{i}{j}, \tagg{=}{i}{k}, \Gamma \vdash \Delta}
    {\tagg{=}{i}{j}, \tagg{=}{i}{k}, \Gamma \vdash \Delta}}

	\quad

    {\prftree[r]{\footnotesize(NEqL)}
    {\Gamma \vdash \Delta, \tagg{=}{i}{j}}
    {\tagg{\neq}{i}{j}, \Gamma \vdash \Delta}}
    
    \quad

    {\prftree[r]{\footnotesize(NEqR)}
    {\tagg{=}{i}{j}, \Gamma \vdash \Delta}
    {\Gamma \vdash \Delta, \tagg{\neq}{i}{j}}}

  \end{tabular}
  
  \tabularnewline

  \midrule
  \textbf{\textsc{Structural Rules}} \\
  \midrule

  \tabularnewline[-7pt]
  \begin{tabular}{@{}c@{}}
    {\prftree[r]{\footnotesize(Cut)}
      {\Gamma \vdash \Delta, \varphi}
      {\varphi, \Gamma' \vdash \Delta'}
      {\Gamma, \Gamma' \vdash \Delta, \Delta'}}
    
    \quad

    {\prftree[r]{\footnotesize(WL)}
      {\phantom{\varphi,}\, \Gamma \vdash \Delta}
      {\varphi, \Gamma \vdash \Delta}}
    
    \quad

    {\prftree[r]{\footnotesize(WR)}
    {\Gamma \vdash \Delta\phantom{, \varphi}}
    {\Gamma \vdash \Delta, \varphi}}
  \end{tabular}
  
  \tabularnewline
  \bottomrule
  \end{tabular}
  }}
  \caption{Sequent Calculus $\gentzen$ for $\hxpd$.}\label[figure]{rules:hxpd}
\end{figure}

\section{Soundness}
\label{sec:aeXiv-soundness}
\begin{definition}\label{def:seq:validity}
    A sequent $\Gamma \vdash \Delta$ is \emph{valid} iff for all $\amodel$, it follows that $\amodel \Vdash \Gamma$ implies $\amodel \Vdash \psi$ for some $\psi \in \Delta$.
    A rule preserves validity iff the validity of the premisses of the rule implies the validity of the conclusion of the rule.
\end{definition}


\begin{lemma}[Soundness]\label{lemma:soundness}
    Every rule in $\gentzen$ preserves validity.
\end{lemma}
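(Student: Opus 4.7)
The plan is to proceed by case analysis on the rules of $\gentzen$ shown in \Cref{rules:hxpd}, unfolding \Cref{def:seq:validity,def:semantics} in each case. The two axioms are immediate: for (Ax), any model satisfying $\varphi$ on the left satisfies $\varphi$ on the right; for ($\bot$), no model ever satisfies $@_i\bot$, so the antecedent of the conclusion is unsatisfiable. The structural rules (WL), (WR), and (Cut) are equally direct. For (Cut), a putative countermodel $\amodel$ to the conclusion either satisfies $\varphi$, in which case it refutes the right premise, or it does not, in which case it refutes the left premise.

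For each non-structural rule the argument follows a common template: assume the premises are valid and, for contradiction, let $\amodel$ satisfy the antecedent of the conclusion and refute every formula in its consequent; then construct a countermodel to one of the premises. The propositional rules ($\to$L) and ($\to$R) reduce to the semantic clause for $\to$ at the node $g(i)$, while the modality-introduction rules ($\tup{\dowa}$R) and ($\tup{\cmpr}$R) are direct applications of the corresponding existential clauses, since their premises already supply the required named witnesses. The remaining rules for nominals and for data comparison without freshness conditions---($@$L), ($@$R), ($@$T), (@5), (S$_1$), (S$_2$), (S$_3$), (EqT), (Eq5), (NEqL), and (NEqR)---are handled by purely equational reasoning, using that $=$ on $\nodes$ is an equivalence and, by \Cref{def:models}, that each $\deq_{\compc}$ is an equivalence relation on $\nodes$ whose complement is $\dneq_{\compc}$.

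The substantive work lies in the rules carrying freshness side conditions: (Nom), ($\tup{\dowa}$L), and ($\tup{\cmpr}$L). To treat them I would first prove an auxiliary coincidence lemma stating that whether $\amodel \Vdash \Gamma \vdash \Delta$ holds depends only on the restriction of $g$ to the nominals actually occurring in $\Gamma \cup \Delta$. Given a countermodel $\amodel$ to the conclusion of such a rule, I would then extract node witnesses from \Cref{def:semantics} and rebind the fresh nominals accordingly. Concretely, for ($\tup{\dowa}$L), the assumption $\amodel \Vdash @_i\tup{\dowa}\varphi$ yields some $\iworld'$ with $g(i) \R_{\dowa} \iworld'$ and $\amodel, \iworld' \Vdash \varphi$; redefining $g(j) := \iworld'$ produces a model that also satisfies $@_i\tup{\dowa}j$ and $@_j\varphi$ and, by the coincidence lemma, continues to refute $\Delta$, contradicting the validity of the premise. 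The case ($\tup{\cmpr}$L) is analogous but uses the doubly existential clause of $\tup{\alpha \cmpr \beta}$ to rebind both $j$ and $k$ to the two required witnesses, while (Nom) simply rebinds $j$ to $g(i)$.

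The main obstacle I expect is precisely the coincidence lemma: path expressions may themselves contain nominals and the clause for $\tup{\alpha \cmpr \beta}$ is doubly existential, so one must verify that rebinding a fresh nominal does not disturb the truth of any subexpression of the principal formula. This is exactly what the side conditions of ($\tup{\dowa}$L) and ($\tup{\cmpr}$L)---which forbid the fresh nominals in the entire conclusion, principal formula included---are designed to guarantee. Once the coincidence lemma is in place, every remaining case becomes a routine unfolding of \Cref{def:semantics}.
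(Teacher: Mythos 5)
Your proposal is correct and follows essentially the same route as the paper: a case analysis in which one assumes a countermodel to the conclusion and, for the rules with freshness side conditions such as (Nom), ($\tup{\dowa}$L), and ($\tup{\cmpr}$L), rebinds the fresh nominals to the semantic witnesses to refute the premiss. The only difference is that you isolate the coincidence property (satisfaction depends only on the nominals occurring in the sequent) as an explicit auxiliary lemma, whereas the paper uses it implicitly when it declares the modified model to still refute $\Delta$; making it explicit is a reasonable refinement but not a different argument.
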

\begin{proof}
    We present a selection of representative cases below. The remaining cases use a similar argument and can be verified by routine inspection.  In all cases below we reason by contradiction.
    \begin{description}
        \item[\textnormal{(Nom)}] 
            Let $\amodel[A]$ be a model s.t.\ $\amodel[A] \Vdash \Gamma$ and $\amodel[A] \nVdash \psi$ for all $\psi \in \Delta$.
            Then, introduce a new nominal $j$ that is not in $\Gamma,\Delta$ and build a model $\amodel[B]$ that is just like $\amodel[A]$ with the exception that $\amodel[B] \Vdash @_ij$.
            We have $\amodel[B] \Vdash @_ij, @_i\varphi, \Gamma$ and $\amodel[B] \nVdash \psi$ for all $\psi \in \Delta$.
            This contradicts the validity of the premiss of the rule.
        
        \item[\textnormal{($\tup{\cmpr}$L)}] We have: (1) $\cmpr$ is $=_{\compc}$, or (2) $\cmpr$ is $\neq_{\compc}$.
            For (1), take any model $\amodel[A]$ s.t.:
                $\amodel[A] \Vdash @_i\<\alpha =_{\compc} \beta\>, \Gamma$, and
                $\amodel[A] \nVdash \psi$ for all $\psi \in \Delta$.
            The semantics of $@_i\<\alpha =_{\compc} \beta\>$ tells us there are $n$ and $n'$ in $\amodel[A]$ s.t.:
                $\amodel[A], g(i), n \Vdash \alpha$,
                $\amodel[A], g(i), n' \Vdash \beta$, and
                $(n,n') \in {\approx_{\compc}}$.
            Then, choose nominals $j$ and $k$ that do not appear in $\Gamma,\Delta,\alpha,\beta$ and build a model $\amodel[B]$ that is identical to $\amodel[A]$ with the exception that $\amodel[B], n \Vdash j$ and $\amodel[B], n' \Vdash k$.
            It is clear that $\amodel[B] \Vdash @_i\<\alpha\>j, @_i\<\beta\>k, \tagg{=}{j}{k}, \Gamma$ and $\amodel[B] \nVdash \psi$ for all $\psi \in \Delta$.
            This contradicts the validity of the premiss of the rule.
            The case for  (2) is similar.
        
        \item[\textnormal{($\tup{\cmpr}$R)}] We have: (1) $\cmpr$ is $=_{\compc}$, or (2) $\cmpr$ is $\neq_{\compc}$.
            For (1), take any model $\amodel[A]$ s.t.:
                $\amodel[A] \Vdash @_i\<\alpha\>j, @_i\<\beta\>k, \Gamma$ and $\amodel[B] \nVdash \psi$ for all $\psi \in \Delta, @_i\<\alpha =_{\cmpr} \beta\>$.
            In particular, $\amodel[A] \nVdash @_i\<\alpha =_{\compc} \beta\>$.
            This means that for all $n$ and $n'$ in $\amodel[A]$, it follows that
                $\amodel[A], g(i), n \Vdash \alpha$,
                $\amodel[A], g(i), n' \Vdash \beta$, and
                $(n,n') \notin {\approx_{\compc}}$.
            This contradicts the validity of premiss of the rule.
            The case for (2) is similar.
            \qedhere
    \end{description}
\end{proof}

Soundness of $\gentzen$ follows from \Cref{lemma:soundness} by induction on the structure of a derivation of a sequent. 

\begin{theorem}[Soundness]\label{th:soundness}
    Every provable sequent in $\gentzen$ is valid, i.e., $\Gamma \vdash_\gentzen \Delta$ implies $\Gamma \vDash \Delta$.
\end{theorem}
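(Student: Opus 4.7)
The plan is to proceed by induction on the height of a derivation of $\Gamma \vdash \Delta$ in $\gentzen$, relying crucially on the rule-preservation result established in \Cref{lemma:soundness}.

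For the base case the derivation is a single leaf, and since the sequent is provable this leaf must be an instance of (Ax) or ($\bot$). I would verify these directly against \Cref{def:seq:validity}: in an instance $\varphi, \Gamma \vdash \Delta, \varphi$ of (Ax), any model satisfying the antecedent trivially satisfies the distinguished $\varphi$ on the right (the side condition on the shape of $\varphi$ is immaterial for validity); in an instance $@_i\bot, \Gamma \vdash \Delta$ of ($\bot$), no model can satisfy $@_i\bot$ by \Cref{def:semantics}, so the conditional in the definition of validity is vacuously true.

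For the inductive step, suppose the derivation ends with an application of some rule of $\gentzen$ whose premisses $\Gamma_1 \vdash \Delta_1, \dots, \Gamma_n \vdash \Delta_n$ are each derivable by strictly shorter sub-derivations. The induction hypothesis yields that each premiss is valid; \Cref{lemma:soundness} then gives that the conclusion $\Gamma \vdash \Delta$ is valid as well. Reading off \Cref{def:seq:validity}, validity of the end-sequent is precisely the statement $\Gamma \vDash \Delta$.

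There is essentially no obstacle to overcome: the genuinely semantic content has been absorbed into \Cref{lemma:soundness}, and the theorem reduces to a textbook induction. The only small care point is to enumerate both axiom schemes in the base case and to be explicit that \Cref{def:seq:validity} matches the intended reading of $\Gamma \vDash \Delta$ on sets (i.e.\ that the consequent is interpreted disjunctively), so that the final unfolding is unambiguous.
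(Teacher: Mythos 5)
Your proposal matches the paper exactly: the paper proves this theorem by induction on the structure of a derivation, with all semantic content delegated to \Cref{lemma:soundness} (where the axioms (Ax) and ($\bot$) are treated as zero-premiss rules whose "preservation of validity" is just validity of the conclusion). Your explicit separation of the base cases is a minor presentational difference only.
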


\section{Invertibility}
\label{sec:arXiv-invertibility}

\begin{definition}\label[definition]{def:invertible}
    A rule~$(\Rho)$ is \emph{invertible} iff there is a derivation of each premiss of the rule whose leaves are either axioms or the conclusion of the rule.
    Any such derivation is called an \emph{inverse} of the rule and is denoted by~$(\inv{\arule})$.
\end{definition}

\begin{theorem}\label{th:invertible}
    Every rule in $\gentzen$ is invertible.
\end{theorem}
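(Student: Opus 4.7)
The plan is to verify invertibility rule-by-rule across $\gentzen$. For each rule I must, given its conclusion, exhibit a derivation of each premise whose leaves are either axioms ((Ax), ($\bot$)) or the conclusion itself. The axioms (Ax) and ($\bot$) have no premises and are vacuously invertible. A large family of rules---namely (@T), (@5), (Nom), (S$_1$), (S$_2$), (S$_3$), (EqT), (Eq5), (@L), (@R), ($\tup{\dowa}$R), and ($\tup{\cmpr}$R)---has premises that extend the conclusion only by additional formulas in the antecedent or consequent; for each of these, iterated applications of (WL) and (WR) to the conclusion directly yield the premise.

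Before handling the remaining logical rules I would establish a \emph{generalized identity} lemma: for every formula $\varphi$ admissible in a sequent (i.e.\ of the shape $@_i\psi$ or $\tagg{\cmpr}{i}{j}$), the sequent $\varphi, \Gamma \vdash \Delta, \varphi$ is provable in $\gentzen$. The proof is by induction on the structure of $\varphi$, using the left and right introduction rules for each connective, and, in the $\neq_{\compc}$ case, an application of (NEqL) and (NEqR) above an instance of the axiom for $\tagg{=}{i}{j}$. This lifts the restricted (Ax) to arbitrary formulas and is the workhorse for the cases that follow.

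The analytic rules---($\to$L), ($\to$R), ($\tup{\dowa}$L), ($\tup{\cmpr}$L), (NEqL), and (NEqR)---decompose a principal formula and so cannot be inverted by weakening alone; each is handled by (Cut) against a suitable auxiliary derivation. For instance, for ($\to$R) with conclusion $\Gamma \vdash \Delta, @_i(\varphi \to \psi)$, one first derives $@_i\varphi, @_i(\varphi \to \psi), \Gamma \vdash \Delta, @_i\psi$ by applying ($\to$L) to two generalized identities, and then cuts on $@_i(\varphi \to \psi)$ with the conclusion to obtain the premise. The case ($\to$L) is dual. The cases (NEqL) and (NEqR) follow the same pattern, cutting on the principal equality or disequality formula above a generalized identity. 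For ($\tup{\dowa}$L) with conclusion $@_i\tup{\dowa}\varphi, \Gamma \vdash \Delta$, one derives $@_i\tup{\dowa}j, @_j\varphi, \Gamma \vdash @_i\tup{\dowa}\varphi$ via ($\tup{\dowa}$R) above a generalized identity and cuts on $@_i\tup{\dowa}\varphi$; the rule ($\tup{\cmpr}$L) is analogous, using ($\tup{\cmpr}$R).

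The main obstacle is twofold. First, the generalized identity lemma must be established carefully, since admissible sequent formulas include modalities $\tup{\dowa}\psi$ and data comparisons $\tup{\alpha \cmpr \beta}$ built from compound paths; the induction mirrors the structure of the rules for these constructors and must thread their freshness side conditions through the inductive step. Second, in the inverse derivations of the analytic modal and comparison rules, the fresh nominals stipulated by the premise must remain fresh throughout the auxiliary derivations, which is always achievable given the countable supply of nominals but requires systematic bookkeeping. Once these points are settled, the case analysis completes the proof.
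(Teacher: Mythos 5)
Your overall strategy matches the paper's: invert the ``absorption-style'' rules by weakening alone, and invert the analytic rules by cutting the conclusion against an auxiliary derivation built by applying the dual rule above identity axioms. Your explicit generalized-identity lemma is a sensible addition; the paper uses (Ax) on compound formulas throughout its inversion derivations and only justifies this later, in the Derived Rules section, so making the dependency explicit up front is if anything cleaner (and, as you note, it is not circular, since the generalized (Ax) is derived without appealing to any inverses).

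There is, however, one concrete error: you place ($@$L) and ($@$R) in the weakening family, and they do not belong there. The premiss of ($@$L) is $@_i\varphi, \Gamma \vdash \Delta$ while its conclusion is $@_j@_i\varphi, \Gamma \vdash \Delta$: the principal formula is \emph{replaced}, not supplemented, so weakening the conclusion only yields $@_i\varphi, @_j@_i\varphi, \Gamma \vdash \Delta$, which still carries the unwanted $@_j@_i\varphi$ and is not the premiss. The same holds for ($@$R) on the consequent side. The paper handles both with exactly the Cut pattern you describe for the other analytic rules: e.g.\ for ($@$L), derive $@_i\varphi, \Gamma \vdash \Delta, @_j@_i\varphi$ by ($@$R) over an instance of (Ax), then cut on $@_j@_i\varphi$ against the conclusion. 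Moving these two rules into your Cut-based family repairs the proof; the rest of the case analysis goes through as you outline it.
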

\begin{proof}
    The case for the rules
    (@T), 
    (@5),
    (Nom),
    $(\text{S}_1)$,
    $(\text{S}_2)$,
    $(\text{S}_3)$,
    $(\tup{\dowa}\text{R})$,
    $(\tup{\cmpr}\text{R})$,
    (EqT), and
    (Eq5) 
    the result is immediate using weakening.
    \begin{description}
        \item(${\to}\text{L}_{1}$)
        Given a derivation of $@_i(\varphi \to \psi)$, we can build a derivation of $@_i\varphi$  as follows:
        \begin{center}\scalebox{\thescalefactor}
            {
                \prftree[r]{\footnotesize(Cut)}
                {
                    \prftree[r]{\footnotesize($\to$R)}
                    {
                        \prfbyaxiom{\footnotesize(Ax)}
                        {@_i\varphi, \Gamma \vdash \Delta, @_i\varphi, @_i\psi}
                    }
                    {\Gamma \vdash \Delta, @_i\varphi, \cc{@_i(\varphi \to \psi)}}
                }
                {
                    \prfassumption
                    {\cc{@_i(\varphi \to \psi)},\Gamma \vdash \Delta}
                }
                {\Gamma \vdash \Delta, @_i\varphi}
            }
        \end{center}
        \item (${\to}\text{L}_2$)
        Given a derivation of $@_i(\varphi \to \psi)$, we can build a derivation of $@_i\varphi$  as follows:
        \begin{center}\scalebox{\thescalefactor}
            {
                \prftree[r]{\footnotesize(Cut)}
                {
                    \prftree[r]{\footnotesize($\to$R)}
                    {
                        \prfbyaxiom{\footnotesize(Ax)}
                        {@_i\varphi,@_i\psi, \Gamma \vdash \Delta,@_i\psi}
                    }
                    {@_i\psi, \Gamma \vdash \Delta, \cc{@_i(\varphi \to \psi)}}
                }
                {
                    \prfassumption
                    {\cc{@_i(\varphi \to \psi)},\Gamma \vdash \Delta}
                }
                {@_i\psi, \Gamma \vdash \Delta}
            }
        \end{center}
        \item (${\to}\text{R}$)
        Given a derivation of $@_i(\varphi \to \psi)$, we can build a derivation of $@_i\varphi$ as follows:
        \begin{center}\scalebox{\thescalefactor}
            {
                \prftree[r]{\footnotesize(Cut)}
                {
                    \prfassumption
                    {\Gamma \vdash \Delta, \cc{@_i(\varphi \to \psi)}}
                }
                {
                    \prftree[r]{\footnotesize($\to$L)}
                    {
                        \prfbyaxiom{\footnotesize(Ax)}
                        {@_i\varphi, \Gamma \vdash \Delta, @_i\psi, @_i\varphi}
                    }
                    {
                        \prfbyaxiom{\footnotesize(Ax)}
                        {@_i\psi, @_i\varphi, \Gamma \vdash \Delta, @_i\psi}
                    }
                    {\cc{@_i(\varphi \to \psi)},@_i\varphi, \Gamma \vdash \Delta, @_i\psi}
                }
                {@_i\varphi, \Gamma \vdash \Delta, @_i\psi}
            }
        \end{center}
        \item ($@\text{L}$)
        Given a derivation of $@_j@_i\varphi$, we can build a derivation of $@_i\varphi$ as follows:
        \begin{center}\scalebox{\thescalefactor}
            {
                \prftree[r]{\footnotesize(Cut)}
                {
                    \prftree[r]{\footnotesize($@$R)}
                    {
                        \prfbyaxiom{\footnotesize(Ax)}
                        {@_i\varphi, \Gamma \vdash \Delta, @_i\varphi}
                    }
                    {@_i\varphi, \Gamma \vdash \Delta, \cc{@_j@_i\varphi}}
                }
                {
                    \prfassumption
                    {\cc{@_j@_i\varphi}, \Gamma \vdash \Delta}
                }
                {@_i\varphi, \Gamma \vdash \Delta}
            }
        \end{center}
        \item ($@\text{R}$)
        Given a derivation of $@_j@_i\varphi$, we can build a derivation of $@_i\varphi$ as follows:
        \begin{center}\scalebox{\thescalefactor}
            {
                \prftree[r]{\footnotesize(Cut)}
                {
                    \prfassumption
                    {\Gamma \vdash \Delta, \cc{@_j@_i\varphi}}
                }
                {
                    \prftree[r]{\footnotesize($@$L)}
                    {
                        \prfbyaxiom{\footnotesize(Ax)}
                        {@_i\varphi, \Gamma \vdash \Delta, @_i\varphi}
                    }
                    {\cc{@_j@_i\varphi}, \Gamma \vdash \Delta, @_i\varphi}
                }
                {\Gamma \vdash \Delta, @_i\varphi}
            }
        \end{center}
        \item ($\tup{\dowa}\text{L}$)
        Given a derivation of $@_i\tup{\dowa}\varphi$, we can build a derivation of $@_i\tup{\dowa}j,@_j\varphi$ as follows:
        \begin{center}\scalebox{\thescalefactor}
            {
                \prftree[r]{\footnotesize(Cut)}
                {
                    \prftree[r]{\footnotesize($\<\dowa\>$R)}
                    {
                        \prfbyaxiom{\footnotesize(Ax)}
                        {\taag{\dowa}{i}{j}, @_j\varphi, \Gamma \vdash \Delta, @_i\<\dowa\>\varphi, @_j\varphi}
                    }
                    {\taag{\dowa}{i}{j}, @_j\varphi, \Gamma \vdash \Delta, \cc{@_i\<\dowa\>\varphi}}
                }
                {
                    \prfassumption
                    {\cc{@_i\<\dowa\>\varphi}, \Gamma \vdash \Delta}
                }
                {\taag{\dowa}{i}{j}, @_j\varphi, \Gamma \vdash \Delta}
            }
        \end{center}
        \item ($\text{NEqL}$) 
        Given a derivation of $\tagg{\neq}{i}{j}$, we can build a derivation of $\tagg{=}{i}{j}$ as follows:
        \begin{center}\scalebox{\thescalefactor}
            {
                \prftree[r]{\footnotesize(Cut)}
                {
                    \prftree[r]{\footnotesize(NEqR)}
                    {
                        \prfbyaxiom{\footnotesize(Ax)}
                        {\tagg{=}{i}{j}, \Gamma \vdash \Delta, \tagg{=}{i}{j}}
                    }
                    {\Gamma \vdash \Delta, \tagg{=}{i}{j}, \tagg{\neq}{i}{j}}
                }
                {
                    \prfassumption
                    {\tagg{\neq}{i}{j}, \Gamma \vdash \Delta}
                }
                {\Gamma \vdash \Delta, \tagg{=}{i}{j}}
            }
        \end{center}
        \item (${\tup{\cmpr}}\text{L}$)
        Given a derivation of $\tup{\alpha \cmpr\beta}$, we can build a derivation of $@_i\tup{\alpha}j, @_i\tup{\beta}k, \tagg{\cmpr}{j}{k}$ as follows:
        \begin{center}\scalebox{\thescalefactor}
            {
                \prftree[r]{\footnotesize(Cut)}
                {
                    \prftree[r]{\footnotesize($\<\cmpr\>$R)}
                    {
                        \prfbyaxiom{\footnotesize(Ax)}
                        {@_i\tup{\alpha}j, @_i\tup{\beta}k, \tagg{\cmpr}{j}{k}, \Gamma \vdash \Delta, @_i\tup{\alpha \cmpr \beta}, \tagg{\cmpr}{j}{k}}
                    }
                    {@_i\tup{\alpha}j, @_i\tup{\beta}k, \tagg{\cmpr}{j}{k}, \Gamma \vdash \Delta, \cc{@_i\tup{\alpha \cmpr \beta}}}
                }
                {
                    \prfassumption
                    {\cc{@_i\tup{\alpha \cmpr \beta}}, \Gamma \vdash \Delta}
                }
                {@_i\tup{\alpha}j, @_i\tup{\beta}k, \tagg{\cmpr}{j}{k}, \Gamma \vdash \Delta}
            }
        \end{center}
        \item ($\lnot \text{L}$)
        Given a derivation of $@_i\lnot\varphi$, we can build a derivation of $@_i\varphi$ as follows:
        \begin{center}\scalebox{\thescalefactor}
            {
                \prftree[r]{\footnotesize(Cut)}
                {
                    \prftree[r]{\footnotesize($\to$)R}
                    {
                        \prfbyaxiom{\footnotesize(Ax)}
                        {@_i\varphi \vdash @_i\varphi,@_i\bot}
                    }
                    {\vdash @_i\varphi, \cc{@_i \lnot \varphi}}
                }
                {
                    \prfassumption
                    {\cc{@_i \lnot \varphi}, \Gamma \vdash \Delta}
                }
                {\Gamma \vdash \Delta, @_i\varphi}
            }
        \end{center}
        \item ($\lnot \text{R}$)
        Given a derivation of $@_i\lnot\varphi$, we can build a derivation of $@_i\varphi$ as follows:
        \begin{center}\scalebox{\thescalefactor}
            {
                \prftree[r]{\footnotesize(Cut)}
                {
                    \prfassumption
                    {\Gamma \vdash \Delta, \cc{@_i \lnot \varphi}}
                }
                {
                    \prftree[r]{\footnotesize($\to$L)}
                    {
                        \prfbyaxiom{\footnotesize(Ax)}
                        {@_i\varphi \vdash @_i\varphi}
                    }
                    {
                        \prfbyaxiom{\footnotesize($\bot$)}
                        {@_i\bot, @_i\varphi \vdash}
                    }
                    {\cc{@_i\lnot\varphi},@_i\varphi \vdash}
                }
                {@_i\varphi, \Gamma \vdash \Delta}
            }
        \end{center}
        \item ($\land \text{L}$)
        Given a derivation of $@_i(\varphi\land \psi)$, we can build a derivation of $@_i\varphi,@_i\psi$ as follows:
        \begin{center}\scalebox{\thescalefactor}
            {
                \prftree[r]{\footnotesize(Cut)}
                {
                    \prftree[r]{\footnotesize($\land$R)}
                    {
                        \prfbyaxiom{\footnotesize(Ax)}
                        {@_i\varphi, @_i\psi \vdash @_i\varphi}
                    }
                    {
                        \prfbyaxiom{\footnotesize(Ax)}
                        {@_i\varphi, @_i\psi \vdash @_i\psi}
                    }
                    {@_i\varphi, @_i\psi \vdash \cc{@_i(\varphi \land \psi)}}
                }
                {
                    \prfassumption
                    {\cc{@_i(\varphi \land \psi)}, \Gamma \vdash \Delta}
                }
                {@_i\varphi, @_i\psi, \Gamma \vdash \Delta}
            }
        \end{center}
    \end{description}
\end{proof}

\section{Derived Rules}

\begin{definition}\label[definition]{def:derived}
    A rule~$(\Rho)$ is \emph{derived} iff there is a derivation of the conclusion of the rule whose leaves are either axioms or belong to the premiss of the rule.
\end{definition}

The following proofs establish the derivability of the rules in \Cref{drules:hxpd}.

\begin{figure}[p]
     \centerline
    {
    \scalebox{\thescalefactor}
    {
    \begin{tabular}{@{}c@{}}
    
      \toprule
      \textbf{\textsc{Derived Propositional Rules}} \\
      \midrule
    
      \tabularnewline[-7pt]
      \begin{tabular}{@{}c@{}}
        {\prfbyaxiom{\small(Ax)}
        {@_i\varphi,\Gamma \vdash \Delta,@_i\varphi}}

        ~

        {\prftree[r]{\small($\top$L)}
        {@_i\top, \Gamma \vdash \Delta}
        {\Gamma \vdash \Delta}}
        
        \tabularnewline[10pt]

        {\prftree[r]{\small($\lnot$L)}
        {\Gamma \vdash \Delta, @_i\varphi}
        {@_i\lnot\varphi, \Gamma \vdash \Delta}}
        
        ~

        {\prftree[r]{\small($\lnot$R)}
        {@_i\varphi, \Gamma \vdash \Delta}
        {\Gamma \vdash \Delta, @_i\lnot\varphi}}

        \tabularnewline[10pt]

        \prftree[r]{\small($\land$L)}
        {@_i \varphi, @_i \psi, \Gamma \vdash \Delta }
        {@_i(\varphi \land \psi), \Gamma \vdash \Delta}

        ~
        \prftree[r]{\small($\land$R)}
        {\Gamma \vdash \Delta,@_i \varphi}
        {\Gamma \vdash \Delta, @_i\psi}
        {\Gamma \vdash \Delta,@_i(\varphi \land \psi)}

        \tabularnewline[10pt]

        {\prftree[r]{\small($\liff$L)}
        {@_i\varphi, @_i\psi,\Gamma \vdash \Delta}
        {\Gamma \vdash \Delta, @_i\varphi,@_i\psi}
        {@_i(\varphi \liff \psi),\Gamma \vdash \Delta}}

        ~

        {\prftree[r]{\small($\liff$R)}
        {@_i\varphi, \Gamma \vdash \Delta, @_i\psi}
        {@_i\psi, \Gamma \vdash \Delta, @_i\varphi}
        {\Gamma \vdash \Delta, @_i(\varphi \liff \psi)}}

        \tabularnewline[10pt]

        {\prftree[r]{\small(MP)}
        {\Gamma \vdash \Delta, @_i\varphi}
        {\Gamma' \vdash \Delta', @_i(\varphi \to \psi)}
        {\Gamma, \Gamma' \vdash \Delta, \Delta', @_i\psi}}
    \end{tabular}
    
    \tabularnewline

    \midrule
    \textbf{\textsc{Derived Rules for Modalities}} \\
    \midrule
  
    \tabularnewline[-7pt]
    \begin{tabular}{@{}c@{}}
      {\prftree[r]{\footnotesize(${\tup{\alpha}}$L)$^{\dagger}$}
      {\taag{\alpha}{i}{j}, @_j\varphi, \Gamma \vdash \Delta}
      {@_i\tup{\alpha}\varphi, \Gamma \vdash \Delta}}
      
      \qquad
      
      {\prftree[r]{\footnotesize(${\tup{\alpha}}$R)}
      {\taag{\alpha}{i}{j}, \Gamma \vdash \Delta, @_i\tup{\alpha}\varphi, @_j\varphi}
      {\taag{\alpha}{i}{j}, \Gamma \vdash \Delta, @_i\tup{\alpha}\varphi}}
      
      \tabularnewline[5pt]
      {\footnotesize$^{\dagger}$ j does not occur in the conclusion}
    \end{tabular}
  
    \tabularnewline
  
    \midrule
    \textbf{\textsc{Derived Rules for Nominals}} \\
    \midrule
  
    \tabularnewline[-7pt]

    \prftree[r]{\footnotesize(S$_1$)}
    {@_j\varphi, @_ij, @_i\varphi, \Gamma \vdash \Delta}
    {@_ij, @_i\varphi, \Gamma \vdash \Delta}    

    \quad

    \prftree[r]{\footnotesize(S$_2$)}
    {@_i\tup{\alpha}k, @_jk, @_i\tup{\alpha}j, \Gamma \vdash \Delta}
    {@_jk, @_i\tup{\alpha}j, \Gamma \vdash \Delta}

    \quad

    {\prftree[r]{\footnotesize(S$_3$)}
    {\tagg{\cmpr}{j}{k}, @_ij, \tagg{\cmpr}{i}{k}, \Gamma \vdash \Delta}
    {@_ij, \tagg{\cmpr}{i}{k}, \Gamma \vdash \Delta}}

    \tabularnewline[10pt]

    \begin{tabular}{@{}c@{}}
  
        \prftree[r]{\footnotesize($@$B)}
        {@_ji, \Gamma \vdash \Delta}
        {@_ij, \Gamma \vdash \Delta}

        \qquad
        {\prftree[r]{\footnotesize(@$\cmpr$L)}
        {\tagg{\cmpr}{i}{j},\Gamma \vdash \Delta}
        {@_k\tagg{\cmpr}{i}{j},\Gamma \vdash \Delta}}
  
        \qquad
    
        {\prftree[r]{\footnotesize(@$\cmpr$R)}
        {\Gamma \vdash \Delta,\tagg{\cmpr}{i}{j}}
        {\Gamma \vdash \Delta,@_k\tagg{\cmpr}{i}{j}}}
  
    \end{tabular}
  
    \tabularnewline
  
    \midrule
    \textbf{\textsc{Derived Rules for Data Comparison}} \\
    \midrule
  
    \tabularnewline[-7pt]
    \begin{tabular}{@{}c@{}}
      
      {\prftree[r]{\footnotesize($\tup{\cmpr}$B)}
      {\tagg{\cmpr}{j}{i},\Gamma \vdash \Delta}
      {\tagg{\cmpr}{i}{j},\Gamma \vdash \Delta}}
      

      
  
  
      \tabularnewline[7pt]
  
      {\prftree[r]{\footnotesize($[\cmpr]$L)}
      {@_i[\alpha \cmpr \beta], \tagg{\cmpr}{j}{k}, @_i\tup{\alpha}j, @_i\tup{\beta}k, \Gamma \vdash \Delta}
      {@_i[\alpha \cmpr \beta], @_i\tup{\alpha}j, @_i\tup{\beta}k, \Gamma \vdash \Delta}}
  
      \quad
  
      {\prftree[r]{\footnotesize($[\cmpr]$R)$^{\dagger}$}
      {@_i\tup{\alpha}j, @_i\tup{\beta}k, \Gamma \vdash \Delta, \tagg{\cmpr}{j}{k}}
      {\Gamma \vdash \Delta, @_i[\alpha \cmpr \beta]}}   
      
      \tabularnewline[5pt]
      {\footnotesize $^{\dagger}$ j and k are different and do not occur in the conclusion}
    \end{tabular}
    
    \tabularnewline
    \bottomrule
    \end{tabular}
    }}
    \caption{Derived Sequent Rules for $\hxpd$.}\label[figure]{drules:hxpd}
\end{figure}

\begin{itemize}
    \item ($\top$L)
        \begin{center}\scalebox{\thescalefactor}
            {
                \prftree[r]{\footnotesize(Cut)}
                {
                    \prftree[r]{\footnotesize($\to$R)}
                    {
                        \prfbyaxiom{\footnotesize($\bot$)}
                        {@_i\bot, \Gamma \vdash \Delta, @_i\bot}
                    }
                    {\Gamma \vdash \Delta, \cc{@_i\top}}
                }
                {
                    \prfassumption
                    {\cc{@_i\top}, \Gamma \vdash \Delta}
                }
                {\Gamma \vdash \Delta}
            }
        \end{center}
    \item ($\lnot$L)
        \begin{center}\scalebox{\thescalefactor}
            {
                \prftree[r]{\footnotesize($\to$L)}
                {\Gamma \vdash \Delta, @_i\varphi}
                {
                    \prfbyaxiom{\footnotesize($\bot$)}
                    {@_i\bot, \Gamma \vdash \Delta}
                }
                {@_i\lnot\varphi, \Gamma \vdash \Delta}
            }
        \end{center}
    \item ($\lnot$R)
        \begin{center}\scalebox{\thescalefactor}
            {
                \prftree[r]{\footnotesize($\to$R)}
                {
                    \prftree[r]{\footnotesize(WR)}
                    {@_i\varphi, \Gamma \vdash \Delta}
                    {@_i\varphi, \Gamma \vdash \Delta, @_i\bot}
                }
                {\Gamma \vdash \Delta, @_i\lnot\varphi}
            }
        \end{center}
    \item ($\land$L)
        \begin{center}\scalebox{\thescalefactor}
            {
                \prftree[r]{\footnotesize($\to$L)}
                {
                    \prftree[r]{\footnotesize($\to$R)}
                    {
                        \prftree[r]{\footnotesize($\to$R)}
                        {
                            \prftree[r]{\footnotesize(WR)}
                            {@_i\psi,@_i\varphi, \Gamma \vdash \Delta}
                            {@_i\psi,@_i\varphi, \Gamma \vdash \Delta,@_i\bot}
                        }
                        {@_i\varphi, \Gamma \vdash \Delta, @_i(\psi \to \bot)}
                    }
                    {\Gamma \vdash \Delta, @_i(\varphi \to (\psi \to \bot))}
                }
                {
                    \prfbyaxiom{\footnotesize($\bot$)}
                    {@_i\bot, \Gamma \vdash \Delta}
                }
                {@_i(\varphi \land \psi), \Gamma \vdash \Delta}
            }
        \end{center}
    \item ($\land$R)
        \begin{center}\scalebox{\thescalefactor}
            {
                \prftree[r]{\footnotesize($\to$R)}
                {
                    \prftree[r]{\footnotesize(WR,$\to$L)}
                    {
                        \prfassumption
                        {\Gamma \vdash \Delta, @_i\varphi}
                    }
                    {
                        \prftree[r]{\footnotesize($\to$L)}
                        {
                            \prfassumption
                            {\Gamma \vdash \Delta, @_i\psi}
                        }
                        {
                            \prfbyaxiom{\footnotesize($\bot$)}
                            {@_i\bot,\Gamma \vdash \Delta}
                        }
                        {@_i(\psi \to \bot), \Gamma \vdash \Delta}
                    }
                    {@_i(\varphi \to (\psi \to \bot)), \Gamma \vdash \Delta, @_i\bot}
                }
                {\Gamma \vdash \Delta, @_i(\varphi \land \psi)}
            }

        \end{center}
    \item ($\liff$L)
            \begin{center}\scalebox{\thescalefactor}
                {
                    \prftree[r]{\footnotesize($\to$L)}
                    {
                        \prftree[r]{\footnotesize($\to$R)}
                        {
                            \prftree[r]{\footnotesize($\to$R)}
                            {
                                \prftree[r]{\footnotesize(WR, $\to$L)}
                                {
                                    \prftree[r]{\footnotesize($\to$L)}
                                    {
                                        \prfbyaxiom{\footnotesize(Ax)}
                                        {@_i\varphi, \Gamma \vdash \Delta, @_i\varphi}
                                    }
                                    {
                                        \prfassumption
                                        {@_i\psi, @_i\varphi, \Gamma \vdash \Delta}
                                    }
                                    {@_i\varphi,@_i(\varphi \to \psi), \Gamma \vdash \Delta}
                                }
                                {
                                    \prftree[r]{\footnotesize($\to$L)}
                                    {
                                        \prfassumption
                                        {\Gamma \vdash \Delta, @_i\psi, @_i\varphi}
                                    }
                                    {
                                        \prfbyaxiom{\footnotesize(Ax)}
                                        {@_i\psi, \Gamma \vdash \Delta, @_i\psi}
                                    }
                                    {@_i(\varphi \to \psi), \Gamma \vdash \Delta, @_i\psi}
                                }
                                {\rr{@_i(\psi \to \varphi)},@_i(\varphi \to \psi), \Gamma \vdash \Delta, @_i\bot}
                            }
                            {@_i(\varphi \to \psi), \Gamma \vdash \Delta, @_i((\psi \to \varphi)\to \bot)}
                        }
                        {\Gamma \vdash \Delta, @_i((\varphi \to \psi)\to ((\psi \to \varphi)\to \bot))}
                    }
                    {
                       \prfbyaxiom{($\bot$)}
                       {@_i\bot, \Gamma \vdash \Delta}
                    }
                    {@_i(\varphi \liff \psi), \Gamma \vdash \Delta}
                }
            \end{center}

    \item ($\liff$R)
        \begin{center}\scalebox{\thescalefactor}
            {
                \prftree[r]{\footnotesize($\to$R)}
                {
                    \prftree[r]{\footnotesize(WL,$\to$L)}
                    {
                        \prftree[r]{\footnotesize($\to$R)}
                        {@_i\varphi, \Gamma \vdash \Delta, @_i\psi}
                        {\Gamma \vdash \Delta, @_i(\varphi \to \psi)}
                    }
                    {
                        \prftree[r]{\footnotesize($\to$L)}
                        {
                            \prftree[r]{\footnotesize($\to$R)}
                            {@_i\psi, \Gamma \vdash \Delta, @_i\varphi}
                            {\Gamma \vdash \Delta, @_i(\psi \to \varphi)}
                        }
                        {
                            \prfbyaxiom{\footnotesize($\bot$)}
                            {@_i\bot, \Gamma \vdash \Delta}
                        }
                        {@_i((\psi \to \varphi)\to \bot), \Gamma \vdash \Delta}
                    }
                    {@_i((\varphi \to \psi) \to ((\psi \to \varphi)\to \bot)), \Gamma \vdash \Delta, @_i\bot}
                }
                {\Gamma \vdash \Delta, @_i(\varphi \liff \psi)}
            }
        \end{center}
    \item (MP)
        \begin{center}\scalebox{\thescalefactor}
            {
                \prftree[r]{\footnotesize(Cut)}
                {\Gamma \vdash \Delta, \cc{@_i\varphi}}
                {
                    \prftree[r]{\footnotesize($\inv{{\to}\text{R}}$)}
                    {\Gamma' \vdash \Delta', @_i(\varphi \to \psi)}
                    {\cc{@_i\varphi},\Gamma' \vdash \Delta', @_i\psi}
                }
                {\Gamma,\Gamma' \vdash \Delta, \Delta', @_i\psi}
            }
        \end{center}
    \item ($@$B)
        \begin{center}\scalebox{\thescalefactor}
            {
                \prftree[r]{\footnotesize(@T)}
                {
                    \prftree[r]{\footnotesize(@5)}
                    {
                        \prftree[r]{\footnotesize(WL)}
                        {
                            \prfassumption
                            {@_ji, \Gamma \vdash \Delta}
                        }
                        {@_ji, @_ij, @_ii, \Gamma \vdash \Delta}
                    }
                    {@_ij, @_ii, \Gamma \vdash \Delta}
                }
                {@_ij, \Gamma \vdash \Delta}
            }
        \end{center}
    \item (@$\cmpr$L)
        \begin{center}\scalebox{\thescalefactor}
            {
                \prftree[r]{\footnotesize($\tup{\cmpr}$L)}
                {
                    \prftree[r]{\footnotesize($@$L)}
                    {
                        \prftree[r]{\footnotesize($@$B)}
                        {
                            \prftree[r]{\footnotesize(S$_3$)}
                            {
                                \prftree[r]{\footnotesize(WL,S$_3$)}
                                {
                                    \prftree[r]{\footnotesize(WL)}
                                    {\tagg{\cmpr}{i}{j},\Gamma \vdash \Delta}
                                    {\tagg{\cmpr}{i}{j}, @_bj,\tagg{\cmpr}{i}{b},\Gamma \vdash \Delta}
                                }
                                {\tagg{\cmpr}{i}{b},@_ai,@_bj, \tagg{\cmpr}{a}{b}, \Gamma \vdash \Delta}
                            }
                            {@_ai,@_bj, \tagg{\cmpr}{a}{b}, \Gamma \vdash \Delta}
                        }
                        {\rr{@_ia},\rr{@_jb}, \tagg{\cmpr}{a}{b}, \Gamma \vdash \Delta}
                    }
                    {@_k\tup{i{:}}a, @_k\tup{j{:}}b, \tagg{\cmpr}{a}{b},\Gamma \vdash \Delta}
                }
                {@_k\tagg{\cmpr}{i}{j},\Gamma \vdash \Delta}

            }
        \end{center}
    \item (@$\cmpr$R)
        \begin{center}\scalebox{\thescalefactor}
            {
                \prftree[r]{\footnotesize(Cut)}
                {
                    \prftree[r]{\footnotesize(WR)}
                    {\Gamma \vdash \Delta,\tagg{\cmpr}{i}{j}}
                    {\Gamma \vdash \Delta,@_k\tagg{\cmpr}{i}{j}, \cc{\tagg{\cmpr}{i}{j}}}
                }
                {
                    \prftree[r]{\footnotesize{($@$T)}}
                    {
                        \prftree[r]{\footnotesize($\inv{@\text{L}}$)}
                        {
                            \prftree[r]{\footnotesize($\tup{\cmpr}$R)}
                            {
                                \prfbyaxiom{\footnotesize(Ax)}
                                {@_k\tup{i{:}}i, @_k\tup{j{:}}j,\tagg{\cmpr}{i}{j},\Gamma \vdash \Delta,@_k\tagg{\cmpr}{i}{j}, \tagg{\cmpr}{i}{j}}
                            }
                            {@_k\tup{i{:}}i, @_k\tup{j{:}}j,\tagg{\cmpr}{i}{j},\Gamma \vdash \Delta,@_k\tagg{\cmpr}{i}{j}}
                        }
                        {@_ii, @_jj, \tagg{\cmpr}{i}{j},\Gamma \vdash \Delta,@_k\tagg{\cmpr}{i}{j}}
                    }
                    {\cc{\tagg{\cmpr}{i}{j}},\Gamma \vdash \Delta,@_k\tagg{\cmpr}{i}{j}}
                }
                {\Gamma \vdash \Delta,@_k\tagg{\cmpr}{i}{j}}
            }
        \end{center}
    \item ($\tup{\cmpr}$B)
        \begin{itemize}
            \item $\cmpr$ is $=_{\compc}$
            \begin{center}\scalebox{\thescalefactor}
                {
                    \prftree[r]{\footnotesize(EqT)}
                    {
                        \prftree[r]{\footnotesize(Eq5)}
                        {
                            \prftree[r]{\footnotesize(WL)}
                            {
                                \prfassumption
                                {\tup{j{:} =_{\compc} i{:}}, \Gamma \vdash \Delta}
                            }
                            {\tup{j{:} =_{\compc} i{:}}, \tup{i{:} =_{\compc} j{:}}, \tup{i{:} =_{\compc} i{:}}, \Gamma \vdash \Delta}
                        }
                        {\tup{i{:} =_{\compc} j{:}}, \tup{i{:} =_{\compc} i{:}}, \Gamma \vdash \Delta}
                    }
                    {\tup{i{:} =_{\compc} j{:}}, \Gamma \vdash \Delta}
    
                }
            \end{center}
            \item $\cmpr$ is $\neq_{\compc}$
            \begin{center}\scalebox{\thescalefactor}
                {    
                    \prftree[r]{\footnotesize(NEqL)}
                    {
                        \prftree[r]{\footnotesize(Cut)}
                        {
                            \prftree[r]{\footnotesize($\inv{\text{NeqL}}$)}
                            {\tup{j{:} \neq_{\compc} i{:}}, \Gamma \vdash \Delta}
                            {\Gamma \vdash \Delta, \tup{j{:} =_{\compc} i{:}}}
                        }
                        {
                            \prftree[r]{\footnotesize($\tup{=_{\compc}}$B)}
                            {
                                \prfbyaxiom{\footnotesize(Ax)}
                                {\tup{i{:} =_{\compc} j{:}}, \Gamma \vdash \Delta, \tup{i{:} =_{\compc} j{:}}}
                            }
                            {\tup{j{:} =_{\compc} i{:}}, \Gamma \vdash \Delta, \tup{i{:} =_{\compc} j{:}}}
                        }
                        {\Gamma \vdash \Delta, \tup{i{:} =_{\compc} j{:}}}
                    }
                    {\tup{i{:} \neq_{\compc} j{:}}, \Gamma \vdash \Delta}
                }
            \end{center}
        \end{itemize}
\end{itemize}
\begin{itemize} \item ($\tup{\sigma}$L).
        Let $\sigma \in \{\dowa, \psi?, k{:}\}$.
        The following is a derived rule:

        \begin{center}\scalebox{\thescalefactor}
            {
                \prftree[r]{\footnotesize$(\tup{\sigma}\text{L})^{\dagger}$}
                {
                    @_i\tup{\sigma}j, @_j\varphi, \Gamma \vdash \Delta
                }
                {
                    @_i\tup{\sigma}\varphi, \Gamma \vdash \Delta
                }
            }

            \smallskip
            \scalebox{\thescalefactor}
            {$^{\dagger}$ were $j$ is not in the conclusion}
        \end{center}

        The proof is by a case analysis.

        \begin{itemize}
            \item $\sigma = \dowa$ is ($\tup{\dowa}$L).

            \item $\sigma = k{:}$ is
        
            \begin{center}\scalebox{\thescalefactor}
            {
                {\prftree[r]{\footnotesize(@L)}
                {
                    \prftree[r]{\footnotesize(Nom)}
                    {
                        \prftree[r]{\footnotesize(S$_1$)}
                        {
                            \prftree[r]{\footnotesize(WL)}
                            {
                                \prftree[r]{\footnotesize$(\inv{\text{@L}})$}
                                {
                                    @_i\tup{k{:}}j, @_j\varphi, \Gamma \vdash \Delta
                                }
                                {@_kj, @_j\varphi, \Gamma \vdash \Delta}
                            }
                            {@_j\varphi, @_kj, @_k\varphi, \Gamma \vdash \Delta}
                        }
                        {@_kj, @_k\varphi, \Gamma \vdash \Delta}
                    }
                    {@_k\varphi, \Gamma \vdash \Delta}
                }
                {@_i\tup{k{:}}\varphi, \Gamma \vdash \Delta}}
            }
            \end{center}

            \item $\sigma = \psi?$ is
        
            \begin{center}\scalebox{\thescalefactor}
            {
                \prftree[r]{\footnotesize$({\land}\text{L})$}
                {
                    \prftree[r]{\footnotesize(Nom)}
                    {
                        \prftree[r]{\footnotesize(S$_1$)}
                        {
                            \prftree[r]{\footnotesize(WL)}
                            {
                                \prftree[r]{\footnotesize$(\inv{{\land}\text{L}})$}
                                {
                                    @_i\tup{\psi}j, @_j\varphi, \Gamma \vdash \Delta
                                }
                                {@_j\varphi, @_ij, @_i\psi, \Gamma \vdash \Delta}
                            }
                            {@_j\varphi, @_ij, @_i\psi, @_i\varphi, \Gamma \vdash \Delta}
                        }
                        {@_ij, @_i\psi, @_i\varphi, \Gamma \vdash \Delta}
                    }
                    {@_i\psi, @_i\varphi, \Gamma \vdash \Delta}
                }
                {@_i\tup{\psi?}\varphi, \Gamma \vdash \Delta}
            }
            \end{center}
    \end{itemize}
\end{itemize}

\begin{itemize}
    \item ($\tup{\sigma}$R).
        Let $\sigma \in \{\dowa, \psi?, k{:}\}$.
        The following is a derived rule:

        \begin{center}\scalebox{\thescalefactor}
            {
                \prftree[r]{\footnotesize$(\tup{\sigma}\text{R})$}
                {
                    @_i\tup{\sigma}j, \Gamma \vdash \Delta, @_i\tup{\sigma}\varphi, @_j\varphi
                }
                {
                    @_i\tup{\sigma}j, \Gamma \vdash \Delta, @_i\tup{\sigma}\varphi
                }
            }
        \end{center}

        The proof is by a case analysis.

        \begin{itemize}
            \item $\sigma = \dowa$ is ($\tup{\dowa}$L).

            \item $\sigma = k{:}$ is
        
            \begin{center}\scalebox{\thescalefactor}
            {
                \prftree[r]{\footnotesize(Cut)}
                {
                    \prfassumption
                    {@_i\tup{k{:}}j, \Gamma \vdash \Delta, @_i\tup{k{:}}\varphi, \cc{@_j\varphi}}
                }
                {
                    \prftree[r]{\footnotesize(@L,@R)}
                    {
                        \prftree[r]{\footnotesize(@B)}
                        {
                            \prftree[r]{\footnotesize(S$_1$)}
                            {
                                \prfbyaxiom{\footnotesize(Ax)}
                                {@_k\varphi, @_j\varphi, @_jk, \Gamma \vdash @_k\varphi}
                            }
                            {@_j\varphi, @_jk, \Gamma \vdash @_k\varphi}
                        }
                        {@_j\varphi, @_kj, \Gamma \vdash @_k\varphi}
                    }
                    {\cc{@_j\varphi}, @_i\tup{k{:}}j, \Gamma \vdash \Delta, @_i\tup{k{:}}\varphi}
                }
                {
                    @_i\tup{k{:}}j, \Gamma \vdash \Delta, @_i\tup{k{:}}\varphi
                }
            }
            \end{center}

            \item $\sigma = \psi?$ is
        
            \begin{center}\scalebox{\thescalefactor}
                {
                    \prftree[r,l]{\footnotesize(${\land}\text{L}$,${\land}\text{R}$)}{$\aderivation =$}
                    {
                        \prfbyaxiom{\footnotesize(Ax)}
                        {@_j\varphi, @_i\psi, @_ij, \Gamma \vdash \Delta, @_i\psi}
                    }
                    {
                        \prftree[r]{\footnotesize(S$_1$)}
                        {
                            \prfbyaxiom{\footnotesize(Ax)}
                            {@_i\varphi, @_j\varphi, @_i\psi, @_ij, \Gamma \vdash \Delta, @_i\varphi}
                        }
                        {@_j\varphi, @_i\psi, @_ij, \Gamma \vdash \Delta, @_i\varphi}
                    }
                    {\cc{@_j\varphi}, @_i\tup{\psi?}j, \Gamma \vdash \Delta, @_i\tup{\psi?}\varphi}
                }
            \end{center}

            \vspace{0.5cm}
            \begin{center}\scalebox{\thescalefactor}
            {
                \prftree[r]{\footnotesize(Cut)}
                {
                    \prfassumption
                    {@_i\tup{\psi?}j, \Gamma \vdash \Delta, @_i\tup{\psi?}\varphi, \cc{@_j\varphi}}
                }
                {
                   \prfassumption{\aderivation}
                }
                {
                    @_i\tup{\psi?}j, \Gamma \vdash \Delta, @_i\tup{\psi?}\varphi
                }
            }
            \end{center}
    \end{itemize}
\end{itemize}

\begin{itemize}\item ($\tup{\alpha}$L).
    The proof is by induction the definition of $\tup{\alpha}\varphi$.

    \begin{itemize}
        \item Base Case: $\alpha = \sigma$ is ($\tup{\sigma}$L).

        \item Inductive Step: {$\alpha=\dowa\alpha$}
        
            \begin{center}\scalebox{\thescalefactor}
            {
                {\prftree[r]{\footnotesize($\tup{\sigma}$L)}
                {
                    \prftree[r]{\footnotesize(IH)}
                    {
                        \prftree[r]{\footnotesize(Cut)}
                        {
                            \prftree[r]{\footnotesize($\tup{\sigma}$R)}
                            {
                                \prfbyaxiom{\footnotesize(Ax)}
                                {@_i\tup{\sigma}j, @_j\tup{\alpha}a,\Gamma \vdash \Delta,\cc{@_i\tup{\sigma\alpha}a}, @_k\tup{\alpha}a}
                            }
                            {@_i\tup{\sigma}j, @_j\tup{\alpha}a,\Gamma \vdash \Delta,\cc{@_i\tup{\sigma\alpha}a}}
                        }
                        {
                            \prfassumption
                            {\cc{@_i\tup{\sigma\alpha}a}, @_a\varphi, \Gamma \vdash \Delta}
                        }
                        {@_i\tup{\sigma}j, @_j\tup{\alpha}a, @_a\varphi, \Gamma \vdash \Delta}
                    }
                    {@_i\tup{\sigma}j, \rr{@_j\tup{\alpha}\varphi}, \Gamma \vdash \Delta}
                }
                {@_i\tup{\sigma\alpha}\psi, \Gamma \vdash \Delta}}
            }
            \end{center}
    \end{itemize}
\end{itemize}

\begin{itemize}\item ($\tup{\alpha}$R).
    The proof is by induction the definition of $\tup{\alpha}\varphi$.

    \begin{itemize}

        \item Base Case: $\alpha = \sigma$ is ($\tup{\sigma}$R).

        \item Inductive Step: {$\alpha=\sigma\alpha$}

            \begin{center}\scalebox{\thescalefactor}
                {
                    \prftree[r,l]{\footnotesize($\tup{\sigma}$L)}{$\aderivation =$}
                    {
                        \prftree[r]{\footnotesize($\tup{\sigma}$R)}
                        {
                            \prftree[r]{\footnotesize(IH)}
                            {
                                \prfbyaxiom{\footnotesize(Ax)}
                                {@_i\tup{\sigma}j, @_j\tup{\alpha}a, @_a\varphi, \Gamma \vdash \Delta@_i\tup{\sigma\alpha}\varphi,@_j\tup{\alpha}\varphi,@_j\tup{\alpha}a, @_a\varphi}
                            }
                            {@_i\tup{\sigma}j, \rr{@_j\tup{\alpha}a}, @_a\varphi, \Gamma \vdash \Delta @_i\tup{\sigma\alpha}\varphi,\rr{@_j\tup{\alpha}\varphi}}
                        }
                        {@_i\tup{\sigma}j, @_j\tup{\alpha}a, @_a\varphi, \Gamma \vdash \Delta @_i\tup{\sigma\alpha}\varphi}
                    }
                    {\cc{@_a\varphi},@_i\tup{\sigma\alpha}a, \Gamma \vdash \Delta @_i\tup{\sigma\alpha}\varphi}
                }
            \end{center}
            \vspace{0.5cm}
            \begin{center}\scalebox{\thescalefactor}
            {
                \prftree[r]{\footnotesize(Cut)}
                {@_i\tup{\sigma\alpha}a, \Gamma \vdash \Delta @_i\tup{\sigma\alpha}\varphi, \cc{@_a\varphi}}
                {
                    \prfassumption{\aderivation}
                }
                {@_i\tup{\sigma\alpha}a, \Gamma \vdash \Delta @_i\tup{\sigma\alpha}\varphi}
            }
            \end{center}
    \end{itemize}
\end{itemize}

\begin{itemize}
    \item (Ax).
    
    \begin{itemize}

        \item Base Case:
            If $\varphi = \bot$, the result is obtained using ($\bot$).
            If $\varphi \in \{p,i\}$, the result is obtained using the basic (Ax).

        \item Inductive Step:
        
        \begin{itemize}
            \item {$\varphi = \varphi \to \psi$}
                \begin{center}\scalebox{\thescalefactor}
                {
                    \prftree[r]{\footnotesize($\to$R)}
                    {
                        \prftree[r]{\footnotesize($\to$L)}
                        {
                            \prfbyaxiom{\footnotesize(IH)}
                            {@_i\varphi, \Gamma \vdash \Delta, @_i\psi, @_i\varphi}
                        }
                        {
                            \prfbyaxiom{\footnotesize(IH)}
                            {@_i\varphi, @_i\psi, \Gamma \vdash \Delta, @_i\psi}
                        }
                        {@_i\varphi, @_i(\varphi \to \psi), \Gamma \vdash \Delta, @_i\psi}
                    }
                    {@_i(\varphi \to \psi), \Gamma \vdash \Delta, @_i(\varphi \to \psi)}
                }
                \end{center}
            \item {$\varphi = \tup{\sigma}\varphi$}
                \begin{center}\scalebox{\thescalefactor}
                {
                    \prftree[r]{\footnotesize($\tup{\sigma}$L)}
                    {
                        \prftree[r]{\footnotesize($\tup{\sigma}$R)}
                        {
                            \prfbyaxiom{\footnotesize(IH)}
                            {@_i\tup{\sigma}j, @_j\varphi, \Gamma \vdash \Delta,@_i\tup{\sigma}\varphi, @_j\varphi}
                        }
                        {@_i\tup{\sigma}j, @_j\varphi, \Gamma \vdash \Delta,@_i\tup{\sigma}\varphi}
                    }
                    {@_i\tup{\sigma}\varphi, \Gamma \vdash \Delta, @_i\tup{\sigma}\varphi}
                }
                \end{center}
            \item {$\varphi = @_j\varphi$}
                \begin{center}\scalebox{\thescalefactor}
                {
                    \prftree[r]{\footnotesize($@$L, $@$R)}
                    {
                        \prfbyaxiom{\footnotesize(IH)}
                        {@_j\varphi, \Gamma \vdash \Delta, @_j\varphi}
                    }
                    {@_i@_j\varphi, \Gamma \vdash \Delta, @_i@_j\varphi}
                }
                \end{center}
            \item {$\varphi = \tup{\alpha \cmpr \beta}$}
                \begin{center}\scalebox{\thescalefactor}
                {
                    \prftree[r]{\footnotesize($\tup{\cmpr}$L)}
                    {
                        \prftree[r]{\footnotesize($\tup{\cmpr}$R)}
                        {
                            \prfbyaxiom{\footnotesize(Ax)}
                            {@_i\tup{\alpha}a, @_i\tup{\beta}b,\tagg{\cmpr}{a}{b}, \Gamma \vdash \Delta,@_i\tup{\alpha \cmpr\beta},\tagg{\cmpr}{a}{b}}
                        }
                        {@_i\tup{\alpha}a, @_i\tup{\beta}b,\tagg{\cmpr}{a}{b}, \Gamma \vdash \Delta,@_i\tup{\alpha \cmpr\beta}}
                    }
                    {@_i\tup{\alpha \cmpr\beta}, \Gamma \vdash \Delta,@_i\tup{\alpha \cmpr\beta}}
                }
                \end{center}

        \end{itemize}
    \end{itemize}

    \item (S$_1$)

    \begin{itemize}
        \item Base Case:
            If $\varphi = j$, the result is obtained using (@5).
            If $\varphi \in \{p,\bot\}$, the result is obtained using the basic (S$_1$).
        \item Inductive Step:
        
        \begin{itemize}
            \item {$\varphi = \varphi \to \psi$}

                \begin{center}\scalebox{\thescalefactor}
                    {
                        \prftree[r,l]{\footnotesize($\to$R)}{$\aderivation =$}
                        {
                            \prftree[r]{\footnotesize($\to$L,WL)}
                            {
                                \prftree[r]{\footnotesize(IH)}
                                {
                                    \prfbyaxiom{\footnotesize(Ax)}
                                    {@_j\psi,@_ij,@_i\psi, \Gamma \vdash \Delta,@_j\psi}
                                }
                                {@_ij,@_i\psi, \Gamma \vdash \Delta,@_j\psi}
                            }
                            {
                                \prftree[r]{\footnotesize($@$B)}
                                {
                                    \prftree[r]{\footnotesize(IH)}
                                    {
                                        \prfbyaxiom{\footnotesize(Ax)}
                                        {@_i\varphi,@_j\varphi,@_ji, \Gamma \vdash \Delta, @_i\varphi}
                                    }
                                    {@_j\varphi,@_ji, \Gamma \vdash \Delta, @_i\varphi}
                                }
                                {@_j\varphi,@_ij, \Gamma \vdash \Delta, @_i\varphi}
                            }
                            {@_j\varphi,@_ij, @_i(\varphi \to \psi), \Gamma \vdash \Delta,@_j\psi }
                        }
                        {@_ij, @_i(\varphi \to \psi), \Gamma \vdash \Delta,\cc{@_j(\varphi \to \psi)}}
                    }
                \end{center}
                \vspace{0.5cm}
                \begin{center}\scalebox{\thescalefactor}
                {
                    \prftree[r]{\footnotesize(Cut)}
                    {
                        \prfassumption{\aderivation}
                    }
                    {
                        \prfassumption{\cc{@_j(\varphi \to \psi)},@_ij, @_i(\varphi \to \psi), \Gamma \vdash \Delta}
                    }
                    {@_ij, @_i(\varphi \to \psi), \Gamma \vdash \Delta}
                }
                \end{center}
            \item {$\varphi = \tup{\sigma}\varphi$}
                \begin{center}\scalebox{\thescalefactor}
                    {
                        \prftree[r,l]{\footnotesize($\tup{\sigma}$L)}{$\aderivation =$}
                        {
                            \prftree[r]{\footnotesize(S$_1$)}
                            {
                                \prftree[r]{\footnotesize($\tup{\sigma}$R)}
                                {
                                    \prfbyaxiom{\footnotesize(Ax)}
                                    {\taag{\sigma}{j}{k},\taag{\sigma}{i}{k}, @_k\varphi,@_ij, \Gamma \vdash \Delta,@_j\tup{\sigma}\varphi, @_k\varphi}
                                }
                                {\taag{\sigma}{j}{k},\taag{\sigma}{i}{k}, @_k\varphi,@_ij, \Gamma \vdash \Delta,@_j\tup{\sigma}\varphi}
                            }
                            {\taag{\sigma}{i}{k}, @_k\varphi,@_ij, \Gamma \vdash \Delta,@_j\tup{\sigma}\varphi}
                        }
                        {@_ij, @_i\tup{\sigma}\varphi, \Gamma \vdash \Delta,\cc{@_j\tup{\sigma}\varphi} }
                    }
                \end{center}
                \begin{center}\scalebox{\thescalefactor}
                {
                    \prftree[r]{\footnotesize(Cut)}
                    {
                        \prfassumption{\aderivation}
                    }
                    {
                        \prfassumption
                        {\cc{@_j\tup{\sigma}\varphi},@_ij, @_i\tup{\sigma}\varphi, \Gamma \vdash \Delta}
                    }
                    {@_ij, @_i\tup{\sigma}\varphi, \Gamma \vdash \Delta}
                }
                \end{center}
            \item {$\varphi = @_j\varphi$}
                \begin{center}\scalebox{\thescalefactor}
                {
                    \prftree[r]{\footnotesize(Cut)}
                    {
                        \prftree[r]{\footnotesize($@$L,$@$R)}
                        {
                            \prfbyaxiom{\footnotesize(Ax)}
                            {@_ij, @_k\varphi, \Gamma \vdash \Delta, @_k\varphi}
                        }
                        {@_ij, @_i@_k\varphi, \Gamma \vdash \Delta, \cc{@_j@_k\varphi}}
                    }
                    {
                        \prfassumption
                        {\cc{@_j@_k\varphi},@_ij, @_i@_k\varphi, \Gamma \vdash \Delta}
                    }
                    {@_ij, @_i@_k\varphi, \Gamma \vdash \Delta}
                }
                \end{center}
            \item {$\varphi = \tup{\alpha \cmpr \beta}$}
                \begin{center}\scalebox{\thescalefactor}
                    {
                        \prftree[r,l]{\footnotesize($\tup{\cmpr}$L)}{$\aderivation = $}
                        {   
                            \prftree[r]{\footnotesize(IH)}
                            {
                                \prftree[r]{\footnotesize($\tup{\cmpr}$R)}
                                {
                                    \prfbyaxiom{\footnotesize(Ax)}
                                    {@_j\tup{\alpha}a, @_j\tup{\beta}b,@_i\tup{\alpha}a, @_i\tup{\beta}b, \tagg{\cmpr}{a}{b}, @_ij, \Gamma \vdash \Delta,@_j\tup{\alpha \cmpr \beta},\tagg{\cmpr}{a}{b}}
                                }
                                {@_j\tup{\alpha}a, @_j\tup{\beta}b,@_i\tup{\alpha}a, @_i\tup{\beta}b, \tagg{\cmpr}{a}{b}, @_ij, \Gamma \vdash \Delta,@_j\tup{\alpha \cmpr \beta}}
                            }
                            {@_i\tup{\alpha}a, @_i\tup{\beta}b, \tagg{\cmpr}{a}{b}, @_ij, \Gamma \vdash \Delta,@_j\tup{\alpha \cmpr \beta}}
                        }
                        {@_ij, @_i\tup{\alpha \cmpr \beta}, \Gamma \vdash \Delta, \cc{@_j\tup{\alpha \cmpr \beta}}}
                    }
                \end{center}

                \hspace{0.5cm}

                \begin{center}\scalebox{\thescalefactor}
                {
                    \prftree[r]{\footnotesize(Cut)}
                    {
                        \prfassumption{\aderivation}
                    }
                    {
                        \prfassumption
                        {\cc{@_j\tup{\alpha \cmpr \beta}},@_ij, @_i\tup{\alpha \cmpr \beta}, \Gamma \vdash \Delta}
                    }
                    {@_ij, @_i\tup{\alpha \cmpr \beta}, \Gamma \vdash \Delta}
                }
                \end{center}
        \end{itemize}
    \end{itemize}

    \item (S$_2$) 
    
    \begin{itemize}
        \item Base Case
        
        \begin{itemize}
            \item $\alpha = \dowa$ corresponds to the (S$_2$).
            \item $\alpha = k{:}$
                \begin{center}\scalebox{\thescalefactor}
                {
                    \prftree[r]{\footnotesize($@$L)}
                    {
                        \prftree[r]{\footnotesize($@$B)}
                        {
                            \prftree[r]{\footnotesize($@$5)}
                            {
                                \prftree[r]{\footnotesize($@$B)}
                                {
                                    \prftree[r]{\footnotesize($\inv{{@}\text{L}}$)}
                                    {@_i\tup{a{:}}k, @_jk, @_i\tup{a{:}}j, \Gamma \vdash \Delta}
                                    {@_ak, @_jk, @_aj, \Gamma \vdash \Delta}
                                }
                                {@_ka,@_jk, @_ja, \Gamma \vdash \Delta}
                            }
                            {@_jk, @_ja, \Gamma \vdash \Delta}
                        }
                        {@_jk, @_aj, \Gamma \vdash \Delta}
                    }
                    {@_jk, @_i\tup{a{:}}j, \Gamma \vdash \Delta}
                }
                \end{center}
            \item $\alpha = \varphi?$
                \begin{center}\scalebox{\thescalefactor}
                {
                    \prftree[r]{\footnotesize($\land$L)}
                    {
                        \prftree[r]{\footnotesize($@$B)}
                        {
                            \prftree[r]{\footnotesize($@$5)}
                            {
                                \prftree[r]{\footnotesize($@$B)}
                                {
                                    \prftree[r]{\footnotesize($\inv{\land\text{L}}$)}
                                    {@_i\tup{\varphi?}k, @_jk, @_i\tup{\varphi?}j, \Gamma \vdash \Delta}
                                    {@_ik, @_jk, @_i\varphi, @_ij, \Gamma \vdash \Delta}
                                }
                                {@_ki,@_jk, @_i\varphi, @_ji, \Gamma \vdash \Delta }
                            }
                            {@_jk, @_i\varphi, @_ji, \Gamma \vdash \Delta}
                        }
                        {@_jk, @_i\varphi, @_ij, \Gamma \vdash \Delta}
                    }
                    {@_jk, @_i\tup{\varphi?}j, \Gamma \vdash \Delta}
                }
                \end{center}
        \end{itemize}

        \item Inductive Step

        \begin{itemize}
            \item {$\alpha=k{:}\alpha$}
                \begin{center}\scalebox{\thescalefactor}
                {
                    \prftree[r]{\footnotesize($@$L)}
                    {
                        \prftree[r]{\footnotesize(IH)}
                        {
                            \prftree[r]{\footnotesize($\inv{{@}\text{L}}$)}
                            {@_i\tup{a{:}\alpha}k, @_jk, @_i\tup{a{:}\alpha}j, \Gamma \vdash \Delta}
                            {@_a\tup{\alpha}k, @_jk, @_a\tup{\alpha}j, \Gamma \vdash \Delta}
                        }
                        {@_jk, @_a\tup{\alpha}j, \Gamma \vdash \Delta}
                    }
                    {@_jk, @_i\tup{a{:}\alpha}j, \Gamma \vdash \Delta}
                }
                \end{center}
            \item {$\alpha=\varphi?\alpha$}
                \begin{center}\scalebox{\thescalefactor}
                {
                    \prftree[r]{\footnotesize($\land$L)}
                    {
                        \prftree[r]{\footnotesize(IH)}
                        {
                            \prftree[r]{\footnotesize($\inv{{\land}\text{L}}$)}
                            {@_i\tup{\varphi?\alpha}k, @_jk,  @_i\tup{\varphi?\alpha}j, \Gamma \vdash \Delta}
                            {@_i\tup{\alpha}k, @_jk, @_i\varphi, @_i\tup{\alpha}j, \Gamma \vdash \Delta}
                        }
                        {@_jk, @_i\varphi, @_i\tup{\alpha}j, \Gamma \vdash \Delta}
                    }
                    {@_jk, @_i\tup{\varphi?\alpha}j, \Gamma \vdash \Delta}
                }
                \end{center}
            \item {$\alpha=\dowa\alpha$}
                \begin{center}\scalebox{\thescalefactor}
                {
                    {\prftree[r]{\footnotesize($\tup{\dowa}$L)}
                    {
                        \prftree[r]{\footnotesize(IH)}
                        {
                            \prftree[r]{\footnotesize($\inv{\tup{\dowa}\text{L}}$)}
                            {
                                \prfassumption
                                {@_i\tup{\dowa\alpha}k, @_jk, @_i\tup{\dowa\alpha}j, \Gamma \vdash \Delta}
                            }
                            {@_a\tup{\alpha}k, @_jk, @_i\tup{\dowa}a, @_a\tup{\alpha}j, \Gamma \vdash \Delta}
                        }
                        {@_jk, @_i\tup{\dowa}a, @_a\tup{\alpha}j, \Gamma \vdash \Delta}
                    }
                    {@_jk, @_i\tup{\dowa\alpha}j, \Gamma \vdash \Delta}}
                }
                \end{center}
        \end{itemize}
    \end{itemize}
    
    \item (S$_3$)

    \begin{center}\scalebox{\thescalefactor}
    {
        \prftree[r]{\footnotesize(Cut)}
        {
            \prftree[r]{\footnotesize(NEqR)}
            {
                \prftree[r]{\footnotesize(NEqL,@B)}
                {
                    \prftree[r]{\footnotesize(S$_3$)}
                    {
                        \prfbyaxiom{\footnotesize(Ax)}
                        {\tagg{=}{i}{k}, \tagg{=}{j}{k}, @_ji, \Gamma \vdash \Delta, \tagg{=}{i}{k}}
                    }
                    {\tagg{=}{j}{k}, @_ji, \Gamma \vdash \Delta, \tagg{=}{i}{k}}
                }
                {\tagg{=}{j}{k}, @_ij, \tagg{\neq}{i}{k}, \Gamma \vdash \Delta}
            }
            {@_ij, \tagg{\neq}{i}{k}, \Gamma \vdash \Delta, \cc{\tagg{\neq}{j}{k}}}
        }
        {
            \prfassumption
            {\cc{\tagg{\neq}{j}{k}}, @_ij, \tagg{\neq}{i}{k}, \Gamma \vdash \Delta}
        }
        {@_ij, \tagg{\neq}{i}{k}, \Gamma \vdash \Delta}
    }
    \end{center}
\end{itemize}

\section{Completeness}
\label{sec:arVix-completeness}


\begin{lemma}\label[lemma]{lemma:derivability}
    Let $\vdash_{\hilbert}^n \varphi$ indicate that $\varphi$ is a theorem in $\hilbert$, with a derivation of length $n$.
    In addition, let $i$ be a nominal not in $\varphi$.
    It follows that, for any $n$, $\vdash_{\hilbert}^n \varphi$ implies $\vdash_{\gentzen} @_i\varphi$ (i.e, $\vdash @_i\varphi$ is provable in $\gentzen$). 
\end{lemma}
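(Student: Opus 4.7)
The plan is to argue by strong induction on $n$, the length of the Hilbert deduction of $\varphi$ in $\hilbert$. The goal in each case is to build a $\gentzen$-derivation of $\vdash @_i\varphi$ for any fresh nominal $i$, using the primitive rules of $\gentzen$ together with the derived rules already established in \Cref{drules:hxpd} (notably (MP), ($\liff$R), ($\land$R), ($\tup{\alpha}$L/R), ($[\cmpr]$L/R), ($@$B), (@$\cmpr$L/R)) and the inverse rules from \Cref{th:invertible}.

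Base case ($n=1$): here $\varphi$ is an instance of an axiom schema, and one must exhibit a $\gentzen$-derivation of $\vdash @_i\varphi$ for each schema of \Cref{tab:axioms}. For CPL tautologies, I would use the derived (Ax) for arbitrary node expressions combined with the propositional rules ($\to$L), ($\to$R), ($\bot$), ($\lnot$L/R), ($\land$L/R), ($\liff$L/R). The hybrid axioms ($@$-intro), ($@$-refl), ($@$-self-dual), ($@$K) reduce to applications of (Nom), (@T), (@L/R), (S$_1$) after unfolding the $\liff$; the modal axioms (K) and the definitional axioms ($@$-def), ($\tup{\dowa}$-def) follow by invoking ($\tup{\cmpr}$L/R) and the derived ($\tup{\alpha}$L/R). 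The path axioms (comp-assoc), (comp-neutral), (comp-dist) reduce to purely structural manipulations of the witness nominals $j,k$ introduced by ($\tup{\cmpr}$L) and ($\tup{\alpha}$L). Finally, the data comparison axioms (equal), ($\cmpr$-comm), ($\epsilon$-trans), (distinct), ($@$-data), (subpath), ($@\cmpr$-dist), ($\cmpr$-test), (agree), (back), ($\cmpr$-comp-dist) are handled by combining (EqT), (Eq5), (NEqL/R), ($\tup{\cmpr}$B), (S$_3$), and the derived rules for $@$ and data.

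Inductive step: one analyzes which rule produced $\varphi$. If $\varphi$ comes from (MP) applied to shorter deductions of $\psi$ and $\psi \to \varphi$, the IH gives $\vdash_\gentzen @_i\psi$ and $\vdash_\gentzen @_i(\psi\to\varphi)$, and an application of the derived (MP) yields $\vdash_\gentzen @_i\varphi$. If $\varphi = [\alpha]\psi$ is obtained by (Nec) from $\psi$, then by IH $\vdash_\gentzen @_j\psi$ for fresh $j$, so weakening and ($[\cmpr]$R) through $\varphi = \lnot\tup{\alpha\cmpd\psi?}$ (unfolding the abbreviation for $[\alpha]\psi$) give the result after an application of ($\lnot$R). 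For (Name), if $\vdash_\hilbert @_j\varphi$ with $j$ fresh, then the IH gives $\vdash_\gentzen @_i@_j\varphi$, and an application of $(\inv{@\text{L}})$ followed by (Nom) (using the freshness of $j$) produces $\vdash_\gentzen @_i\varphi$.

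The main obstacle, I expect, is the (Paste) rule: from $\vdash_\hilbert @_i\<\dowa\>j \land \<j{:}\alpha \cmpr \beta\> \to \varphi$ one must simulate the step yielding $\vdash_\hilbert \<i{:}\dowa\alpha \cmpr \beta\> \to \varphi$, exploiting that $j$ is fresh. The strategy here is to invoke the IH to obtain a $\gentzen$-derivation whose leaves mention $j$ only in a bound way, then introduce the hypothesis $@_k\<i{:}\dowa\alpha \cmpr \beta\>$ in the antecedent, decompose it using ($\tup{\cmpr}$L) and ($\tup{\alpha}$L) to expose witnesses, rename using (Nom)/(S$_2$) so that the witness plays the role of $j$, and re-cut against the IH-derivation; this uses essentially the freshness side condition of (Nom) and ($\tup{\dowa}$L). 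Executing this renaming cleanly while keeping track of the nominals introduced by the ($\tup{\alpha}$L) unfoldings is the delicate bookkeeping step on which the whole completeness argument rests.
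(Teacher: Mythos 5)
Your proposal follows essentially the same route as the paper: strong induction on the length of the Hilbert deduction, a base case that exhibits a $\gentzen$-derivation of $\vdash @_i\varphi$ for each axiom schema using the derived and inverse rules, and an inductive step that simulates (MP) by the derived (MP), (Nec) by weakening in $@_i\tup{\alpha}j$ and a box-right step, (Name) by stripping the outer $@$ with an inverse rule, and (Paste) by decomposing $@_i\tup{j{:}\dowa\alpha\cmpr\beta}$ with ($\tup{\cmpr}$L)/($\tup{\dowa}$L) and cutting against the IH-derivation. Two small slips worth fixing but not affecting the strategy: in (Name) the relevant inverse is $(\inv{@\text{R}})$ (the formula sits in the consequent), and $[\alpha]\psi$ unfolds to $\lnot\tup{\alpha}\lnot\psi$, not $\lnot\tup{\alpha\cmpd\psi?}$.
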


\begin{proof}
    The proof proceeds by (strong) induction on the length of a derivation in $\hilbert$.
    The base case requires us to show that each axiom $\varphi$ of the Hilbert system has a corresponding provable sequent $\vdash @_i\varphi$. 
    For the inductive step, we must show that $\vdash_{\hilbert}^m \psi$ implies $\vdash_{\gentzen} @_i \psi$. 
    We proceed by cases, depending on whether ${\vdash_{\hilbert}^m} \psi$ is obtained using (MP), (Nec), (Name), or (Paste). 
    \paragraph{Base Case.}

    \begin{itemize}
    \item {\small Axiom ($@$-def).
    We prove $\vdash_{\hilbert} @_i\varphi \liff \tup{i{:}\varphi? =_{\compc} i{:}\varphi?}$ implies
    $\vdash_{\gentzen} @_j(@_i\varphi \liff \tup{i{:}\varphi? =_{\compc} i{:}\varphi?})$.}
    \begin{center}\scalebox{\thescalefactor}
        {
            \hspace{-.5cm}
            \prftree[r]{\footnotesize($\liff$)R}
            {
                \prftree[r]{\footnotesize($@$L)}
                {
                    \prftree[r]{\footnotesize($@$T)}
                    {
                        \prftree[r]{\footnotesize($\inv{{\land}\text{L}}$)}
                        {
                            \prftree[r]{\footnotesize($\inv{{@}\text{L}}$)}
                            {
                                \prftree[r]{\footnotesize($\tup{\cmpr}$R)}
                                {
                                    \prftree[r]{\footnotesize(EqT)}
                                    {
                                        \prfbyaxiom{\footnotesize(Ax)}
                                        {\tagg{=_{\compc}}{i}{i},@_j\tup{i{:}\varphi?}k, @_j\tup{i{:}\varphi?}k \vdash @_j\tup{i{:}\varphi? =_{\compc} i{:}\varphi?}, \tagg{=_{\compc}}{i}{i}}
                                    }
                                    {@_j\tup{i{:}\varphi?}i, @_j\tup{i{:}\varphi?}i \vdash @_j\tup{i{:}\varphi? =_{\compc} i{:}\varphi?}, \tagg{=_{\compc}}{i}{i}}
                                }
                                {@_j\tup{i{:}\varphi?}i, @_j\tup{i{:}\varphi?}i \vdash @_j\tup{i{:}\varphi? =_{\compc} i{:}\varphi?}}
                            }
                            {@_i\tup{\varphi?i} \vdash @_j\tup{i{:}\varphi? =_{\compc} i{:}\varphi?}}
                    }
                    {@_ii, @_i\varphi \vdash @_j\tup{i{:}\varphi? =_{\compc} i{:}\varphi?}}
                    }
                    {@_i\varphi \vdash @_j\tup{i{:}\varphi? =_{\compc} i{:}\varphi?}}
                }
                {@_j@_i\varphi \vdash @_j\tup{i{:}\varphi? =_{\compc} i{:}\varphi?}}
            }
            {
                \prftree[r]{\footnotesize($@$R)}
                {
                    \prftree[r]{\footnotesize($\tup{\cmpr}$L)}
                    {
                       \prftree[r]{\footnotesize($@$L)}
                       {
                        \prftree[r]{\footnotesize($\land$L)}
                        {
                            \prfbyaxiom{\footnotesize(Ax)}
                            {@_i\varphi, @_ik,@_i\tup{\varphi?}l,\tagg{=_{\compc}}{k}{l}, \Gamma \vdash \Delta, @_i\varphi}
                        }
                        {\rr{@_i\tup{\varphi?}k}, @_i\tup{\varphi?}l,\tagg{=_{\compc}}{k}{l}, \Gamma \vdash \Delta, @_i\varphi}
                       }
                       {\rr{@_j\tup{i{:}\varphi?}k},@_j\tup{i{:}\varphi?}l,\tagg{=_{\compc}}{k}{l}, \Gamma \vdash \Delta, @_i\varphi}
                    }
                    {@_j\tup{i{:}\varphi? =_{\compc} i{:}\varphi?} \vdash @_i\varphi}
                }
                {@_j\tup{i{:}\varphi? =_{\compc} i{:}\varphi?} \vdash @_j@_i\varphi}
            }
            {\vdash @_j(@_i\varphi \liff \tup{i{:}\varphi? =_{\compc} i{:}\varphi?})}
        }
    \end{center}
    \vspace{0.5cm}
    \item {\small Axiom ($\tup{\alpha}$-def). 
    We prove $\vdash_{\hilbert}\tup{\alpha}\varphi \liff \tup{\alpha\varphi? =_{\compc}\alpha\varphi?}$ implies
    $\vdash_{\gentzen}@_i(\tup{\alpha}\varphi \liff \tup{\alpha\varphi? =_{\compc}\alpha\varphi?})$.}
    \begin{center}\scalebox{\thescalefactor}
        {
            \hspace{-1.5cm}
            \prftree[r]{\footnotesize($\liff$)}
                {   
                    \prftree[r]{\footnotesize($\tup{\alpha}$L)}
                    {
                        \prftree[r]{\footnotesize($@$T)}
                        {
                            \prftree[r]{\footnotesize($\inv{{\land}\text{L}}$)}
                            {
                                \prftree[r]{\footnotesize($\inv{{\tup{\alpha}}\text{L}}$)}
                                {
                                    \prftree[r]{\footnotesize($\tup{\cmpr}$R)}
                                    {
                                        \prftree[r]{\footnotesize(EqT)}
                                        {
                                            \prfbyaxiom{\footnotesize(Ax)}
                                            {\tagg{=_{\compc}}{j}{j}, @_i\tup{\alpha\varphi?}j,@_i\tup{\alpha\varphi?}j,\vdash @_i\tup{\alpha\varphi? =_{\compc}\alpha\varphi?}, \tagg{=_{\compc}}{j}{j}}
                                        }
                                        {@_i\tup{\alpha\varphi?}j,@_i\tup{\alpha\varphi?}j,\vdash @_i\tup{\alpha\varphi? =_{\compc}\alpha\varphi?}, \tagg{=_{\compc}}{j}{j}}
                                    }
                                    {@_i\tup{\alpha\varphi?}j,@_i\tup{\alpha\varphi?}j,\vdash \rr{@_i\tup{\alpha\varphi? =_{\compc}\alpha\varphi?}}}
                                }
                                {@_j\tup{\varphi?}j, @_i\tup{\alpha}j \vdash @_i\tup{\alpha\varphi? =_{\compc}\alpha\varphi?}}
                            }
                            {\rr{@_jj},@_i\tup{\alpha}j, \rr{@_j\varphi} \vdash @_i\tup{\alpha\varphi? =_{\compc}\alpha\varphi?}}
                        }
                        {@_i\tup{\alpha}j, @_j\varphi \vdash @_i\tup{\alpha\varphi? =_{\compc}\alpha\varphi?}}
                    }
                    {@_i\tup{\alpha}\varphi \vdash @_i\tup{\alpha\varphi? =_{\compc}\alpha\varphi?}}
                }
                {
                    \prftree[r]{\footnotesize($\tup{\cmpr}$L)}
                    {
                        \prftree[r]{\footnotesize($\tup{\alpha}$L)}
                        {
                            \prftree[r]{\footnotesize($\land$L)}
                            {
                                \prftree[r]{\footnotesize($\inv{{\tup{\alpha}}\text{L}}$)}
                                {
                                    \prfbyaxiom{\footnotesize(Ax)}
                                    {@_i\tup{\alpha}\varphi,@_kj,@_i\tup{\alpha\varphi?}k,\tagg{=_{\compc}}{j}{k} \vdash @_i\tup{\alpha}\varphi }
                                }
                                {\rr{@_k\varphi}, @_kj, \rr{@_i\tup{\alpha}k},@_i\tup{\alpha\varphi?}k,\tagg{=_{\compc}}{j}{k} \vdash @_i\tup{\alpha}\varphi}
                            }
                            {@_i\tup{\alpha}k, \rr{@_k\tup{\varphi?}j},@_i\tup{\alpha\varphi?}k,\tagg{=_{\compc}}{j}{k} \vdash @_i\tup{\alpha}\varphi}
                        }
                        {\rr{@_i\tup{\alpha\varphi?}j},@_i\tup{\alpha\varphi?}k,\tagg{=_{\compc}}{j}{k} \vdash @_i\tup{\alpha}\varphi}
                    }
                    {@_i\tup{\alpha\varphi? =_{\compc}\alpha\varphi?} \vdash @_i\tup{\alpha}\varphi}
                }
                {\vdash @_i(\tup{\alpha}\varphi \liff \tup{\alpha\varphi? =_{\compc}\alpha\varphi?})}
        }
    \end{center}
    \vspace{0.5cm}
    \item {\small Axiom (K).
    We prove $\vdash_{\hilbert}[\alpha](\varphi \to \psi) \to ([\alpha]\varphi \to [\alpha]\psi)$ implies
    $\vdash_{\gentzen}@_i([\alpha](\varphi \to \psi) \to ([\alpha]\varphi \to [\alpha]\psi))$.} 
    \begin{center}\scalebox{\thescalefactor}
        {
            \prftree[r]{\footnotesize($\to$R)}
            {
                \prftree[r]{\footnotesize($\to$R)}
                {
                    \prftree[r]{\footnotesize($[\alpha]$R)}
                    {
                        \prftree[r]{\footnotesize($[\alpha]$L,WL)}
                        {
                            \prftree[r]{\footnotesize($[\alpha]$L,WL)}
                            {
                                \prftree[r]{\footnotesize($\to$R)}
                                {
                                    \prfbyaxiom{\footnotesize(Ax)}
                                    {@_i\varphi \vdash @_j\psi, @_j\varphi}
                                }
                                {
                                    \prfbyaxiom{\footnotesize(Ax)}
                                    {@_j\psi \vdash @_j\psi}
                                }
                                {@_j(\varphi \to \psi),@_j\varphi\vdash @_j\psi }
                            }
                            {@_j\varphi,@_i\tup{\alpha}j,@_i[\alpha](\varphi \to \psi) \vdash @_j\psi }
                        }
                        {@_i\tup{\alpha}j,\rr{@_i[\alpha]\varphi},@_i[\alpha](\varphi \to \psi) \vdash @_j\psi}
                    }
                    {@_i[\alpha]\varphi,@_i[\alpha](\varphi \to \psi) \vdash @_i[\alpha]\psi}
                }
                {@_i[\alpha](\varphi \to \psi) \vdash @_i([\alpha]\varphi \to [\alpha]\psi)}
            }
            {\vdash @_i([\alpha](\varphi \to \psi) \to ([\alpha]\varphi \to [\alpha]\psi))}
        } 
    \end{center}
    \vspace{0.5cm}
    \item {\small Axiom ($@$K). 
    We prove $\vdash_{\hilbert}@_i(\varphi \to \psi) \to (@_i\varphi \to @_i\psi)$ implies
    $\vdash_{\gentzen}@_j(@_i(\varphi \to \psi) \to (@_i\varphi \to @_i\psi))$.}  
    \begin{center}\scalebox{\thescalefactor}
        {
            \prftree[r]{\footnotesize($\to$R)}
            {
                \prftree[r]{\footnotesize($@$L)}
                {
                    \prftree[r]{\footnotesize($\to$R)}
                    {
                        \prftree[r]{\footnotesize($@$L,$@$R)}
                        {
                            \prftree[r]{\footnotesize($\to$L,WL,WR)}
                            {
                                \prfbyaxiom{\footnotesize(Ax)}
                                {@_i\varphi \vdash @_i\varphi}
                            }
                            {
                                \prfbyaxiom{\footnotesize(Ax)}
                                {@_i\psi \vdash @_i\psi}
                            }
                            {@_i\varphi, @_i(\varphi \to \psi) \vdash @_i\psi}
                        }
                        {@_j@_i\varphi, @_i(\varphi \to \psi) \vdash @_j@_i\psi}
                    }
                    {@_i(\varphi \to \psi) \vdash @_j(@_i\varphi \to @_i\psi)}
                }
                {@_j@_i(\varphi \to \psi) \vdash @_j(@_i\varphi \to @_i\psi)}
            }
            {\vdash @_j(@_i(\varphi \to \psi) \to (@_i\varphi \to @_i\psi))}
        } 
    \end{center}
    \vspace{0.5cm}
    \item {\small Axiom ($@$-refl). 
    We prove $\vdash_{\hilbert}@_ii$ implies $\vdash_{\gentzen}@_j@_ii$.} 
    \begin{center}\scalebox{\thescalefactor}
        {
            \prftree[r]{\footnotesize($@$R)}
            {
                \prftree[r]{\footnotesize(Ref)}
                {
                    \prfbyaxiom{\footnotesize(Ax)}
                    {@_ii \vdash @_ii}
                }
                {\vdash @_ii}
            }
            {\vdash @_j@_ii}
        }
    \end{center}
    \vspace{0.5cm}
    \item {\small Axiom ($@$-self-dual). 
    We prove $\vdash_{\hilbert}\lnot @_i \varphi \liff @_i \lnot \varphi$ implies 
    $\vdash_{\gentzen} @_j(\lnot@_i \varphi \liff @_i \lnot \varphi)$.} 
    \begin{center}\scalebox{\thescalefactor}
        {
            \prftree[r]{\footnotesize($\liff$R)}
            {
                \prftree[r]{\footnotesize($\lnot$L,$@$R)}
                {
                    \prftree[r]{\footnotesize($\lnot$R,$@$R)}
                    {
                    \prfbyaxiom{\footnotesize(Ax)}{@_i\varphi \vdash @_i\varphi}
                    }
                    {\vdash @_j@_i\varphi, @_i\lnot\varphi}
                }
                {@_j\lnot @_i \varphi \vdash @_j@_i \lnot \varphi}
            }
            {
                \prftree[r]{\footnotesize($\lnot$R,$@$L)}
                {
                    \prftree[r]{\footnotesize($\lnot$R,$@$R)}
                    {
                        \prfbyaxiom{\footnotesize(Ax)}{@_i\varphi \vdash @_i\varphi}
                    }
                    {@_i\lnot\varphi,@_j@_i\varphi \vdash}
                }
                {@_j@_i \lnot \varphi \vdash @_j\lnot @_i \varphi}}
            {\vdash @_j(\lnot @_i \varphi \liff @_i \lnot \varphi)}
        }
    \end{center}
    \vspace{0.5cm}
    \item {\small Axiom (intro-$@$).
    We prove $\vdash_{\hilbert} i \to (\varphi \liff @_i \varphi)$ implies
    $\vdash_{\gentzen} @_j(i \to (\varphi \liff @_i \varphi))$.}
    \begin{center}\scalebox{\thescalefactor}
        {
            \prftree[r]{\footnotesize($\to$R)}
            {
                \prftree[r]{\footnotesize($\liff$R)}
                {
                    \prftree[r]{\footnotesize($@$R)}
                    {
                        \prftree[r]{\footnotesize(S$_1$)}
                        {
                            \prfbyaxiom{\footnotesize(Ax)}
                            {@_i\varphi,@_k\varphi,@_k i \vdash @_i \varphi }
                        }
                        {@_k\varphi,@_k i \vdash @_i \varphi}
                    }
                    {@_k\varphi,@_k i \vdash @_k@_i \varphi}
                }
                {
                    \prftree[r]{\footnotesize($@$L)}
                    {
                        \prftree[r]{\footnotesize($@$B)}
                        {
                            \prftree[r]{\footnotesize(S$_1$)}
                            {
                                \prfbyaxiom{\footnotesize(Ax)}
                                {@_k \varphi,@_i\varphi,@_i k \vdash @_k \varphi }
                            }
                            {@_i\varphi,@_i k \vdash @_k \varphi}
                        }
                        {@_i\varphi,@_k i \vdash @_k \varphi}
                    }
                    {@_k@_i\varphi ,@_k i \vdash @_k \varphi}
                }
                {@_k i \vdash @_k(\varphi \liff @_i \varphi)}
            }
            {\vdash @_k (i \to (\varphi \liff @_i \varphi))}
        }
    \end{center}
\end{itemize}
\begin{itemize}
    \item {\small Axiom (equal). 
    We prove $\vdash_{\hilbert}\tup{\epsilon =_{\compc} \epsilon}$ implies
    $\vdash_{\gentzen}@_i(\tup{\epsilon =_{\compc} \epsilon})$.} 
    \begin{center}\scalebox{\thescalefactor}
        {
            \prftree[r]{\footnotesize(@T)}
            {
                \prftree[r]{\footnotesize($\top$L)}
                {
                    \prftree[r]{\footnotesize($\inv{{\land}\text{L}}$)}
                    {
                        \prftree[r]{\footnotesize($\tup{\cmpr}\text{R}$)}
                        {
                            \prftree[r]{\footnotesize(EqT)}
                            {
                                \prfbyaxiom{\footnotesize(Ax)}
                                {\tagg{=}{i}{i}, @_i\tup{\epsilon}i \vdash @_i\tup{\epsilon =_{\compc} \epsilon}, \tagg{=}{i}{i}}
                            }
                            {@_i\tup{\epsilon}i \vdash @_i\tup{\epsilon =_{\compc} \epsilon}, \tagg{=}{i}{i}}
                        }
                        {@_i\tup{\epsilon}i \vdash @_i\tup{\epsilon =_{\compc} \epsilon}}
                    }
                    {@_i\top, @_ii \vdash @_i\tup{\epsilon =_{\compc} \epsilon}}
                }
                {@_ii \vdash @_i\tup{\epsilon =_{\compc} \epsilon}}
            }
            {\vdash @_i\tup{\epsilon =_{\compc} \epsilon}}
        }
    \end{center}
    \vspace{0.5cm}
    \item {\small Axiom ($\cmpr$-comm).
    We prove $\vdash_{\hilbert}\tup{\alpha \cmpr \beta} \liff{\beta \cmpr \alpha}$ implies
    $\vdash_{\gentzen}@_i(\tup{\alpha \cmpr \beta} \liff{\beta \cmpr \alpha})$.} 
    \begin{center}\scalebox{\thescalefactor}
        {
            \prftree[r]{\footnotesize($\liff$R)}
            {
                \prftree[r]{\footnotesize($\tup{\cmpr}$L)}
                {
                    \prftree[r]{\footnotesize($\tup{\cmpr}$R)}
                    {
                        \prftree[r]{\footnotesize($\tup{\cmpr}$B)}
                        {
                            \prfbyaxiom{\footnotesize(Ax)}
                            {@_i\tup{\alpha}j, @_i\tup{\beta}k, \tagg{\cmpr}{k}{j} \vdash @_i\tup{\beta \cmpr \alpha},\tagg{\cmpr}{k}{j}}
                        }
                        {@_i\tup{\alpha}j, @_i\tup{\beta}k, \tagg{\cmpr}{j}{k} \vdash @_i\tup{\beta \cmpr \alpha},\tagg{\cmpr}{k}{j}}
                    }
                    {@_i\tup{\alpha}j, @_i\tup{\beta}k, \tagg{\cmpr}{j}{k} \vdash @_i\tup{\beta \cmpr \alpha}}
                }
                {@_i\tup{\alpha \cmpr \beta} \vdash @_i\tup{\beta \cmpr \alpha}}
            }
            {
                \prftree[r]{\footnotesize($\tup{\cmpr}$L)}
                {
                    \prftree[r]{\footnotesize($\tup{\cmpr}$R)}
                    {
                        \prftree[r]{\footnotesize(@$\cmpr$B)}
                        {
                            \prfbyaxiom{\footnotesize(Ax)}
                            {@_i\tup{\beta}k,@_i\tup{\alpha}j, \tagg{\cmpr}{j}{k} \vdash @_i\tup{\alpha \cmpr \beta},\tagg{\cmpr}{j}{k}}
                        }
                        {@_i\tup{\beta}k,@_i\tup{\alpha}j, \tagg{\cmpr}{k}{j} \vdash @_i\tup{\alpha \cmpr \beta},\tagg{\cmpr}{j}{k}}
                    }
                    {@_i\tup{\beta}k,@_i\tup{\alpha}j, \tagg{\cmpr}{k}{j} \vdash @_i\tup{\alpha \cmpr \beta}}
                }
                {@_i\tup{\beta \cmpr \alpha} \vdash @_i\tup{\alpha \cmpr \beta}}
            }
            {@_i(\tup{\alpha \cmpr \beta} \liff \tup{\beta \cmpr \alpha})}
        }
    \end{center}
    \vspace{0.5cm}
    \item {\small Axiom ($\epsilon$-trans).
    $\vdash_{\hilbert}\tup{\alpha =_{\compc} \epsilon} \land \tup{\epsilon =_{\compc} \beta} \to \tup{\alpha =_{\compc} \beta}$ implies 
    $\vdash_{\gentzen}@_i(\tup{\alpha =_{\compc} \epsilon} \land \tup{\epsilon =_{\compc} \beta} \to \tup{\alpha =_{\compc} \beta})$.} 
    \begin{center}\scalebox{\thescalefactor}
        {
            \prftree[r]{\footnotesize($\to$R,$\land$L)}
            {
                \prftree[r]{\footnotesize($\tup{\cmpr}$L)}
                {
                    \prftree[r]{\footnotesize($\land$L,$\tup{\cmpr}$R)}
                    {
                        \prftree[r]{\footnotesize(@5,WL)}
                        {
                            \prftree[r]{\footnotesize($\text{S}_3$,WL)}
                            {
                                \prftree[r]{\footnotesize($\tup{\cmpr}$B)}
                                {
                                    \prftree[r]{\footnotesize(Eq5)}
                                    {
                                        \prfbyaxiom{\footnotesize(Ax)}
                                        {\tagg{=}{a}{b}, \tagg{=}{c}{a}, \tagg{=}{c}{b} \vdash \tagg{=}{a}{b}}
                                    }
                                    {\tagg{=}{c}{a}, \tagg{=}{c}{b} \vdash \tagg{=}{a}{b}}
                                }
                                {\tagg{=}{a}{c}, \tagg{=}{c}{b} \vdash \tagg{=}{a}{b}}
                            }
                            {@_{d}c, \tagg{=}{a}{c}, \tagg{=}{d}{b} \vdash \tagg{=}{a}{b}}
                            }
                        {@_ic, @_id, @_i\top, \tagg{=}{a}{c}, @_i\tup{\alpha}a, @_i\tup{\beta}b, \tagg{=}{d}{b} \vdash @_i\tup{\alpha =_{\compc} \beta},\tagg{=}{a}{b}}
                    }
                    {@_i\tup{\alpha}a, \rr{@_i\tup{\epsilon}c}, \tagg{=}{a}{c}, \rr{@_i\tup{\epsilon}d}, @_i\tup{\beta}b, \tagg{=}{d}{b} \vdash @_i\tup{\alpha =_{\compc} \beta}}
                }
                {@_i\tup{\alpha =_{\compc} \epsilon}, @_i\tup{\epsilon =_{\compc} \beta} \vdash @_i\tup{\alpha =_{\compc} \beta}}
            }
            {\vdash @_i(\tup{\alpha =_{\compc} \epsilon} \land \tup{\epsilon =_{\compc} \beta} \to \tup{\alpha =_{\compc} \beta})}
        }
    \end{center}
    \vspace{0.5cm}
    \item {\small Axiom (distinct). 
    We prove $\vdash_{\hilbert}\lnot \tup{\epsilon \neq_{\compc} \epsilon}$ implies
    $\vdash_{\gentzen}@_i(\lnot \tup{\epsilon \neq_{\compc} \epsilon})$.}
    \begin{center}\scalebox{\thescalefactor}
        {
            \prftree[r]{\footnotesize($\lnot$R)}
            {
                \prftree[r]{\footnotesize($\tup{\cmpr}$L)}
                {
                    \prftree[r]{\footnotesize($\land$L)}
                    {
                        \prftree[r]{\footnotesize(@B,WL)}
                        {
                            \prftree[r]{\footnotesize($\text{S}_3$, WL)}
                            {
                                \prftree[r]{\footnotesize($\tup{\cmpr}$B,WL)}
                                {
                                    \prftree[r]{\footnotesize($\text{S}_3$,WL)}
                                    {
                                        \prftree[r]{\footnotesize(NEqL)}
                                        {
                                            \prftree[r]{\footnotesize(EqT)}
                                            {
                                                \prfbyaxiom{\footnotesize(Ax)}
                                                {\tagg{=}{i}{i} \vdash \tagg{=}{i}{i}}
                                            }
                                            {\vdash \tagg{=}{i}{i}}
                                        }
                                        {\tagg{\neq}{i}{i} \vdash}
                                    }
                                    {\tagg{\neq}{k}{i}, @_ki \vdash}
                                }
                                {\rr{\tagg{\neq}{i}{k}}, @_ki \vdash}
                            }
                            {@_ji, @_ki, \tagg{\neq}{j}{k} \vdash}
                        }
                        {@_i\top, \rr{@_ij}, \rr{@_ik}, \tagg{\neq}{j}{k} \vdash}
                    }
                    {\rr{@_i\<\epsilon\>j}, \rr{@_i\<\epsilon\>k}, \tagg{\neq}{j}{k} \vdash}
                }
                {@_i\tup{\epsilon \neq_{\compc} \epsilon} \vdash}
            }
            {\vdash @_i \lnot \tup{\epsilon \neq_{\compc} \epsilon}}
        }
    \end{center}
    \vspace{0.5cm}
    \item {\small Axiom ($@$-data). 
    We prove $\vdash_{\hilbert}\lnot\tup{i{:} = j{:}} \liff \tup{i{:} \neq j{:}}$ implies
    $\vdash_{\gentzen}@_k(\lnot\tup{i{:} = j{:}} \liff \tup{i{:} \neq j{:}})$.}
    \begin{center}\scalebox{\thescalefactor}
        {
            \prftree[r]{\footnotesize($\liff$L)}
            { 
                \prftree[r]{\footnotesize($\lnot$L)}
                { 
                    \prftree[r]{\footnotesize(@$\cmpr$R)}
                    {
                        \prftree[r]{\footnotesize(NEqR)}
                        {
                            \prfbyaxiom{\footnotesize(Ax)}
                            {\tup{i{:} = j{:}} \vdash \tup{i{:} = j{:}}}
                        }
                        {\vdash \tup{i{:} \neq j{:}}, \tup{i{:} = j{:}}}
                    }
                    {\vdash @_k\tup{i{:} \neq j{:}}, @_k\tup{i{:} = j{:}}}
                }
                {@_k\lnot\tup{i{:} = j{:}} \vdash @_k\tup{i{:} \neq j{:}}}
            }
            {
                \prftree[r]{\footnotesize($\lnot$R)}
                { 
                    \prftree[r]{\footnotesize(@$\cmpr$L)}
                    {
                        \prftree[r]{\footnotesize(NEqL)}
                        {
                            \prfbyaxiom{\footnotesize(Ax)}
                            {\tup{i{:} = j{:}} \vdash \tup{i{:} = j{:}}}
                        }
                        {\tup{i{:} = j{:}},\tup{i{:} \neq j{:}} \vdash}
                    }
                    {@_k\tup{i{:} = j{:}}, @_k\tup{i{:} \neq j{:}} \vdash}
                }
                {@_k\tup{i{:} \neq j{:}} \vdash @_k\lnot\tup{i{:} = j{:}}}
            }
            {\vdash @_k(\lnot\tup{i{:} = j{:}} \liff \tup{i{:} \neq j{:}})}
        }
    \end{center}
    \vspace{0.5cm}
    \item {\small Axiom (Subpath).
    We prove $\vdash_{\hilbert}\tup{\alpha \cmpr \beta} \to \tup{\alpha}\top$ implies
    $\vdash_{\gentzen}@_i(\tup{\alpha \cmpr \beta} \to \tup{\alpha}\top)$.} 
    \begin{center}\scalebox{\thescalefactor}
        {
            \prftree[r]{\footnotesize($\to$R)}
            {
                \prftree[r]{\footnotesize($\tup{\cmpr}$L)}
                {
                    \prftree[r]{\footnotesize($\tup{\alpha}$R)}
                    {
                        \prftree[r]{\footnotesize($\top$L)}
                        {
                            \prfbyaxiom{\footnotesize(Ax)}
                            {@_j\top,@_i\tup{\alpha}j, @_i\tup{\beta}k, \tagg{\cmpr}{j}{k} \vdash @_i\tup{\alpha}\top, @_j\top}
                        }
                        {@_i\tup{\alpha}j, @_i\tup{\beta}k, \tagg{\cmpr}{j}{k} \vdash @_i\tup{\alpha}\top, @_j\top}
                    }
                    {@_i\tup{\alpha}j, @_i\tup{\beta}k, \tagg{\cmpr}{j}{k} \vdash @_i\tup{\alpha}\top}
                }
                {@_i\tup{\alpha \cmpr \beta} \vdash @_i\tup{\alpha}\top}
            }
            {\vdash @_i(\tup{\alpha \cmpr \beta} \to \tup{\alpha}\top)}
        }
    \end{center}
    \vspace{0.5cm}
    \item {\small Axiom ($@\cmpr$-dist).
    We prove $\vdash_{\hilbert}\tup{i{:}\alpha \cmpr i{:}\beta} \liff @_i\tup{\alpha \cmpr \beta}$ implies
    $\vdash_{\gentzen}@_j(\tup{i{:}\alpha \cmpr i{:}\beta} \liff @_i\tup{\alpha \cmpr \beta})$.}
    \begin{center}\scalebox{\thescalefactor}
        {
            \hspace{-1.5cm}
            \prftree[r]{\footnotesize($\liff$R)}
            {
                \prftree[r]{\footnotesize($@$R)}
                {
                    \prftree[r]{\footnotesize($\tup{\cmpr}$L)}
                    {
                        \prftree[r]{\footnotesize($@$L)}
                        {
                            \prftree[r]{\footnotesize($\tup{\cmpr}$R)}
                            {
                                \prfbyaxiom{\footnotesize(Ax)}
                                {@_i\tup{\alpha}a, @_i\tup{\beta}b, \tagg{\cmpr}{a}{b} \vdash @_i\tup{\alpha \cmpr \beta}, \tagg{\cmpr}{a}{b}}
                            }
                            {@_i\tup{\alpha}a, @_i\tup{\beta}b, \tagg{\cmpr}{a}{b} \vdash @_i\tup{\alpha \cmpr \beta}}
                        }
                        {@_j\tup{i{:}\alpha}a, @_j\tup{i{:}\beta}b, \tagg{\cmpr}{a}{b} \vdash @_i\tup{\alpha \cmpr \beta}}
                    }
                    {@_j\tup{i{:}\alpha \cmpr i{:}\beta} \vdash @_i\tup{\alpha \cmpr \beta}}
                }
                {@_j\tup{i{:}\alpha \cmpr i{:}\beta} \vdash @_j@_i\tup{\alpha \cmpr \beta}}
            }
            {
                \prftree[r]{\footnotesize($@$L)}
                {
                    \prftree[r]{\footnotesize($\tup{\cmpr}$L)}
                    {
                        \prftree[r]{\footnotesize($\inv{{@}\text{L}}$)}
                        {
                            \prftree[r]{\footnotesize($\tup{\cmpr}$R)}
                            {
                                \prfbyaxiom{\footnotesize(Ax)}
                                {@_j\tup{i{:}\alpha}a, @_j\tup{i{:}\beta}b, \tagg{\cmpr}{a}{b} \vdash @_j\tup{i{:}\alpha \cmpr i{:}\beta}, \tagg{\cmpr}{a}{b}}
                            }
                            {@_j\tup{i{:}\alpha}a, @_j\tup{i{:}\beta}b, \tagg{\cmpr}{a}{b} \vdash @_j\tup{i{:}\alpha \cmpr i{:}\beta}}
                        }
                        {@_i\tup{\alpha}a,@_i\tup{\beta}b, \tagg{\cmpr}{a}{b} \vdash @_j\tup{i{:}\alpha \cmpr i{:}\beta}}
                    }
                    {@_i\tup{\alpha \cmpr \beta} \vdash @_j\tup{i{:}\alpha \cmpr i{:}\beta}}
                }
                {@_j@_i\tup{\alpha \cmpr \beta} \vdash @_j\tup{i{:}\alpha \cmpr i{:}\beta}}
            }
            {\vdash @_j(\tup{i{:}\alpha \cmpr i{:}\beta} \liff @_i\tup{\alpha \cmpr \beta})}
        }        
    \end{center}
    \vspace{0.5cm}
    \item {\small Axiom ($\cmpr$-test).
    We prove $\vdash_{\hilbert}\tup{\varphi?\alpha \cmpr \beta} \liff (\varphi \land \tup{\alpha \cmpr \beta})$ implies 
    $\vdash_{\gentzen}@_i(\tup{\varphi?\alpha \cmpr \beta} \liff (\varphi \land \tup{\alpha \cmpr \beta}))$.}
    \begin{center}\scalebox{\thescalefactor}
        {
            \prftree[r,l]{\footnotesize($\tup{\cmpr}$L)}{$\aderivation=$}
            {
                \prftree[r]{\footnotesize($\land$L)}
                {
                    \prftree[r]{\footnotesize($\land$R)}
                    {
                        \prfbyaxiom{\footnotesize(Ax)}
                        {@_i\varphi, @_i\tup{\alpha}j,@_i\tup{\beta}k, \tagg{\cmpr}{j}{k} \vdash @_i\varphi}
                    }
                    {
                        \prftree[r]{\footnotesize($\tup{\cmpr}$R)}
                        {
                            \prfbyaxiom{\footnotesize(Ax)}
                            {@_i\varphi, @_i\tup{\alpha}j,@_i\tup{\beta}k, \tagg{\cmpr}{j}{k} \vdash @_i\tup{\alpha \cmpr \beta}, \tagg{\cmpr}{j}{k}}
                        }
                        {@_i\varphi, @_i\tup{\alpha}j,@_i\tup{\beta}k, \tagg{\cmpr}{j}{k} \vdash @_i\tup{\alpha \cmpr \beta}}
                    }
                    {@_i\varphi, @_i\tup{\alpha}j,@_i\tup{\beta}k, \tagg{\cmpr}{j}{k} \vdash @_i(\varphi \land \tup{\alpha \cmpr \beta})}
                }
                {@_i\tup{\varphi?\alpha}j, @_i\tup{\beta}k, \tagg{\cmpr}{j}{k} \vdash @_i(\varphi \land \tup{\alpha \cmpr \beta})}
            }
            {@_i\tup{\varphi?\alpha \cmpr \beta} \vdash @_i(\varphi \land \tup{\alpha \cmpr \beta})}
        }
    \end{center}
    \begin{center}\scalebox{\thescalefactor}
        {
            \prftree[r]{\footnotesize($\liff$R)}
            {
                \prfassumption
                {\aderivation}
            }
            {
                \prftree[r]{\footnotesize($\land$L)}
                {
                    \prftree[r]{\footnotesize($\tup{\cmpr}$L)}
                    {
                        \prftree[r]{\footnotesize($\inv{{\land}\text{L}}$)}
                        {
                            \prftree[r]{\footnotesize($\tup{\cmpr}$R)}
                            {
                                \prfbyaxiom{\footnotesize(Ax)}
                                {@_i\tup{\varphi?\alpha}j,@_i\tup{\beta}k, \tagg{\cmpr}{j}{k} \vdash @_i\tup{\varphi?\alpha \cmpr \beta}, \tagg{\cmpr}{j}{k}}
                            }
                            {@_i\tup{\varphi?\alpha}j,@_i\tup{\beta}k, \tagg{\cmpr}{j}{k} \vdash @_i\tup{\varphi?\alpha \cmpr \beta}}
                        }
                        {@_i\tup{\alpha}j, @_i\tup{\beta}k, \tagg{\cmpr}{j}{k},@_i\varphi \vdash @_i\tup{\varphi?\alpha \cmpr \beta}}
                    }
                    {@_i\varphi, @_i\tup{\alpha \cmpr \beta} \vdash @_i\tup{\varphi?\alpha \cmpr \beta}}
                }
                {@_i(\varphi \land \tup{\alpha \cmpr \beta}) \vdash @_i\tup{\varphi?\alpha \cmpr \beta}}
            }
            {\vdash @_i(\tup{\varphi?\alpha \cmpr \beta} \liff (\varphi \land \tup{\alpha \cmpr \beta}))}
        }
    \end{center}
    \vspace{0.5cm}
    \item {\small Axiom (Agree). 
    We prove $\vdash_{\hilbert}\tup{i{:}\alpha \cmpr \beta} \to \tup{j{:}i{:}\alpha \cmpr \beta}$ implies
    $\vdash_{\gentzen}@_k(\tup{i{:}\alpha \cmpr \beta} \to \tup{j{:}i{:}\alpha \cmpr \beta})$.}
    \begin{center}\scalebox{\thescalefactor}
        {
            \prftree[r]{\footnotesize($\to$R)}
            {
                \prftree[r]{\footnotesize($\tup{\cmpr}$L)}
                {
                    \prftree[r]{\footnotesize(@L)}
                    {
                        \prftree[r]{\footnotesize($\inv{@\text{L}}$)}
                        {
                            \prftree[r]{\footnotesize($\inv{@\text{L}}$)}
                            {
                                \prftree[r]{\footnotesize($\inv{\tup{\cmpr}\text{L}}$)}
                                {
                                    \prfbyaxiom{\footnotesize(Ax)}
                                    {@_k\tup{j{:}i{:}\alpha \cmpr \beta} \vdash @_k\tup{j{:}i{:}\alpha \cmpr \beta}}
                                }
                                {@_k\tup{j{:}i{:}\alpha}a,@_k\tup{\beta}b, \tagg{\cmpr}{a}{b} \vdash @_k\tup{j{:}i{:}\alpha \cmpr \beta}}
                            }
                            {\rr{@_j\tup{i{:}\alpha}a},@_k\tup{\beta}b, \tagg{\cmpr}{a}{b} \vdash @_k\tup{j{:}i{:}\alpha \cmpr \beta}}
                        }
                        {\rr{@_i\tup{\alpha}a},@_k\tup{\beta}b, \tagg{\cmpr}{a}{b} \vdash @_k\tup{j{:}i{:}\alpha \cmpr \beta}}
                    }
                    {\rr{@_k\tup{i{:}\alpha}a},@_k\tup{\beta}b, \tagg{\cmpr}{a}{b} \vdash @_k\tup{j{:}i{:}\alpha \cmpr \beta}}
                }
                {@_k\tup{i{:}\alpha \cmpr \beta} \vdash @_k\tup{j{:}i{:}\alpha \cmpr \beta}}
            }
            {\vdash @_k(\tup{i{:}\alpha \cmpr \beta} \to \tup{j{:}i{:}\alpha \cmpr \beta})}
        }
    \end{center}
    \vspace{0.5cm}
    \item {\small Axiom (Back).
    We prove $\vdash_{\hilbert}\tup{\gamma i{:}\alpha \cmpr \beta} \to \tup{i{:}\alpha \cmpr \beta}$ implies
    $\vdash_{\gentzen}@_j(\tup{\gamma i{:}\alpha \cmpr \beta} \to \tup{i{:}\alpha \cmpr \beta})$.}
    \begin{center}\scalebox{\thescalefactor}
        {
            \prftree[r]{\footnotesize($\to$R)}
            {
                \prftree[r]{\footnotesize($\tup{\cmpr}$L)}
                    {
                        \prftree[r]{\footnotesize($\tup{\alpha}$L)}
                        {
                            \prftree[r]{\footnotesize(@L,WL)}
                            {
                                \prftree[r]{\footnotesize($\inv{\text{@L}}$)}
                                {
                                    \prftree[r]{\footnotesize($\inv{\tup{\cmpr}\text{L}}$)}
                                    {
                                        \prfbyaxiom{\footnotesize(Ax)}
                                        {@_k\tup{i{:}\alpha \cmpr \beta} \vdash @_k\tup{i{:}\alpha \cmpr \beta}}
                                    }
                                    {@_k\tup{i{:}\alpha}a, @_k\tup{\beta}b, \tagg{\cmpr}{a}{b} \vdash @_k\tup{i{:}\alpha \cmpr \beta}}
                                }
                                {\rr{@_i\tup{\alpha}a}, @_k\tup{\beta}b, \tagg{\cmpr}{a}{b} \vdash @_k\tup{i{:}\alpha \cmpr \beta}}
                            }
                            {@_k\tup{\gamma}c, \rr{@_c\tup{i{:}\alpha}a}, @_k\tup{\beta}b, \tagg{\cmpr}{a}{b} \vdash @_k\tup{i{:}\alpha \cmpr \beta}}
                        }
                        {\rr{@_k\tup{\gamma i{:}\alpha}a}, @_k\tup{\beta}b, \tagg{\cmpr}{a}{b} \vdash @_k\tup{i{:}\alpha \cmpr \beta}}
                    }
                    {@_k\tup{\gamma i{:}\alpha \cmpr \beta} \vdash @_k\tup{i{:}\alpha \cmpr \beta}}
            }
            {\vdash @_k(\tup{\gamma i{:}\alpha \cmpr \beta} \to \tup{i{:}\alpha \cmpr \beta})}
        }
    \end{center}
    \vspace{0.5cm}
    \item {\small Axiom ($\cmpr$-comp-dist).
    We prove $\vdash_{\hilbert}\tup{\alpha}\tup{\beta \cmpr \gamma} \to \tup{\alpha\beta \cmpr \alpha\gamma}$ implies
    $\vdash_{\gentzen}@_i(\tup{\alpha}\tup{\beta \cmpr \gamma} \to \tup{\alpha\beta \cmpr \alpha\gamma})$.}  
    \begin{center}\scalebox{\thescalefactor}
        {
            \prftree[r]{\footnotesize($\to$R)}
        {
            \prftree[r]{\footnotesize($\tup{\alpha}$L)}
            {
                \prftree[r]{\footnotesize($\tup{\cmpr}$L)}
                {
                    \prftree[r]{\footnotesize($\inv{\tup{\alpha}\text{L}}$)}
                    {
                        \prftree[r]{\footnotesize($\inv{\tup{\cmpr}\text{L}}$)}
                        {
                            \prfbyaxiom{\footnotesize(Ax)}
                            {@_i\tup{\alpha\beta \cmpr \alpha\gamma} \vdash @_i\tup{\alpha\beta \cmpr \alpha\gamma}}
                        }
                        {@_i\tup{\alpha\beta}b, @_i\tup{\alpha\gamma}c, \tagg{\cmpr}{b}{c} \vdash @_i\tup{\alpha\beta \cmpr \alpha\gamma}}
                    }
                    {@_a\tup{\beta}b, @_a\tup{\gamma}c, \tagg{\cmpr}{b}{c}, @_i\tup{\alpha}a \vdash @_i\tup{\alpha\beta \cmpr \alpha\gamma}}
                }
                {@_i\tup{\alpha}a, @_a\tup{\beta \cmpr \gamma} \vdash @_i\tup{\alpha\beta \cmpr \alpha\gamma}}
            }
            {@_i\tup{\alpha}\tup{\beta \cmpr \gamma} \vdash @_i\tup{\alpha\beta \cmpr \alpha\gamma}}
        }
        {\vdash @_i(\tup{\alpha}\tup{\beta \cmpr \gamma} \to \tup{\alpha\beta \cmpr \alpha\gamma})}
        }
    \end{center}
    \vspace{0.5cm}
\end{itemize}
\begin{itemize}
    \item {\small Axiom (comp-assoc).
    $\vdash_{\hilbert}\tup{(\alpha\beta)\gamma \cmpr \eta} \liff \tup{\alpha(\beta\gamma) \cmpr \eta}$ implies
    $\vdash_{\gentzen}@_i(\tup{(\alpha\beta)\gamma \cmpr \eta} \liff \tup{\alpha(\beta\gamma) \cmpr \eta})$.}
    \begin{center}\scalebox{\thescalefactor}
        {
            \hspace{-2cm}
            \prftree[r]{\footnotesize($\liff$R)}
            {
                \prftree[r]{\footnotesize($\tup{\cmpr}$L)}
                {
                    \prftree[r]{\footnotesize($\tup{\alpha}$L)}
                    {
                        \prftree[r]{\footnotesize($\tup{\alpha}$L)}
                        {
                            \prftree[r]{\footnotesize($\inv{{\tup{\alpha}}\text{L}}$)}
                            {
                                \prftree[r]{\footnotesize($\inv{{\tup{\alpha}}\text{L}}$)}
                                {
                                    \prftree[r]{\footnotesize($\tup{\cmpr}$R)}
                                    {
                                        \prfbyaxiom{\footnotesize(Ax)}
                                        {@_i\tup{\alpha(\beta\gamma)}j,@_i\tup{\eta}k, \tagg{\cmpr}{j}{k}, \vdash @_i\tup{\alpha(\beta\gamma) \cmpr \eta}, \tagg{\cmpr}{j}{k}}
                                    }
                                    {@_i\tup{\alpha(\beta\gamma)}j,@_i\tup{\eta}k, \tagg{\cmpr}{j}{k}, \vdash @_i\tup{\alpha(\beta\gamma) \cmpr \eta}}
                                }
                                {@_i\tup{\alpha}a,@_a\tup{\beta\gamma}j,@_i\tup{\eta}k, \tagg{\cmpr}{j}{k}, \vdash @_i\tup{\alpha(\beta\gamma) \cmpr \eta}}
                            }
                            {@_i\tup{\alpha}a,@_a\tup{\beta}b, @_b\tup{\gamma}j,@_i\tup{\eta}k, \tagg{\cmpr}{j}{k}, \vdash @_i\tup{\alpha(\beta\gamma) \cmpr \eta}}
                        }
                        {@_i\tup{\alpha\beta}b, @_b\tup{\gamma}j, @_i\tup{\eta}k, \tagg{\cmpr}{j}{k}, \vdash @_i\tup{\alpha(\beta\gamma) \cmpr \eta}}
                    }
                    {@_i\tup{(\alpha\beta)\gamma}j, @_i\tup{\eta}k, \tagg{\cmpr}{j}{k}, \vdash @_i\tup{\alpha(\beta\gamma) \cmpr \eta}}
                }
                {@_i\tup{(\alpha\beta)\gamma \cmpr \eta} \vdash @_i\tup{\alpha(\beta\gamma) \cmpr \eta}}
            }
            {
            \prftree[r]{\footnotesize($\tup{\cmpr}$L)}
            {
                \prftree[r]{\footnotesize($\tup{\alpha}$L)}
                {
                    \prftree[r]{\footnotesize($\tup{\alpha}$L)}
                    {
                        \prftree[r]{\footnotesize($\inv{{\tup{\alpha}}\text{L}}$)}
                        {
                            \prftree[r]{\footnotesize($\inv{{\tup{\alpha}}\text{L}}$)}
                            {
                                \prftree[r]{\footnotesize($\tup{\cmpr}$R)}
                                {
                                    \prfbyaxiom{\footnotesize(Ax)}
                                    {@_i\tup{(\alpha\beta)\gamma}j, @_i\tup{\eta}k, \tagg{\cmpr}{j}{k},\vdash @_i\tup{(\alpha\beta)\gamma \cmpr \eta}, \tagg{\cmpr}{j}{k}}
                                }
                                {@_i\tup{(\alpha\beta)\gamma}j, @_i\tup{\eta}k, \tagg{\cmpr}{j}{k},\vdash @_i\tup{(\alpha\beta)\gamma \cmpr \eta}}
                            }
                            {@_i\tup{\alpha\beta}b, @_b{\gamma}j, @_i\tup{\eta}k, \tagg{\cmpr}{j}{k},\vdash @_i\tup{(\alpha\beta)\gamma \cmpr \eta}}
                        }
                        {@_i\tup{\alpha}a, @_a{\beta}b, @_b{\gamma}j, @_i\tup{\eta}k, \tagg{\cmpr}{j}{k},\vdash @_i\tup{(\alpha\beta)\gamma \cmpr \eta}}
                    }
                    {@_i\tup{\alpha}a, @_a{(\beta\gamma)}j, @_i\tup{\eta}k, \tagg{\cmpr}{j}{k},\vdash @_i\tup{(\alpha\beta)\gamma \cmpr \eta}}
                }
                {@_i\tup{\alpha(\beta\gamma)}j, @_i\tup{\eta}k, \tagg{\cmpr}{j}{k},\vdash @_i\tup{(\alpha\beta)\gamma \cmpr \eta}}
            }
            {@_i\tup{\alpha(\beta\gamma) \cmpr \eta} \vdash @_i\tup{(\alpha\beta)\gamma \cmpr \eta}}
            }
            {\vdash @_i(\tup{(\alpha\beta)\gamma \cmpr \eta} \liff \tup{\alpha(\beta\gamma) \cmpr \eta})}
        }
    \end{center}
    \vspace{0.5cm}
    \item {\small Axiom (comp-neutral).
    We prove $\vdash_{\hilbert}\tup{\alpha\epsilon\beta \cmpr \gamma} \liff \tup{\alpha\beta \cmpr \gamma}$ implies
    $\vdash_{\gentzen}@_i(\tup{\alpha\epsilon\beta \cmpr \gamma} \liff \tup{\alpha\beta \cmpr \gamma})$.}
    \begin{center}\scalebox{\thescalefactor}
        {
            \hspace{-2.5cm}
            \prftree[r]{\footnotesize($\liff$R)}
            {
                \prftree[r]{\footnotesize($\tup{\cmpr}$L)}
                {
                    \prftree[r]{\footnotesize($\tup{\alpha}$L)}
                    {
                        \prftree[r]{\footnotesize($\land$L)}
                        {
                            \prftree[r]{\footnotesize($\inv{{\tup{\alpha}\text{L}}}$,WL)}
                            {
                                \prftree[r]{\footnotesize($\tup{\cmpr}$R)}
                                {
                                    \prfbyaxiom{\footnotesize(Ax)}
                                    {@_i\tup{\alpha\beta}j, @_i\tup{\gamma}k, \tagg{\cmpr}{j}{k} \vdash @_i\tup{\alpha\beta \cmpr \gamma}, \tagg{\cmpr}{j}{k}}
                                }
                                {@_i\tup{\alpha\beta}j, @_i\tup{\gamma}k, \tagg{\cmpr}{j}{k} \vdash @_i\tup{\alpha\beta \cmpr \gamma}}
                            }
                            {@_a\top, @_a\tup{\beta}j,@_i\tup{\alpha}a, @_i\tup{\gamma}k, \tagg{\cmpr}{j}{k} \vdash @_i\tup{\alpha\beta \cmpr \gamma}}
                        }
                        {@_i\tup{\alpha}a,@_a\tup{\epsilon\beta}j, @_i\tup{\gamma}k, \tagg{\cmpr}{j}{k} \vdash @_i\tup{\alpha\beta \cmpr \gamma}}
                    }
                    {@_i\tup{\alpha\epsilon\beta}j, @_i\tup{\gamma}k, \tagg{\cmpr}{j}{k} \vdash @_i\tup{\alpha\beta \cmpr \gamma}}
                }
                {@_i\tup{\alpha\epsilon\beta \cmpr \gamma} \vdash @_i\tup{\alpha\beta \cmpr \gamma}}
            }
            {
                \prftree[r]{\footnotesize($\tup{\cmpr}$L)}
                {
                    \prftree[r]{\footnotesize($\tup{\alpha}$L)}
                    {
                        \prftree[r]{\footnotesize($\top$L, $@$T)}
                        {
                            \prftree[r]{\footnotesize($\inv{{\land}\text{L}}$)}
                            {
                                \prftree[r]{\footnotesize($\inv{{\tup{\alpha}}\text{L}}$)}
                                {
                                    \prftree[r]{\footnotesize($\inv{{\tup{\alpha}}\text{L}}$)}
                                    {
                                        \prftree[r]{\footnotesize($\tup{\cmpr}$R)}
                                        {
                                            \prfbyaxiom{\footnotesize(Ax)}
                                            {@_i\tup{\alpha\epsilon\beta}b,@_i\tup{\gamma}k, \tagg{\cmpr}{j}{k} \vdash @_i\tup{\alpha\epsilon\beta \cmpr \gamma}, \tagg{\cmpr}{j}{k}}
                                        }
                                        {@_i\tup{\alpha\epsilon\beta}b,@_i\tup{\gamma}k, \tagg{\cmpr}{j}{k} \vdash @_i\tup{\alpha\epsilon\beta \cmpr \gamma}}
                                    }
                                    {@_i\tup{\alpha\epsilon}a, @_a\tup{\beta}b,@_i\tup{\gamma}k, \tagg{\cmpr}{j}{k} \vdash @_i\tup{\alpha\epsilon\beta \cmpr \gamma}}
                                }
                                {@_a\tup{\epsilon}a,@_i\tup{\alpha}a, @_a\tup{\beta}b,@_i\tup{\gamma}k, \tagg{\cmpr}{j}{k} \vdash @_i\tup{\alpha\epsilon\beta \cmpr \gamma}}
                            }
                            {@_a\top, @_aa,@_i\tup{\alpha}a, @_a\tup{\beta}b,@_i\tup{\gamma}k, \tagg{\cmpr}{j}{k} \vdash @_i\tup{\alpha\epsilon\beta \cmpr \gamma}}
                        }
                        {@_i\tup{\alpha}a, @_a\tup{\beta}b,@_i\tup{\gamma}k, \tagg{\cmpr}{j}{k} \vdash @_i\tup{\alpha\epsilon\beta \cmpr \gamma}}
                    }
                    {@_i\tup{\alpha\beta}j, @_i\tup{\gamma}k, \tagg{\cmpr}{j}{k} \vdash @_i\tup{\alpha\epsilon\beta \cmpr \gamma}}
                }
                {@_i\tup{\alpha\beta \cmpr \gamma} \vdash @_i\tup{\alpha\epsilon\beta \cmpr \gamma}}
            }
            {\vdash @_i(\tup{\alpha\epsilon\beta \cmpr \gamma} \liff \tup{\alpha\beta \cmpr \gamma})}
        }
    \end{center}
    \vspace{0.5cm}
    \item {\small Axiom (comp-dist).
    We prove $\vdash_{\hilbert}\tup{\alpha\beta}\varphi \liff \tup{\alpha}\tup{\beta}\varphi$ implies
    $\vdash_{\gentzen}@_i(\tup{\alpha\beta}\varphi \liff \tup{\alpha}\tup{\beta}\varphi)$.}
    \begin{center}\scalebox{\thescalefactor}
        {
            \prftree[r]{\footnotesize($\liff$R)}
            {
                \prfbyaxiom{\footnotesize(Ax)}
                {@_i\tup{\alpha\beta}\varphi \vdash @_i\tup{\alpha}\tup{\beta}\varphi}
            }
            {
                \prfbyaxiom{\footnotesize(Ax)}
                {@_i\tup{\alpha}\tup{\beta}\varphi \vdash @_i\tup{\alpha\beta}\varphi}
            }
            {\vdash @_i(\tup{\alpha\beta}\varphi \liff \tup{\alpha}\tup{\beta}\varphi)}
        }
    \end{center}
\end{itemize}

    \paragraph{Inductive Case.} The inductive hypothesis (IH) is: for all $1 \leq n < m$, if $\vdash_{\hilbert}^{n} \varphi$, then $\vdash_{\gentzen} @_i\varphi$, for some nominal $i$ not occurring in~$\varphi$.
    \begin{itemize}
        \item (MP).
            Suppose that $\psi\,{=}\,\chi \to \varphi$ and there are $n,n' < m$ s.t.\ ${\vdash_{\hilbert}^{n}}\, \chi$ and ${\vdash_{\hilbert}^{n'}}\, \chi \,{\to}\, \varphi$.
            From the~IH, ${\vdash_{\gentzen}}\, @_i\chi$ and ${\vdash_{\gentzen}}\, @_i(\chi \,{\to}\, \varphi)$.
            The derived rule of (MP) gives us ${\vdash_{\gentzen}}\, @_i \psi$ as required.
            \begin{center}\scalebox{\thescalefactor}
                {
                    \prftree[r]{\footnotesize(MP)}
                    {
                        \pp{\vdash_{\gentzen} @_i\chi} 
                    }
                    {
                        \pp{\vdash_{\gentzen} @_i(\chi \to \varphi)} 
                    }
                    {\pp{\vdash @_i\varphi}}
                }
            \end{center}
        \item (Nec).
            Suppose that $\psi\,{=}\,[\alpha]\chi$ and there is $n\,{<}\,m$ s.t.\ ${\vdash_{\hilbert}^{n}}\,\chi$.
            From the IH, ${\vdash_{\gentzen}}\, @_j\chi$.
            Using (WL) we obtain $@_i\tup{\alpha}j\, {\vdash_{\gentzen}}\, @_j\chi$.
            Finally, $([\alpha]\text{R})$ establishes $\vdash_{\gentzen} @_i\psi$ as required.
            
            \begin{center}\scalebox{\thescalefactor}
                {
                    \prftree[r]{\footnotesize($[\alpha]$R)}
                    {
                        \prftree[r]{\footnotesize(WL)}
                        {
                            \prfassumption
                            {\pp{\vdash_{\gentzen} @_j\chi}}
                        }
                        {@_i\<\alpha\>j \vdash @_j\chi}
                    }
                    {\pp{\vdash @_i[\alpha]\chi}}
                }
            \end{center}
            \item (Name). 
            Suppose that $\psi\,{=}\varphi$ and there is $n\,{<}\,m$ s.t.\ ${\vdash_{\hilbert}^{n}}\,@_i\psi$.
                From the IH, ${\vdash_{\gentzen}}\,@_j@_i\psi$.
                From ($\inv{@\text{R}}$), ${\vdash_{\gentzen}}\,@_i\psi$. 
                \begin{center}\scalebox{\thescalefactor}
                    {
                        \prftree[r]{\footnotesize($\inv{@\text{R}}$)}
                        {
                            \pp{\vdash_{\gentzen} @_j@_i\psi}
                        }
                        {\pp{\vdash @_i\psi}}
                    }
                \end{center}
    
            \item (Paste). 
            Suppose that
                $\psi = \<j{:}\dowa\alpha \cmpr \beta\> \to \chi$, and
                there is $n < m$ such that
                    $\vdash_{\hilbert}^{n} (@_j\<\dowa\>k \land \<k{:}\alpha \cmpr \beta\>) \to \chi$.
            By the IH, it follows that $\vdash_{\gentzen} @_i((@_j\<\dowa\>k \land \<k{:}\alpha \cmpr \beta\>) \to \chi)$.
            The following derivation establishes $\vdash_{\gentzen} @_i\psi$, as required.
            \vspace{-.5cm}
            \begin{center}\scalebox{\thescalefactor}
                {
                    \hspace{-1cm}
                    \prftree[r]{\footnotesize($\to$R)}
                    {
                        \prftree[r]{\footnotesize($\tup{\cmpr}$L)}
                        {
                        \prftree[r]{\footnotesize($@$L)}
                        {
                            \prftree[r]{\footnotesize($\tup{\dowa}$L)}
                            {
                                \prftree[r]{\footnotesize(Cut)}
                                {
                                    \prftree[r]{\footnotesize($\land$R,W$*$)}
                                    {
                                        \prftree[r]{\footnotesize($@$R)}
                                        {
                                        \prfbyaxiom{\footnotesize(Ax)}
                                        {\taag{\dowa}{j}{k} \vdash @_j\tup{\dowa}k}
                                        }
                                        {\taag{\dowa}{j}{k} \vdash @_i@_j\tup{\dowa}k}
                                    }
                                    {
                                        \prftree[r]{\footnotesize($\inv{@\text{L}}$)}
                                        {
                                            \prftree[r]{\footnotesize(${\tup{\cmpr}\text{R}}$,W$*$)}
                                            {
                                                \prftree[r]{\footnotesize(Ax)}
                                                {\tagg{\cmpr}{a}{b} \vdash \tagg{\cmpr}{a}{b}}
                                            }
                                            {@_i\tup{k{:}\alpha}a, @_i\tup{\beta}b, \tagg{\cmpr}{a}{b} \vdash @_i\tup{k{:}\alpha \cmpr \beta}}
                                        }
                                        {\rr{@_k\tup{\alpha}a}, @_i\tup{\beta}b, \tagg{\cmpr}{a}{b} \vdash @_i\tup{k{:}\alpha \cmpr \beta}}
                                    }
                                    {\taag{\dowa}{j}{k},@_k\tup{\alpha}a, @_i\tup{\beta}b, \tagg{\cmpr}{a}{b} \vdash \cc{@_i(@_j\tup{\dowa}k \land \tup{k{:}\alpha \cmpr \beta})}}
                                }
                                {
                                    \prftree[r]{\footnotesize($\inv{{\to}\text{R}}$)}
                                    {
                                        \prfassumption
                                        {\pp{\vdash_{\gentzen} @_i(@_j\tup{\dowa}k \land \tup{k{:}\alpha \cmpr \beta} \to \chi)}}
                                    }
                                    {\cc{@_i(@_j\tup{\dowa}k \land \tup{k{:}\alpha \cmpr \beta})} \vdash @_i \chi}
                                }
                                {\taag{\dowa}{j}{k}, @_k\tup{\alpha}a, @_i\tup{\beta}b, \tagg{\cmpr}{a}{b} \vdash @_i \chi}
                            }
                            {@_j\tup{\dowa\alpha}a, @_i\tup{\beta}b, \tagg{\cmpr}{a}{b} \vdash @_i \chi}
                        }
                        {@_i\tup{j{:}\dowa\alpha}a, @_i\tup{\beta}b, \tagg{\cmpr}{a}{b} \vdash @_i \chi}
                        }
                        {@_i\tup{j{:}\dowa\alpha \cmpr \beta} \vdash @_i \chi}
                    }
                    {\pp{\vdash_{\gentzen} @_i(\tup{j{:}\dowa\alpha \cmpr \beta} \to \chi)}}
                } 
            \end{center}
            In the derivation above, (W$*$) indicates the simultaneous application of the rules (WL) and (WR) and (AxG) is a generation of the rule (Ax). 
    \end{itemize}
\end{proof}

\begin{theorem}[Completeness]\label{th:completeness:gentzen}\label{th:soundness}
	Every valid sequent is provable.
\end{theorem}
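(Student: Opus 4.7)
The plan is to reduce completeness of $\gentzen$ to the completeness of the Hilbert system $\hilbert$ established in \Cref{th:hilbert-completeness}, combined with \Cref{lemma:derivability} and the invertibility results of \Cref{th:invertible}. Given a valid sequent $\Gamma \vdash \Delta$, the first step is to associate with it the single node expression $\varphi_{\Gamma,\Delta} := \bigwedge \Gamma \to \bigvee \Delta$, using the conventions $\bigwedge \emptyset := \top$ and $\bigvee \emptyset := \bot$. Because every element of $\Gamma \cup \Delta$ is of the form $@_k\psi$ or $\tagg{\cmpr}{k}{l}$, whose truth in any model is independent of the evaluation node, validity of the sequent is equivalent to $\vDash \varphi_{\Gamma,\Delta}$. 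By \Cref{th:hilbert-completeness} this gives $\vdash_{\hilbert} \varphi_{\Gamma,\Delta}$, and choosing a fresh nominal $i$ not occurring in any formula of $\Gamma \cup \Delta$, \Cref{lemma:derivability} then yields $\vdash_{\gentzen} @_i \varphi_{\Gamma,\Delta}$.

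The second step is to transform this provability into a derivation of $\Gamma \vdash \Delta$. Starting from the derivation of $\vdash @_i(\bigwedge \Gamma \to \bigvee \Delta)$, I would apply $\inv{({\to}\text{R})}$ to obtain $@_i \bigwedge \Gamma \vdash @_i \bigvee \Delta$. Iterated applications of $\inv{({\land}\text{L})}$ decompose the antecedent into one formula $@_i \gamma$ per $\gamma \in \Gamma$; analogous manipulations on the right---unfolding $\chi \lor \psi := \lnot\chi \to \psi$ via $\inv{({\to}\text{R})}$, $\inv{({\lnot}\text{L})}$, and $\inv{({\lnot}\text{R})}$---produce one formula $@_i \delta$ per $\delta \in \Delta$. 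Finally, for each $\gamma \in \Gamma$ of the form $@_k\psi$ the outer $@_i$ is stripped by $\inv{(@\text{L})}$, while for each $\gamma$ of the form $\tagg{\cmpr}{k}{l}$ it is stripped by $\inv{(@\cmpr\text{L})}$; symmetrically on the right, using $\inv{(@\text{R})}$ and $\inv{(@\cmpr\text{R})}$. The resulting end-sequent is exactly $\Gamma \vdash \Delta$.

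The main obstacle lies in ensuring that all inverses invoked above are genuinely available. The inverses of the primitive rules $({\to}\text{R})$, $(@\text{L})$, and $(@\text{R})$ are guaranteed directly by \Cref{th:invertible}. The inverses of the derived rules $({\land}\text{L})$, $({\lnot}\text{L})$, $({\lnot}\text{R})$, $(@\cmpr\text{L})$, and $(@\cmpr\text{R})$ are not stated explicitly, but each such rule is itself assembled from invertible primitives together with (Cut), so the same cut-and-(Ax) template used throughout the proof of \Cref{th:invertible} yields the corresponding inverse mechanically. The degenerate cases where $\Gamma$ or $\Delta$ is empty are absorbed uniformly by the $\top$/$\bot$ conventions, together with the axiom $(\bot)$ and the derived $(\top\text{L})$ where needed.
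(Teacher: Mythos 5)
Your proposal follows essentially the same route as the paper: reduce to the Hilbert system's completeness (\Cref{th:hilbert-completeness}), apply \Cref{lemma:derivability} to obtain $\vdash_{\gentzen} @_i(\bigwedge\Gamma \to \bigvee\Delta)$, and then decompose back to $\Gamma \vdash \Delta$. The paper compresses that final decomposition into a single ``this implies'' step, whereas you spell out the required inverses (including the node-independence of sequent formulas and the derived inverses of $({\land}\text{L})$, $({\lnot}\text{L/R})$, and $(@{\cmpr}\text{L/R})$), which is a faithful elaboration rather than a different argument.
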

\begin{proof}
    Suppose $\gamma_1,\ldots,\gamma_n \vdash \delta_1,\ldots,\delta_m$ is valid.
    From the completeness result in~\cite{ArecesF21}, we know there is $k$ such that $\vdash^k_{\hilbert} \bigwedge_{1 \le i \le n} \gamma_i \to \bigvee_{1 \le j \le m} \delta_j$.
    From \Cref{lemma:derivability}, we get $\vdash_{\gentzen} @_i(\bigwedge_{1 \le i \le n} \gamma_i \to \bigvee_{1 \le j \le m} \delta_j)$.
    This implies $@_i\gamma_1, \ldots, @_i\gamma_n \vdash_{\gentzen} @_i\delta_1, \ldots,@_i\delta_m$, and so $\gamma_1,\ldots,\gamma_n \vdash_{\gentzen} \delta_1,\ldots,\delta_m$ as required.
\end{proof}

\section{Cut Elimination}\label{sec:arXiv-cut}

\begin{definition}\label{def:size}
    The $\size$ of a node expression is defined by mutual recursion as:
    \begin{align*}
        \size(p) & = 1 &
            \size(\varphi \to \psi) & = 1 + \size(\varphi) + \size(\psi) &
                \size(\dowa) & = 1 \\
        \size(i) & = 1 &
            \size(@_i\varphi) & = 1 + \size(\varphi)  &
                \size(i{:}) & = 1 \\
        \size(\bot) & = 1 &
            \size(\tup{\dowa} \varphi) & = 1 + \size(\varphi) &
                \size(\varphi?) & = 1 + \size(\varphi) \\
        &&
            \size(\tup{\alpha \cmpr \beta}) & = 1 + \size(\alpha) + \size(\beta) &
                \size(\alpha \beta) & = \size(\alpha) + \size(\beta)
    \end{align*}
    The function $\size$ induces a well-founded partial order over the set of node expressions.
\end{definition}

\begin{definition}
    The \emph{height} of a derivation is the length of its longest branch (e.g., a derivation consisting of only an application of (Ax) has height 1).
    If $\Gamma \vdash \Delta$ is the end-sequent in a derivation, we will use $\Gamma \vdash^n \Delta$ to indicate that the derivation has height $n$.
    The \emph{cut height} of an application of (Cut) in a given derivation is the sum of the heights of the derivations of the premisses of the rule; i.e., if we have derivations $\Gamma \vdash^n \Delta, \cc{\varphi}$, and $\cc{\varphi}, \Gamma' \vdash^m \Delta'$, using (Cut), we obtain a derivation $\Gamma, \Gamma' \vdash^{(\max(n,m)+1)} \Delta, \Delta'$, in this case, the cut height is $n+m$.
    In such an application of (Cut), we call $\cc{\varphi}$ the \emph{active cut} expression.
\end{definition}

\begin{theorem}\label{th:cut:elimination}
    Every use of (Cut) in the derivation of a provable sequent can be eliminated.
\end{theorem}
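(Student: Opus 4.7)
The plan is to establish cut elimination by a standard double induction: primary on the $\size$ of the active cut expression $\varphi$, secondary on the cut height of the topmost application of (Cut). Consider a topmost application, i.e., one whose two premise derivations contain no further cuts, and show it can be replaced by a derivation whose remaining cuts are strictly smaller in the lexicographic order $(\size(\varphi), \text{cut height})$. Iterating this reduction eliminates every cut. Two auxiliary admissibility results are presupposed, each provable by routine induction on derivation height: height-preserving admissibility of weakening, and admissibility of renaming nominals $j \mapsto k$ when $j$ is fresh in the derivation. The renaming lemma is what synchronizes fresh nominals across the two premises of a cut.

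The case analysis on the last rule of each premise splits into three situations. Axiom case: if either premise ends with (Ax) or ($\bot$), the cut is absorbed trivially. Non-principal case: if $\varphi$ is not the principal formula of the last rule of at least one premise, we permute the cut upward past that rule, strictly decreasing the cut height while preserving the cut-formula size, so the secondary induction hypothesis applies; when the rule has a fresh-nominal side condition---(Nom), $(\tup{\dowa}\text{L})$, $(\tup{\cmpr}\text{L})$, or (Paste)---we first apply the renaming lemma to the other premise to avoid clashes. Principal case: both premises introduce $\varphi$ with a matching pair of rules, and we reduce to cuts on strictly smaller formulas. The salient subcases are: $@_i(\varphi \to \psi)$ reduces to cuts on $@_i\varphi$ and $@_i\psi$; $@_j@_i\varphi$ reduces to a cut on $@_i\varphi$; $@_i\tup{\dowa}\varphi$, generated by $(\tup{\dowa}\text{L})$ against $(\tup{\dowa}\text{R})$, reduces after renaming the fresh nominal to a cut on $@_j\varphi$; and $@_i\tup{\alpha \cmpr \beta}$ reduces to cuts on $@_i\tup{\alpha}j$, $@_i\tup{\beta}k$, and $\tagg{\cmpr}{j}{k}$. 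The NEq case is delicate because $\size(\tagg{\neq}{i}{j})$ equals $\size(\tagg{=}{i}{j})$; however, the reduction (NEqR) versus (NEqL) yields a cut on $\tagg{=}{i}{j}$ whose cut height is strictly smaller, so the secondary induction hypothesis still applies.

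The main obstacle is the interaction of cut reduction with the nominal-manipulation machinery of $\gentzen$---in particular (Nom), $(\text{S}_1)$, $(\text{S}_2)$, $(\text{S}_3)$, $(@\text{T})$, $(@5)$, and (Paste). These rules either introduce fresh nominals with side conditions or carry their principal data into the premise as implicit contractions, and the non-principal permutation must be verified for each. In the fresh-nominal cases we rely on the renaming lemma to avoid capture; in the contraction-like cases we use height-preserving weakening to reconstruct the context. A particular subtlety arises with (Paste), whose premise combines a modal step with a fresh nominal: permuting a cut past it requires the fresh nominal to lie outside both the cut expression and the context of the other premise, which we arrange by a preparatory renaming. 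Once these permutations are in place, the principal-case reductions are essentially mechanical and the double induction closes.
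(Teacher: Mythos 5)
Your proposal is correct and follows essentially the same route as the paper: a lexicographic induction on (size of the cut formula, cut height), with an axiom case, non-principal permutations (using nominal renaming to respect freshness side conditions, which the paper relegates to a footnote), and principal reductions to cuts on strictly smaller formulas, including the same resolution of the $\tagg{\neq}{i}{j}$ versus $\tagg{=}{i}{j}$ case via the decrease in cut height. The only cosmetic difference is that you invoke height-preserving admissibility of weakening, whereas in $\gentzen$ weakening is a primitive rule (WL)/(WR) and needs no admissibility argument.
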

\begin{proof}
    The proof is by induction on two measures: the size of the active cut expression, and the cut height.
    To this end, we associate with each application of (Cut) in a derivation a pair $(k,h)$, called \emph{cut complexity},  where: $k$ corresponds to the size of the active cut expression, and $h$ corresponds to the cut height.
    The induction is on the lexicographic order of the pairs $(k,h)$.
    
    \medskip\noindent
    \textbf{Base Case.}
    The base cases of the induction involve derivations in which (Cut) is applied only once, with axiom rules applied to its premisses, i.e., the premisses of (Cut) must be instances of (Ax) or ($\bot$).
    Eliminating (Cut) in such configurations is unproblematic.
    We illustrate one representative case below.
    Suppose $\aderivation[C]$ and $\aderivation[A]$ are derivations with the following structure:
    $$\scalebox{\thescalefactor}
        {
            \prftree[r,l]{\footnotesize(Cut)}{$\aderivation[C]=$}
            {
                \prfbyaxiom{\footnotesize(Ax)}
                {\Gamma \vdash \Delta, \cc{\varphi}}
            }
            {
                \prfbyaxiom{\footnotesize($\bot$)}
                {\cc{\varphi}, \Gamma' \vdash \Delta'}
            }
            {\Gamma, \Gamma'  \vdash \Delta, \Delta'}

            \qquad
            \qquad

            \prftree[r,l]{\footnotesize($\Rho$)}{$\aderivation[A]=$}
            {}
            {\Gamma, \Gamma'  \vdash \Delta, \Delta'}
        }
    $$
    $\aderivation[C]$ represents a case in which one of the premisses of (Cut) is (Ax) and the other is ($\bot$).
    This derivation can be transformed into the cut-free derivation $\aderivation[A]$ in which: $(\Rho)$ is (Ax) if $\varphi \notin \Gamma$; and it is ($\bot$) otherwise.
    The remaining cases use a similar argument.

	\medskip\noindent
    \textbf{Inductive Step.}
    We identify an application of (Cut) that is minimal according to the cut height, and eliminate it.
    We proceed by a case analysis based on the syntactic form of the active cut, and on whether the active cut is principal in either premiss of the application of (Cut).

    \medskip
    \noindent
    {\it Non-principal Cases.}
    First, let us cover the case where the active cut is not principal in the right premiss of (Cut).
    In this case, the application of (Cut) is permuted up.

    \begin{description}
        \item [Case ($\tup{\cmpr}$R).]
        Precisely, suppose that we find a derivation
        $$\scalebox{\thescalefactor}
            {
                \prftree[r]{\footnotesize(Cut)}
                {
                    \prfassumption
                    {\Gamma \vdash^{n} {\Delta}, \cc{\varphi}}
                }
                {
                    \prftree[r]{\footnotesize(${\tup{\cmpr}}$L)}
                    {
                        \prfassumption
                        {@_i\tup{\alpha}j, @_i\tup{\beta}k, \tagg{\cmpr}{j}{k}, \varphi, {\Gamma'} \vdash^m \Delta}
                    }
                    {\cc{\varphi}, @_i\tup{\alpha \cmpr \beta}, {\Gamma'} \vdash \Delta}
                }
                {@_i\tup{\alpha \cmpr \beta}, \Gamma, {\Gamma'} \vdash {\Delta}, \Delta'}
            }
        $$

        \noindent
        and that in this derivation the use of (Cut) has a cut complexity $(\size(\varphi), n + (m + 1))$, where $n+(m+1)$ is minimal.
        W.l.g., assume that $j$ and $k$ do not appear  in $\Gamma \vdash \Delta$.%
            \footnote{If $j$ or $k$ do appear in $\Gamma \vdash \Delta$, we can simply choose different nominals, and rewrite the derivation of ${{\varphi}, @_i\tup{\alpha \cmpr \beta}, {\Gamma'} \vdash \Delta}$ using the new selection of nominals.}
        We transform this derivation into:
        $$\scalebox{\thescalefactor}
            {
                \prftree[r]{\footnotesize(${\tup{\cmpr}}$L)}
                {
                    \prftree[r]{\footnotesize(Cut)}
                    {
                        \prfassumption
                        {\Gamma \vdash^{n} {\Delta}, \cc{\varphi}}
                        \quad
                    }
                    {
                        \prfassumption
                        {\cc{\varphi}, @_i\tup{\alpha}j, @_i\tup{\beta}k, \tagg{\cmpr}{j}{k}, {\Gamma'} \vdash^m \Delta}
                    }
                    {@_i\tup{\alpha}j, @_i\tup{\beta}k, \tagg{\cmpr}{j}{k}, \Gamma, {\Gamma'} \vdash {\Delta}, \Delta'}
                }
                {@_i\tup{\alpha \cmpr \beta}, \Gamma, {\Gamma'} \vdash \Delta, {\Delta'}}
            }
        $$

        \noindent
        The (Cut) in the transformed derivation has complexity $(\size(\varphi), n+m)$, so it can be eliminated by the inductive hypothesis.

    \item[Remaining Cases.]
        The remaining cases where the active cut is not principal in the right premiss follow the same strategy.
        To see why, note that all such derivations have the following general structure:
        $$\scalebox{\thescalefactor}
            {
                \prftree[r]{\footnotesize(Cut)}
                {
                    \prfassumption
                    {\Gamma \vdash^{n} \Delta,\cc{\varphi}}
                }
                {
                    \prftree[r]{\footnotesize($\Rho$)}
                    {
                        \prfassumption
                        {{\varphi},\Phi', \Gamma' \vdash^{m} \Delta',\Sigma'}
                    }
                    {\cc{\varphi},\Phi, \Gamma' \vdash \Delta',\Sigma}
                }
                {\Phi, \Gamma, \Gamma'  \vdash \Delta, \Delta', \Sigma}
            }
        $$
        In this derivation, we are considering ($\Rho$) is a single premiss rule of $\gentzen$,
        the set $\varphi,\Gamma',\Delta'$ is the context for the rule, and
        the set $\Phi',\Phi,\Sigma',\Sigma$ are the node expressions the rule acts upon.
        Again, we assume that (Cut) has complexity $(\size(\varphi),n+(m+1))$ where $n+(m+1)$ is minimal; i.e., where (Cut) is not used in ${\Gamma \vdash^{n} \Delta, \cc{\varphi}}$, nor in ${{\varphi},\Phi', \Gamma' \vdash^{m} \Delta',\Sigma'}$.
        Modulo a possible renaming of nominals, we transform this derivation into:
        $$\scalebox{\thescalefactor}
            {
                \prftree[r]{\footnotesize($\Rho$)}
                {
                    \prftree[r]{\footnotesize(Cut)}
                    {
                        \prfassumption
                        {\Gamma \vdash^{n} \Delta, \cc{\varphi}}
                    }
                    {
                        \prfassumption
                        {\cc{\varphi},\Phi',\Gamma' \vdash^{m} \Delta',\Sigma'}
                    }
                    {\Phi', \Gamma', \Gamma \vdash \Delta, \Delta', \Sigma'}
                }
                {\Phi, \Gamma', \Gamma \vdash \Delta, \Delta', \Sigma}
            }
        $$
        The use of (Cut) in the transformed derivation has complexity $(\size(\varphi), n+m)$, using the inductive hypothesis, we obtain a cut-free derivation of ${\Phi, \Gamma', \Gamma \vdash \Delta, \Delta', \Sigma}$.
    \end{description}

    \noindent
    The cases where the active cut is not principal in the left premiss is symmetric. The ($\to$L) case---the only two-premise rule in $\gentzen$---is handled similarly, and is well known in the literature.
    
    \medskip
    \noindent
    {\it Principal Cases.}
    Let us now turn our attention to derivations where the active cut is principal in both premisses.
    We examine all combinations of rules with a possibly matching active cut.

    \begin{description}
        \item [Interaction between ($\to$L) and ($\to$R).]
        Suppose that in a derivation we encounter an application of (Cut) of minimal height of the form:
        $$\scalebox{\thescalefactor}
        {
            \prftree[r]{\footnotesize(Cut)}
                {
                    \prftree[r]{\footnotesize($\to$R)}
                    {
                        \prfassumption
                        {@_i\varphi, \Gamma \vdash^n \Delta, @_i\psi}
                    }
                    {\Gamma \vdash \Delta, \cc{@_i(\varphi \to \psi)}}
                }
                {
                    \prftree[r]{\footnotesize($\to$L)}
                    {
                        \prfassumption
                        {\Gamma' \vdash^{m_1} \Delta', @_i\varphi}
                    }
                    {
                        \prfassumption
                        {@_i\psi, \Gamma' \vdash^{m_2} \Delta'}
                    }
                    {\cc{@_i(\varphi \to \psi)}, \Gamma' \vdash \Delta'}
                }
                {\Gamma, \Gamma' \vdash \Delta,\Delta'}
        }
        $$
    
        \noindent
        We transform this derivation into:
    
        $$\scalebox{\thescalefactor}
            {
                \prftree[r]{\footnotesize(Cut$_2$)}
                {
                    \prfassumption
                    {\Gamma' \vdash^{m_1} \Delta', \cc{@_i\varphi}}
                }
                {
                    \prftree[r]{\footnotesize(Cut$_1$)}
                    {
                        \prfassumption
                        {@_i\varphi, \Gamma,\vdash^{n} \Delta, \cc{@_i\psi}}
                    }
                    {
                        \prfassumption
                        {\cc{@_i\psi},\Gamma' \vdash^{m_2} \Delta' }
                    }
                    {\cc{@_i\varphi},\Gamma,\Gamma' \vdash \Delta,\Delta'}
                }
                {\Gamma, \Gamma' \vdash \Delta,\Delta'}               
            }
        $$
        
        The original use of (Cut) has complexity $(\size(@_i(\varphi \to \psi)), (n+1)+ (\max(m_{1},m_{2})+1))$.
        In the transformed derivation, (Cut$_1$) has complexity $(\size(@_i\psi), n+m_{2})$.
        In turn, the complexity of (Cut$_2$) is such that $\size(@_i\varphi) < \size(@_i(\varphi \to \psi))$.
        Both applications of (Cut) in the transformed derivation can be eliminated by the inductive hypothesis.

        \item  [Interaction between ($\tup{\dowa}$R) and ($\tup{\dowa}$L).]
        Suppose that in a derivation we encounter an application of (Cut) of minimal height of the form:
        $$\scalebox{\thescalefactor}
            {
                \prftree[r]{\footnotesize(Cut)}
                {
                    \prftree[r]{\footnotesize($\tup{\dowa}$R)}
                    {
                        \prfassumption
                        {@_i\tup{\dowa}j, \Gamma \vdash^{n} \Delta, @_i\tup{\dowa}\varphi, @_j\varphi}
                    }
                    {@_i\tup{\dowa}j,\Gamma \vdash \Delta, \cc{@_i\tup{\dowa}\varphi}}
                }
                {
                    \prftree[r]{\footnotesize($\tup{\dowa}$L)}
                    {
                        \prfassumption
                        {@_i\tup{\dowa}j, @_j\varphi, \Gamma' \vdash^{m} \Delta'}
                    }
                    {\cc{@_i\tup{\dowa}\varphi}, \Gamma' \vdash \Delta'}
                }
                {@_i\tup{\dowa}j,\Gamma, \Gamma' \vdash \Delta,\Delta'}
            }
        $$
    
        \noindent
        We transform this derivation into:
        $$\scalebox{\thescalefactor}
            {
                \prftree[r]{\footnotesize(Cut$_2$)}
                {
                    \prftree[r]{\footnotesize(Cut$_1$)}
                    {
                        \prfassumption
                        {@_i\tup{\dowa}j, \Gamma,\vdash^{n} \Delta,@_j\varphi, \cc{@_i\tup{\dowa}\varphi}}
                    }
                    {
                        \prftree[r]{\footnotesize($\tup{\dowa}$L)}
                        {
                            \prfassumption
                            {@_i\tup{\dowa}j, @_j\varphi, \Gamma' \vdash^{m} \Delta'}
                        }
                        {\cc{@_i\tup{\dowa}\varphi}, \Gamma' \vdash \Delta'}
                    }
                    {@_i\tup{\dowa}j, \Gamma,\Gamma' \vdash \Delta,\Delta', \cc{@_j\varphi}}
                }
                {
                    \prfassumption
                    {\cc{@_j\varphi},@_i\tup{\dowa}j, \Gamma' \vdash^{m} \Delta'}
                }
                {@_i\tup{\dowa}j,\Gamma, \Gamma' \vdash \Delta,\Delta'}
            }
        $$ 
        \noindent
        The original use of (Cut) has complexity $(\size(@_i\tup{\dowa}\varphi), (n+1)+(m+1))$.
        In the transformed derivation, (Cut$_1$) has complexity $(\size(@_i\tup{\dowa}\varphi), n+(m+1))$.
        In turn, the complexity of (Cut$_2$) is such that $\size(@_j\varphi) < \size(@_i\tup{\dowa}\varphi)$.
        Both applications of (Cut) in the transformed derivation can be eliminated by the inductive hypothesis.

        \item [Interaction between ($\tup{\dowa}$R) and ($\tup{\cmpr}$R).]
        Suppose that in a derivation we encounter an application of (Cut) of minimal height of the form:
        $$\scalebox{\thescalefactor}
            {
                \prftree[r]{\footnotesize(Cut)}
                {
                    \prftree[r]{\footnotesize(${\tup{\dowa}}$R)}
                    {
                        \prfassumption
                        {@_i\tup{\dowa}j, \Gamma \vdash^{n} \Delta, @_i\tup{\dowa}a, @_ja}
                    }
                    {@_i\tup{\dowa}j, \Gamma \vdash \Delta, \cc{@_i\tup{\dowa}a}}
                    \hspace{-2pt}
                }
                {
                    \prftree[r]{\footnotesize(${\tup{\cmpr}}$R)}
                    {
                        \prfassumption
                        {@_i\tup{\dowa}a, @_i\tup{\beta}b, \Gamma' \vdash^m \Delta', @_i\tup{\dowa \cmpr \beta},\tagg{\cmpr}{a}{b}}
                    }
                    {\cc{@_i\tup{\dowa}a}, @_i\tup{\beta}b, \Gamma' \vdash \Delta', @_i\tup{\dowa \cmpr \beta}}
                }
                {@_i\tup{\dowa}j, @_i\tup{\beta}b, \Gamma, \Gamma' \vdash \Delta, \Delta', @_i\tup{\dowa \cmpr \beta}}
            }
        $$
    
        \noindent
        We transform this derivation into:
    
        \begin{flushleft}
            ~~
            \scalebox{\thescalefactor}
            {
                 $\aderivation = {@_i\tup{\dowa}j, \Gamma \vdash^{n} \Delta, @_ja, \cc{@_i\tup{\dowa}a}}$
            }
        \end{flushleft}
        \vspace{-1cm}
        \begin{center}\scalebox{\thescalefactor}
            {
                \prftree[r]{\footnotesize(Cut$_2$)}
                {
                    \prftree[r]{\footnotesize(Cut$_1$)}
                    {
                        ~~
                        \prfassumption
                        {\aderivation}
                        \qquad
                    }
                    {
                        \prftree[r]{\footnotesize(${\tup{\cmpr}}$R)}
                        {
                            \prfassumption
                            {@_i\tup{\dowa}a, @_i\tup{\beta}b, \Gamma' \vdash^m \Delta', @_i\tup{\dowa \cmpr \beta},\tagg{\cmpr}{a}{b}}
                        }
                        {\cc{@_i\tup{\dowa}a}, @_i\tup{\beta}b, \Gamma' \vdash \Delta', @_i\tup{\dowa \cmpr \beta}}
                    }
                    {@_i\tup{\dowa}j, @_i\tup{\beta}b, \Gamma, \Gamma' \vdash \Delta, \Delta', @_i\tup{\dowa \cmpr \beta}, \cc{@_ja}}
                }
                {
                    \prftree[r]{\footnotesize($\text{S}_2$)}
                    {
                        \prftree[r]{\footnotesize(WL)}
                        {
                            \prftree[r]{\footnotesize(WL)}
                            {
                                \prftree[r]{\footnotesize(${\tup{\cmpr}}$R)}
                                {
                                    \prfassumption
                                    {@_i\tup{\dowa}a, @_i\tup{\beta}b, \Gamma' \vdash^m \Delta', @_i\tup{\dowa \cmpr \beta},\tagg{\cmpr}{a}{b}}
                                }
                                {{@_i\tup{\dowa}a}, @_i\tup{\beta}b, \Gamma' \vdash \Delta', @_i\tup{\dowa \cmpr \beta}}
                            }
                            {@_i\tup{\dowa}j, @_i\tup{\dowa}a, @_i\tup{\beta}b, \Gamma' \vdash \Delta', @_i\tup{\dowa \cmpr \beta}}
                        }
                        {@_ja, @_i\tup{\dowa}j, @_i\tup{\dowa}a, @_i\tup{\beta}b, \Gamma' \vdash \Delta', @_i\tup{\dowa \cmpr \beta}}
                    }
                    {\cc{@_ja}, @_i\tup{\dowa}j, @_i\tup{\beta}b, \Gamma' \vdash \Delta', @_i\tup{\dowa \cmpr \beta}}
                }
                {@_i\tup{\dowa}j, @_i\tup{\beta}b, \Gamma, \Gamma' \vdash \Delta, \Delta', @_i\tup{\dowa \cmpr \beta}}
            }
        \end{center}
    
        \noindent
        The original use of (Cut) has complexity $(\size(@_i\tup{\dowa}a), (n+1)+(m+1))$.  
        In the transformed derivation, (Cut$_1$) has complexity $(\size(@_i\tup{\dowa}a), n+(m+1))$.
        In turn, the complexity of (Cut$_2$) is such that $\size(@_ja) < \size(@_i\tup{\dowa}a)$.
        Both applications of (Cut) in the transformed derivation can be eliminated by the inductive hypothesis.
        
        \item [Interaction between ($@$L) and ($@$R).]
        Suppose that in a derivation we encounter an application of (Cut) of minimal height of the form:
        $$\scalebox{\thescalefactor}
            {
                \prftree[r]{\footnotesize(Cut)}
                {
                    \prftree[r]{\footnotesize($@$R)}
                    {
                        \prfassumption
                        {\Gamma \vdash^{n} \Delta,@_i\varphi}
                    }
                    {\Gamma \vdash \Delta, \cc{@_j@_i\varphi}}
                }
                {
                    \prftree[r]{\footnotesize($@$L)}
                    {
                        \prfassumption
                        {@_i\varphi, \Gamma' \vdash^{m} \Delta'}
                    }
                    {\cc{@_j@_i\varphi},\Gamma' \vdash \Delta'}
                }
                {\Gamma, \Gamma' \vdash \Delta, \Delta'}
            }
        $$
        
        \noindent
        We transform this derivation into:
    
        $$\scalebox{\thescalefactor}
            {
                \prftree[r]{\footnotesize(Cut)}
                {
                    \prfassumption
                    {\Gamma \vdash^{n} \Delta,\cc{@_i\varphi}}
                }
                {
                    \prfassumption
                    {\cc{@_i\varphi}, \Gamma' \vdash^{m} \Delta'} 
                }
                {\Gamma, \Gamma' \vdash \Delta, \Delta'}
            }
        $$
    
        \noindent
        The original use of (Cut) has complexity $(\size(@_j@_i\varphi), (n+1)+(m+1))$.
        In the transformed derivation, the new application of (Cut) has complexity $(\size(@_i\varphi), n+m)$. 
        And can thus be eliminated by the inductive hypothesis.

        \item [Interaction between ($\tup{\cmpr}$R) and ($\tup{\cmpr}$L).]
        Suppose that in a derivation we encounter an application of (Cut) of minimal height of the form:
        $$\scalebox{\thescalefactor}
            {                   
                \prftree[r]{\footnotesize(Cut)}
                {
                    \prftree[r]{\footnotesize(${\tup{\cmpr}}$R)}
                    {
                        \prfassumption
                        {@_i\tup{\alpha}j, @_i\tup{\beta}k, \Gamma \vdash^{n} \Delta, @_i\tup{\alpha \cmpr \beta},\tagg{\cmpr}{j}{k}}
                    }
                    {@_i\tup{\alpha}j, @_i\tup{\beta}k, \Gamma \vdash \Delta, \cc{@_i\tup{\alpha \cmpr \beta}}}
                }
                {
                    \prftree[r]{\footnotesize(${\tup{\cmpr}}$L)}
                    {@_i\tup{\alpha}j, @_i\tup{\beta}k, \tagg{\cmpr}{j}{k}, \Gamma' \vdash^m \Delta'}
                    {\cc{@_i\tup{\alpha \cmpr \beta}}, \Gamma' \vdash \Delta'}
                }
                {@_i\tup{\alpha}j, @_i\tup{\beta}k, \Gamma, \Gamma' \vdash \Delta, \Delta'}
            }
        $$
        
        \noindent
        We transform this derivation into:
        $$\scalebox{\thescalefactor}
            {
                \prftree[r]{\footnotesize(Cut$_2$)}
                {
                    \prftree[r]{\footnotesize(Cut$_1$)}
                    {
                        \prfassumption
                        {@_i\tup{\alpha}j, @_i\tup{\beta}k, \Gamma \vdash^{n} \Delta, \tagg{\cmpr}{j}{k}, \cc{@_i\tup{\alpha \cmpr \beta}}}
                    }
                    {
                        \hspace*{-.3cm}
                        \prftree[r]{\footnotesize(${\tup{\cmpr}}$L)}
                        {@_i\tup{\alpha}j, @_i\tup{\beta}k, \tagg{\cmpr}{j}{k}, \Gamma' \vdash^m \Delta'}
                        {\cc{@_i\tup{\alpha \cmpr \beta}}, \Gamma' \vdash \Delta'}
                    }
                    {@_i\tup{\alpha}j, @_i\tup{\beta}k, \Gamma, \Gamma' \vdash \Delta, \Delta', \cc{\tagg{\cmpr}{j}{k}}}
                }
                {
                    \hspace*{-1.3cm}
                    \prfassumption
                    {\cc{\tagg{\cmpr}{j}{k}}, @_i\tup{\alpha}j, @_i\tup{\beta}j, \Gamma' \vdash^m \Delta'}
                }
                {@_i\tup{\alpha}j, @_i\tup{\beta}k, \Gamma, \Gamma' \vdash \Delta, \Delta'}
            }
        $$
        
        \noindent
        The original use of (Cut) has complexity $(\size(@_i\tup{\alpha \cmpr \beta}), (n+1)+(m+1))$.
        In the transformed derivation, (Cut$_1$) has complexity $(\size(@_i\tup{\alpha \cmpr \beta}), n+(m+1))$.
        In turn, the complexity of (Cut$_2$) is such that $\size(\tagg{\cmpr}{j}{k}) < \size(@_i\tup{\alpha \cmpr \beta})$.
        Both applications of (Cut) in the transformed derivation can be eliminated by the inductive hypothesis.

        \item [Interaction between (NEqL) and (NEqR).] 
        Suppose that in a derivation we encounter an application of (Cut) of minimal height of the form:
        $$\scalebox{\thescalefactor}
            {
                \prftree[r]{\footnotesize(Cut)}
                {
                    \prftree[r]{\footnotesize(NEqR)}
                    {
                        \prfassumption
                        {\tagg{=}{i}{j},\Gamma \vdash^{n} \Delta}
                    }
                    {\Gamma \vdash \Delta, \cc{\tagg{\neq}{i}{j}}}
                }
                {
                    \prftree[r]{\footnotesize(NEqL)}
                    {
                        \prfassumption
                        {\Gamma' \vdash^{m} \Delta',\tagg{=}{i}{j}}
                    }
                    {\cc{\tagg{\neq}{i}{j}},\Gamma' \vdash \Delta'}
                }
                {\Gamma, \Gamma' \vdash \Delta, \Delta'}
            }
        $$
    
        \noindent
        We transform this derivation into:
    
        $$\scalebox{\thescalefactor}
            {
                \prftree[r]{\footnotesize(Cut)}
                {
                    \prfassumption
                    {\Gamma' \vdash^{m} \Delta',\cc{\tagg{=}{i}{j}}}
                }
                {
                    \prfassumption
                    {\cc{\tagg{=}{i}{j}}, \Gamma \vdash^{n} \Delta}
                }
                {\Gamma, \Gamma' \vdash \Delta, \Delta'}
            }
        $$

        \noindent
        The original use of (Cut) has complexity $(\size(\tagg{\neq}{i}{j}), (n+1)+(m+1))$.
        The use of (Cut) in the transformed derivation has complexity $(\size(\tagg{=}{i}{j}), n+m)$.
        This second use can be eliminated applying the inductive hypothesis.

        \item [Interaction between ($\tup{\dowa}$R) and (S$_1$).] 
        Suppose that in a derivation we encounter an application of (Cut) of minimal height of the form:
        $$\scalebox{\thescalefactor}
            {
                \prftree[r]{\footnotesize(Cut)}
                {
                    \prftree[r]{\footnotesize(${\tup{\dowa}}$R)}
                    {
                        \prfassumption
                        {@_i\tup{\dowa}j, \Gamma \vdash^{n} \Delta, @_i\tup{\dowa}a, @_ja}
                    }
                    {@_i\tup{\dowa}j, \Gamma \vdash \Delta, \cc{@_i\tup{\dowa}a}}
                    \qquad
                }
                {
                    \prftree[r]{\footnotesize($\text{S}_1$)}
                    {
                        \prfassumption
                        {@_k\tup{\dowa}a, @_i\tup{\dowa}a, @_ik, \Gamma' \vdash^{m} \Delta'}
                    }
                    {\cc{@_i\tup{\dowa}a}, @_ik, \Gamma' \vdash \Delta'}
                }
                {@_i\tup{\dowa}j, @_ik, \Gamma, \Gamma' \vdash \Delta, \Delta'}
            }
        $$
        
        \noindent
        We transform this derivation into:

        $$\scalebox{\thescalefactor}
            {
                \prftree[r]{\footnotesize(Cut$_2$)}
                {
                    \prftree[r]{\footnotesize(Cut$_1$)}
                    {
                        \prfassumption
                        {@_i\tup{\dowa}j, \Gamma \vdash^{n} \Delta, @_ja, \cc{@_i\tup{\dowa}a}}
                    }
                    {
                        \prftree[r]{\footnotesize($\text{S}_1$)}
                        {
                            \prfassumption
                            {@_k\tup{\dowa}a, @_i\tup{\dowa}a, @_ik, \Gamma' \vdash^{m} \Delta'}
                        }
                        {\cc{@_i\tup{\dowa}a}, @_ik, \Gamma' \vdash \Delta'}
                    }
                    {@_i\tup{\dowa}j, @_ik, \Gamma, \Gamma' \vdash \Delta, \Delta', \cc{@_ja}}
                }
                {
                   \prftree[r]{\footnotesize(S$_2$)}
                    {
                        \prftree[r]{\footnotesize(WL)}
                        {
                            \prftree[r]{\footnotesize(WL)}
                            {
                                \prfassumption
                                {@_i\tup{\dowa}a, @_ik, \Gamma' \vdash^{(m{+}1)} \Delta'}
                            }
                            {@_i\tup{\dowa}j, @_i\tup{\dowa}a, @_ik, \Gamma' \vdash \Delta'}
                        }
                        {@_ja, @_i\tup{\dowa}j, @_i\tup{\dowa}a, @_ik, \Gamma' \vdash \Delta'}
                    }
                    {\cc{@_ja}, @_i\tup{\dowa}j, @_ik, \Gamma' \vdash \Delta'}
                }
                {@_i\tup{\dowa}j, @_ik, \Gamma, \Gamma' \vdash \Delta, \Delta'}
            }
        $$
        
        \noindent
        The original use of (Cut) has complexity $(\size(@_i\tup{\dowa}a), (n+1)+(m+1))$.  
        In the transformed derivation, (Cut$_1$) has complexity $(\size(@_i\tup{\dowa}a), n+(m+1))$.
        In turn, the complexity of (Cut$_2$) is such that $\size(@_ja) < \size(@_i\tup{\dowa}a)$.
        Both applications of (Cut) in the transformed derivation can be eliminated using the inductive hypothesis.
        
        \item [Interaction between ($\tup{\dowa}$R) and (S$_2$).] 
        Suppose that in a derivation we encounter an application of (Cut) of minimal height of the form:
    
        $$\scalebox{\thescalefactor}
            {
                \prftree[r]{\footnotesize(Cut)}
                {
                    \prftree[r]{\footnotesize($\tup{\dowa}$R)}
                    {
                        \prfassumption
                        {@_i\tup{\dowa}j,\Gamma \vdash^{n} \Delta,@_i\tup{\dowa}a, @_ja}
                    }
                    {@_i\tup{\dowa}j,\Gamma \vdash \Delta,\cc{@_i\tup{\dowa}a}}
                }
                {
                    \prftree[r]{\footnotesize(S$_2$)}
                    {
                        \prfassumption
                        {@_i\tup{\dowa}k, @_i\tup{\dowa}a, @_ak, \Gamma' \vdash^{m} \Delta'}
                    }
                    {\cc{@_i\tup{\dowa}a},@_ak, \Gamma' \vdash \Delta'}
                }
                {@_ak,@_i\tup{\dowa}j,\Gamma, \Gamma' \vdash \Delta, \Delta'}
            }
        $$
        
        \noindent
        We transform this derivation into:

        $$\scalebox{\thescalefactor}
            {
                \prftree[r]{\footnotesize(Cut$_2$)}
                {
                    \prftree[r]{\footnotesize(Cut$_1$)}
                    {
                        \prfassumption
                        {@_i\tup{\dowa}j,\Gamma \vdash^{n} \Delta,@_ja, \cc{@_i\tup{\dowa}a}}
                    }
                    {
                        \prfassumption
                        {\cc{@_i\tup{\dowa}a},@_ak, \Gamma' \vdash^{(m+1)} \Delta'}
                    }
                    {@_ak,@_i\tup{\dowa}j,\Gamma, \Gamma' \vdash \Delta, \Delta', \cc{@_ja}}
                }
                {
                    \prftree[r]{\footnotesize(S$_2$)}
                    {
                        \prftree[r]{\footnotesize(S$_2$,WL)}
                        {
                            \prfassumption
                            {@_i\tup{\dowa}a, @_i\tup{\dowa}k, @_ak,\Gamma' \vdash^{m} \Delta'}
                        }
                        {@_i\tup{\dowa}a, @_ja, @_ak, @_i\tup{\dowa}j, \Gamma' \vdash \Delta'}
                    }
                    {\cc{@_ja}, @_ak, @_i\tup{\dowa}j, \Gamma' \vdash \Delta'}
                }
                {@_ak,@_i\tup{\dowa}j,\Gamma, \Gamma' \vdash \Delta, \Delta'}
            }
        $$

        \noindent
        The original use of (Cut) has complexity $(\size(@_i\tup{\dowa}a), (n+1)+(m+1))$.  
        In the transformed derivation, (Cut$_1$) has complexity $(\size(@_i\tup{\dowa}a), n+(m+1))$.
        In turn, the complexity of (Cut$_2$) is such that $\size(@_ja) < \size(@_i\tup{\dowa}a)$.
        Both uses of (Cut) can be eliminated using the inductive hypothesis.
    \end{description}
\end{proof}

\bibliographystyle{eptcs}
\bibliography{biblio}




\end{document}